\newtheorem{theorem}{Theorem}
\newtheorem{remark}[theorem]{Remark}
\newtheorem{proposition}[theorem]{Proposition}
\newtheorem{corollary}[theorem]{Corollary}
\newtheorem{problem}{Problem}
\newcommand{\Tr}{\mathrm{Tr}}
\newcommand{\C}{\mathbb{C}}
\newcommand{\R}{\mathbb{R}}
\newcommand{\ler}[1]{\left( #1 \right)}
\newcommand{\lesq}[1]{\left[ #1 \right]}
\newcommand{\lers}[1]{\left\{ #1 \right\}}
\newcommand{\hohc}{\cH \otimes \cH^*}
\newcommand{\abs}[1]{\left| #1 \right|}
\newcommand{\norm}[1]{\left|\left|#1\right|\right|}
\newcommand{\bra}[1]{\langle #1 |}
\newcommand{\ket}[1]{| #1 \rangle}
\newcommand{\Bra}[1]{\langle\langle #1 ||}
\newcommand{\Ket}[1]{|| #1 \rangle\rangle}
\newcommand{\be}{\begin{equation}}
\newcommand{\ee}{\end{equation}}
\newcommand{\ba}{\begin{array}}
\newcommand{\ea}{\end{array}}
\newcommand{\cH}{\mathcal{H}}
\newcommand{\cB}{\mathcal{B}}
\newcommand{\cP}{\mathcal{P}}
\newcommand{\cS}{\mathcal{S}}
\newcommand{\cC}{\mathcal{C}}
\newcommand{\cT}{\mathcal{T}}
\newcommand{\cW}{\mathcal{W}}
\newcommand{\cX}{\mathcal{X}}
\newcommand{\cL}{\mathcal{L}}
\newcommand{\tr}{\mathrm{tr}}
\newcommand{\dd}{\mathrm{d}}
\newcommand{\cA}{\mathcal{A}}
\newcommand\lh{\cL(\cH)}
\newcommand{\sh}{\cS\ler{\cH}}
\title[Kantorovich duality and optimal couplings of qubits]{Strong Kantorovich duality for quantum optimal transport with generic cost and optimal couplings on quantum bits}
\author[Gergely Bunth]{Gergely Bunth}
\address{Gergely Bunth, HUN-REN Alfr\'ed R\'enyi Institute of Mathematics\\ Re\'altanoda u. 13-15.\\Budapest H-1053\\ Hungary\\ and Department of Analysis and Operations Research, Institute of Mathematics, Budapest University of Technology and Economics\\ M\H{u}egyetem rkp. 3. \\ Budapest H-1111 \\ Hungary}
\email{bunth.gergely@renyi.hu}
\author[J\'ozsef Pitrik]{J\'ozsef Pitrik}
\address{J\'ozsef Pitrik, HUN-REN Wigner Research Centre for Physics\\ Budapest H-1525, Hungary\\ and HUN-REN Alfr\'ed R\'enyi Institute of Mathematics\\ Re\'altanoda u. 13-15.\\ Budapest H-1053\\ Hungary\\ and Department of Analysis and Operations Research, Institute of Mathematics \\Budapest University of Technology and Economics\\ M\H{u}egyetem rkp. 3. \\ Budapest H-1111\\ Hungary}
\email{pitrik.jozsef@renyi.hu}
\author[Tam\'as Titkos]{Tam\'as Titkos}
\address{Tam\'as Titkos, Corvinus University of Budapest\\ Department of Mathematics\\ Fővám tér 13-15.\\ Budapest H-1093\\Hungary\\ 
and \\ HUN-REN Alfr\'ed R\'enyi Institute of Mathematics\\ Re\'altanoda u. 13-15.\\ Budapest H-1053\\Hungary}
\email{tamas.titkos@uni-corvinus.hu}
\author[D\'aniel Virosztek]{D\'aniel Virosztek}
\address{D\'aniel Virosztek, HUN-REN Alfr\'ed R\'enyi Institute of Mathematics\\ Re\'altanoda u. 13-15.\\Budapest H-1053\\ Hungary}
\email{virosztek.daniel@renyi.hu}
\date{}
\subjclass[2020]{Primary: 49Q22; 81P16. Secondary: 81Q10.}
\keywords{quantum optimal transport, generic cost, quantum channels}
\thanks{Bunth was supported by the Momentum Program of the Hungarian Academy of Sciences (grant no. LP2021-15/2021); Pitrik was supported by the “Frontline” Research Excellence Programme of the Hungarian National Research, Development and Innovation Office - NKFIH (grant no. KKP133827) and by the Momentum Program of the Hungarian Academy of Sciences (grant no. LP2021-15/2021);T. Titkos is supported by the Hungarian National Research, Development and Innovation Office (NKFIH) under grant agreements no. K134944 and no. Excellence\_151232, and by the Momentum program of the Hungarian Academy of Sciences under grant agreement no. LP2021-15/2021. D. Virosztek is supported by the Momentum program of the Hungarian Academy of Sciences under grant agreement no. LP2021-15/2021, by the Hungarian National Research, Development and Innovation Office (NKFIH) under grant agreement no. Excellence\_151232, and partially supported by the ERC Synergy Grant No. 810115.}
\begin{document}

\begin{abstract}
We prove Kantorovich duality for a linearized version of a recently proposed non-quadratic quantum optimal transport problem, where quantum channels realize the transport. As an application, we determine optimal solutions of both the primal and the dual problem using this duality in the case of quantum bits and distinguished cost operators, with certain restrictions on the states involved. Finally, keeping the same restrictions regarding the states involved, we use this information on optimal solutions to give an analytical proof of the triangle inequality even for the square of the induced quantum Wasserstein divergences.
\end{abstract}

\maketitle

\section{Introduction}

\subsection{Motivation and main result}

Although Gaspard Monge formulated the first version of the optimal transport problem already at the end of the 18th century \cite{monge1781memoire}, the theory of optimal transportation became a vital part of mathematical analysis only in the 20th century, when major advances were obtained by Leonid Kantorovich in the 1940s \cite{kantorovich1942-transloc,kantorovich1948-monge}, and the breakthrough result of Yann Brenier on the structure of optimal transport maps \cite{Brenier-polar-fra, Brenier-polar-en} induced intense research activity on the topic by various authors working in analysis and mathematical physics.
\par
Techniques relying on the theory of optimal transport and using desirable properties of the induced Wasserstein distances on probability measures played a key role in significant advancements in several areas of mathematics, such as probability theory \cite{bgl,Butkovsky}, the study of physical evolution equations \cite{JKO-97a,JKO-97b,JKO-98} and stochastic partial differential equations \cite{Hairer2,Hairer3}, variational analysis \cite{FigalliMaggi1,FigalliMaggi2}, and the geometry of metric measure spaces \cite{LottVillani,Sturm4,Sturm5,Sturm6}.
We refer to the monographs \cite{Ambrosiobook, Figalli-book, Villani1, Villani2} for a detailed overview of the field.
\par
Beyond their theoretical importance, transport-related metrics and optimal transport techniques have found their place in a large variety of disciplines outside mathematics, such as economics \cite{Galichon}, finance, and biology \cite{SCHIEBINGER}, and they also became popular and found their applications in applied sciences like biomedical image processing \cite{bia1,bia2,images1}, data analysis and classification \cite{dia1,imageprocessing1}, or machine learning \cite{MC1, SK1,PC,m2,MachineLearning3}.

\par
Recent decades have seen also several non-commutative (or quantum) versions of the optimal transport problem and induced Wasserstein distances. In the early 1990s, relying on duality phenomena, Connes and Lott proposed a spectral distance in the framework of non-commutative geometry \cite{Connes-Lott}. A few years later, S{\l}omczy\'{n}ski and \.Zyczkowski defined a distance on quantum states by the classical Wasserstein distance of their Husumi transforms \cite{ZyczkowskiSlomczynski1,ZyczkowskiSlomczynski2}, and a free probability approach was proposed by Biane and Voiculescu in 2001 \cite{BianeVoiculescu} --- see also the works of Shlyakhtenko \cite{Shlyakhtenko-free-Wass,Shlyakhtenko-free-monotone-transport,Shlyakhtenko-free-transport} on the topic. Carlen and Maas laid down the foundations of a dynamical theory \cite{CarlenMaas-1,CarlenMaas-2,CarlenMaas-3, CarlenMaas-4} relying on the classical Benamou-Brenier formula and Jordan-Kinderlehrer-Otto theory, and this work has been continued by Datta, Rouz\'e \cite{DattaRouze1, DattaRouze2}, and Wirth \cite{Wirth-dual}, among others. Caglioti, Golse, Mouhot, and Paul worked out a quantum optimal transport concept based on quantum couplings \cite{CagliotiGolsePaul, CagliotiGolsePaul-towardsqot, GolseMouhotPaul, GolsePaul-Schrodinger, GolsePaul-wavepackets, GolsePaul-meanfieldlimit, GolseTPaul-pseudometrics,GolsePaul-OTapproach}, while De Palma and Trevisan established a similar, yet different, concept based on quantum channels \cite{DPT-AHP, DPT-lecture-notes}. The concept of Friedland, Eckstein, Cole and Życzkowski \cite{Friedland-Eckstein-Cole-Zyczkowski-MK, Cole-Eckstein-Friedland-Zyczkowski-QOT, BistronEcksteinZyczkowski} is also based on couplings, but with strikingly different cost operators. Duvenhage used modular couplings to define quantum Wasserstein distances \cite{Duvenhage1, Duvenhage-quad-Wass-vNA, Duvenhage3, Duvenhage-ext-quantum-det-bal}, and separable quantum Wasserstein distances have also been introduced and studied \cite{TothPitrik, toth-pitrik-2025, beatty-franca}. A substantial part of the above mentioned current approaches to non-commutative optimal transport is covered by the book \cite{OTQS-book}, and the reader is advised to consult the survey papers \cite{beatty2025wasserstein} and \cite{Trevisan-review-QOT} as well. 

\par 

In this paper, we take the quantum optimal transport concept developed by De Palma and Trevisan \cite{DPT-AHP, DPT-lecture-notes} as starting point, and consider a non-quadratic generalization of the transport problem introduced there, which we proposed recently in \cite{BPTV-p-Wass}. We consider a linear relaxation of this latter transport problem and prove strong Kantorovich duality for it with an appropriate dual problem. When proving the duality, we will follow the approach of Caglioti, Golse, and Paul \cite{CagliotiGolsePaul-towardsqot, GolseTPaul-pseudometrics} (see also \cite[Section 4.3]{Golse-lecture-notes}), which partially relies on ideas from the proof of the classical Kantorovich duality (see, e.g., \cite[Theorem 1.3]{Villani1}).
As an application, we determine optimal solutions of both the primal and the dual problem using this duality in the case of quantum bits and distinguished cost operators, with certain restrictions on the states involved. Finally, keeping the same restrictions regarding the states involved, we use this information on optimal solutions to give an analytical proof of the triangle inequality even for the square of the induced quantum Wasserstein divergences.

\subsection{Basic notions, notation}

Let us recall now those elements of the mathematical formalism of quantum mechanics that we will use throughout this paper. Let $\cH$ be a separable complex Hilbert space. In the sequel, we denote by $\lh^{sa}$ the set of self-adjoint but not necessarily bounded operators on $\cH$, and $\cS(\cH)$ stands for the set of states, that is, the set of positive trace-class operators on $\cH$ with unit trace. The space of all bounded operators on $\cH$ is denoted by $\cB(\cH),$ and we recall that the collection of trace-class operators on $\cH$ is denoted by $\mathcal{T}_1(\cH)$ and defined by $\mathcal{T}_1(\cH)= \lers{X \in \cB(\cH) \, \middle| \, \tr_{\cH}[\sqrt{X^*X}] < \infty}.$ Similarly, $\mathcal{T}_2(\cH)$ stands for the set of Hilbert-Schmidt operators defined by $\mathcal{T}_2(\cH)= \lers{X \in \cB(\cH) \, \middle| \, \tr_{\cH}[X^*X] < \infty}.$ A \emph{quantum channel} is a completely positive and trace preserving (CPTP) linear map on $\cT_1(\cH).$ The \emph{transpose} $A^T$ of a linear operator $A$ acting on a Hilbert space $\cH$ is a linear operator on the dual space $\cH^*$ defined by the identity $(A^T \eta) (\varphi) \equiv \eta (A \varphi)$ where $\eta \in \cH^*$ and $\varphi \in \cH.$
\par
We briefly recall also the classical optimal transport problem. If $\mu$ and $\nu$ are Borel probability measures on a complete and separable metric space $(\cX,d)$ representing the capacity of production and intensity of consumption of the goods to be transported, respectively, and $c: \cX \times \cX \rightarrow \R$ is a non-negative lower semicontinuous function representing the transport cost in the sense that $c(x,y)$ is the cost of transporting one unit of goods from $x$ to $y,$ then finding the optimal (that is, cheapest) transport plan is mathematically formalized as follows:
\be \label{eq:cl-ot-prob}
\text{minimize } \pi \mapsto \iint_{\cX \times \cX} c(x,y) \dd \pi(x,y)
\ee
where $\pi$ runs over all possible couplings of $\mu$ and $\nu.$ A measure $\pi \in \mathrm{Prob}(\cX \times \cX)$ is called a coupling of $\mu$ and $\nu$ (in notation: $\pi \in \cC(\mu,\nu)$) if the marginals of $\pi$ are $\mu$ and $\nu,$ that is,  
$\iint_{\cX \times \cX} f(x) \dd \pi(x,y)=\int_{\cX} f(x) \dd \mu(x)$ and $\iint_{\cX \times \cX} g(y) \dd \pi(x,y)=\int_{\cX} g(y) \dd \nu(y)$ for all continuous and bounded functions $f,g \in C_b(\cX).$ A consequence of the tightness (that is, sequential compactness in the weak topology) of $\cC(\mu, \nu)$ and the lower-semicontinuity of $c$ is that there is a coupling (in other words: transport plan) $\pi_0 \in \cC(\mu, \nu)$ that minimizes \eqref{eq:cl-ot-prob}, see, e.g., \cite[Thm. 4.1.]{Villani2}. If the cost function is the power of order $p$ of the distance, that is, $c(x,y)=d(x,y)^p,$ then optimal transport plans determine a genuine distance called $p$-Wasserstein distance and denoted by $d_{\cW_p}$ on probability measures:
\begin{align} \label{eq:classical-p-Wass-def}
d_{\cW_p}(\mu,\nu)=\ler{\inf_{\pi \in \cC(\mu,\nu)} \lers{\iint_{\cX \times \cX} d^p(x,y) \dd \pi(x,y)}}^{\frac{1}{p}}.    
\end{align}

An influential work of De Palma and Trevisan introduced a quantum mechanical counterpart of the classical optimal transport problem with quadratic cost, and also quadratic Wasserstein distances induced by optimal solutions of these transport problems \cite{DPT-AHP}. A key idea of this quantum optimal transport concept is that the transport between quantum states is realized by quantum channels \cite{DPT-AHP,DPT-lecture-notes}. A brief summary of their approach reads as follows. The inputs of the transport problem are the initial and final states $\rho, \omega \in \sh,$ where $\cH$ is a separable Hilbert space, and a finite collection of observable quantities $\cA=\lers{A_1, \dots, A_K}$ where $A_k \in \cL(\cH)^{sa}$ for all $k.$ The transport plans between $\rho$ and $\omega$ are quantum channels $\Phi: \cT_1\ler{\mathrm{supp}(\rho)} \to \cT_1(\cH)$ sending $\rho$ to $\omega,$ and a transport plan $\Phi$ gives rise to the quantum coupling $\Pi_{\Phi}$ the following way: 

\begin{align} \label{eq:Pi-phi-def}
    \Pi_{\Phi}=\ler{\Phi \otimes \mathrm{id}_{\mathcal{T}_1\ler{\cH^*}}} \ler{\Ket{\sqrt{\rho}}\Bra{\sqrt{\rho}}},
\end{align}
where $\Ket{\sqrt{\rho}}\Bra{\sqrt{\rho}} \in \cS\ler{\hohc}$ is the canonical purification \cite{Holevo} of the state $\rho \in \sh.$  Here, and in the sequel, we use the canonical linear isomorphism between $\mathcal{T}_2(\cH)$ and $\hohc$ which is the linear extension of the map 
\begin{align} \label{eq:canon-isom}
    \psi \otimes \eta \mapsto \ket{\psi}\circ \eta \qquad \ler{\psi \in \cH, \, \eta \in \cH^* }. 
\end{align}
Accordingly, for an $X \in \mathcal{T}_2(\cH),$ the symbol $\Ket{X}$ denotes the map $\C \ni z \mapsto z X \in \mathcal{T}_2(\cH) \simeq \hohc,$ while $\Bra{X}$ stands for the map $\cT_2(\cH) \ni Y \mapsto \tr_{\cH}\lesq{X^* Y},$ where $X^*$ is the adjoint of $X.$ It is easy to check that $\Pi_{\Phi}$ defined in \eqref{eq:Pi-phi-def} is a state on $\hohc$ such that its first marginal is $\omega$ while the second marginal is $\rho^T,$ that is,
\begin{align}
    \tr_{\cH^*}\lesq{\Pi_{\Phi}}=\omega \text{ and } \tr_{\cH}\lesq{\Pi_{\Phi}}=\rho^T.
\end{align}
Therefore, the set of all quantum couplings of the states $\rho,\omega \in \sh$ (denoted by $\cC(\rho, \omega)$) was defined in \cite{DPT-AHP} by
\be \label{eq:q-coup-def}
\cC\ler{\rho, \omega}=\lers{\Pi \in \cS\ler{\cH \otimes \cH^*} \, \middle| \, \tr_{\cH^*} [\Pi]=\omega, \,  \tr_{\cH} [\Pi]=\rho^T}.
\ee
In other words, and this rephrasing will prove useful when formalizing the dual transport problems, a coupling of $\rho$ and $\omega$ is a state $\Pi$ on $\hohc$ such that 
\be \label{eq:part-trace-def}
\tr_{\hohc}[\ler{A\otimes I_{\cH}^T} \Pi]=\tr_{\cH} [\omega A] \text{ and }
\tr_{\hohc}\lesq{\ler{I_{\cH} \otimes B^{T}} \Pi}=\tr_{\cH^*} [\rho^T B^T]=\tr_{\cH} [\rho B]
\ee
for all bounded $A, B \in \cL(\cH)^{sa}.$ The analogy of the above definition of quantum couplings with the classical notion of couplings recapped below equation \eqref{eq:cl-ot-prob} is clear, and we note that $\cC\ler{\rho,\omega}$ is never empty, because the trivial coupling $\omega\otimes\rho^T$ belongs to $\cC\ler{\rho,\omega}$.
\par
The definition of couplings \eqref{eq:q-coup-def} proposed by De Palma and Trevisan \cite{DPT-AHP} is different from the definition proposed by Golse, Mouhot, Paul \cite{GolseMouhotPaul} in the sense that it involves the dual Hilbert space $\cH^*$ and hence the transpose operation. For a clarification of this difference, see Remark 1 in \cite{DPT-AHP} while for more detail on the latter concept of quantum couplings, the interested reader should consult \cite{CagliotiGolsePaul,CagliotiGolsePaul-towardsqot,GolseMouhotPaul,GolsePaul-Schrodinger,GolsePaul-wavepackets,GolsePaul-Nbody, GolsePaul-OTapproach, GolseTPaul-pseudometrics, GolsePaul-meanfieldlimit}.

\section{Kantorovich duality} \label{sec:Kantorovich-duality}

The goal of this section is to formalize a linear relaxation of a non-linear primal quantum optimal transport problem that we proposed in \cite{BPTV-p-Wass}, and to propose a corresponding dual problem for which we can prove strong Kantorovich duality following the approach of Caglioti, Golse, and Paul \cite{CagliotiGolsePaul-towardsqot,GolseTPaul-pseudometrics}, which is explained also in \cite[Sec. 4.3]{Golse-lecture-notes}.
\par
In \cite[Section 2.1]{BPTV-p-Wass} we considered the following quantum mechanical optimal transport problem: let $\cH$ be a separable Hilbert space, $\cA=\lers{A_1, \dots, A_K}$ a finite collection of observables on $\cH,$ and let $c: \R^K \times \R^K \to \R$ be a non-negative, lower semicontinuous classical cost function. The positive and possibly unbounded self-adjoint cost operator $C_{c}^{(\cA)}$ acting on a dense subspace of $\ler{\hohc}^{\otimes K}$ is defined by
\begin{align} \label{eq:C-c-A-def}
    C_{c}^{(\cA)}
    =\iint_{\R^K \times \R^K} c\ler{x_1, \dots, x_K, y_1, \dots, y_K} \dd E_1(y_1) \otimes \dd E_1^T (x_1) \otimes \cdots \otimes \dd E_K(y_K) \otimes \dd E_K^T (x_K), 
\end{align}
where $E_k$ is the spectral measure of $A_K,$ that is, $A_k=\int_\R \lambda \dd E_k(\lambda)$ for $k \in \lers{1, \dots, K}.$
The transport problem is to
\begin{align} \label{eq:IID-primal-problem}
    \text{minimize } \tr_{\ler{\hohc}^{\otimes K}}\lesq{\Pi^{\otimes K} C_{c}^{(\cA)}}
\end{align}
where $\Pi$ runs over the set of all couplings of $\rho, \omega \in \sh,$ that is,
\begin{align} \label{eq:IID-primal-constraint}
    \Pi \in \cC(\rho,\omega)=\lers{\Pi \in \cS\ler{\cH \otimes \cH^*} \, \middle| \, \tr_{\cH^*} [\Pi]=\omega, \,  \tr_{\cH} [\Pi]=\rho^T}.
\end{align}
\par
It is important to note that the loss function $\tr_{\ler{\hohc}^{\otimes K}}\lesq{\Pi^{\otimes K} C_{c}^{(\cA)}}$ in the primal problem \eqref{eq:IID-primal-problem} is non-linear in its variable $\Pi \in \cC\ler{\rho,\omega}.$ However, there is a natural linear relaxation which is described the following way.

\begin{problem} \label{prob:linear-primal}
Let the initial and final states $\rho, \omega \in \sh$ and the cost operator $C_c^{(\cA)}$ acting on $\ler{\hohc}^{\otimes K}$ and defined by \eqref{eq:C-c-A-def} be given. The optimization task is to
\begin{align} \label{eq:linear-primal-problem}
\text{minimize } 
\tr_{\ler{\hohc}^{\otimes K}}\lesq{\Gamma C_c^{\cA}}
\end{align}
subject to the constraints
\begin{align} \label{eq:linear-primal-constraint}
\Gamma \in \cS\ler{\ler{\hohc}^{\otimes K}} , \, \ler{\Gamma}_{2k-1}=\omega, \, \ler{\Gamma}_{2k}=\rho^T \text{ for all } k\in \lers{1, \dots, K},    
\end{align}
where 
\begin{align} \label{eq:Gamma-2k-1-def}
    \ler{\Gamma}_{2k-1}
    =\tr_{1,\dots, 2k-2,2k,\dots,2K}\lesq{\Gamma}
    =\tr_{\ler{\hohc}^{\otimes(k-1)} \otimes \cH^* \otimes \ler{\hohc}^{\otimes(K-k)}}\lesq{\Gamma},
\end{align}
and 
\begin{align} \label{eq:Gamma-2k-def}
    \ler{\Gamma}_{2k}
    =\tr_{1,\dots, 2k-1,2k+1,\dots,2K}\lesq{\Gamma}
    =\tr_{\ler{\hohc}^{\otimes(k-1)} \otimes \cH \otimes \ler{\hohc}^{\otimes(K-k)}}\lesq{\Gamma}.   
\end{align}
\end{problem}
Note that $\Pi^{\otimes K}$ satisfies the constraint \eqref{eq:linear-primal-constraint} whenever $\Pi \in \cC(\rho, \omega)$ (defined in \eqref{eq:q-coup-def}), and therefore, the infimum of \eqref{eq:IID-primal-problem} is lower bounded by the infimum of \eqref{eq:linear-primal-problem}. The difference between the non-linear problem \eqref{eq:IID-primal-problem} and its linear relaxation (Problem \ref{prob:linear-primal}) is that the couplings of $\rho$ and $\omega$ acting on different subsystems are required to be independent in the former version while they may have correlations in the latter version. We will present an explicit example in the sequel (see Proposition \ref{prop:strictly-smaller}) which demonstrates that minimum of \eqref{eq:linear-primal-problem} can be strictly smaller than that of \eqref{eq:IID-primal-problem}.
\par
It is instructive to consider the case when the transport cost factorizes, that is,
\begin{align} \label{eq:tr-cost-factorizes}
    c(x_1, \dots, x_K, y_1, \dots, y_K)=f_1\ler{x_1,y_1}+ \dots, +f_K\ler{x_K,y_K}.
\end{align}
In this case, the cost operator $C_c^{(\cA)}$ defined in \eqref{eq:C-c-A-def} has the simpler form
\begin{align}
    C_c^{(\cA)}=\sum_{k=1}^K
    I_{\hohc}^{\otimes(k-1)}
    \otimes
    \ler{\iint_{\R \times \R} f_k\ler{x_k,y_k} \dd  E_k(y_k) \otimes \dd E_k^T (x_k)}  
    \otimes
    I_{\hohc}^{\otimes(K-k)}.
\end{align}
Therefore, introducing the shorthand $C_k:=\iint_{\R \times \R} f_k\ler{x_k,y_k} \dd  E_k(y_k) \otimes \dd E_k^T (x_k),$ for any $\Gamma \in \cS\ler{\ler{\hohc}^{\otimes K}}$ one gets
\begin{align}
    \tr_{\ler{\hohc}^{\otimes K}}\lesq{\Gamma C_c^{\cA}}
    =\sum_{k=1}^K \tr_{\hohc}\lesq{\ler{\Gamma}_{(2k-1,2k)} C_k},
\end{align}
where the marginals $\ler{\Gamma}_{(2k-1,2k)}$ are defined similarly as in Problem \ref{prob:linear-primal}, that is, $$\ler{\Gamma}_{(2k-1,2k)}=\tr_{1,\dots, 2k-2, 2k+1, \dots, 2K}\lesq{\Gamma}.$$
Consequently, the linearized primal problem (Problem \ref{prob:linear-primal}) reduces to the following:
\begin{align} \label{eq:linear-factorized}
    \text{minimize } \sum_{k=1}^K \tr_{\hohc} \lesq{\Pi_k C_k}
\end{align}
under the constraints
\begin{align}
    \Pi_1, \dots, \Pi_K \in \cC(\rho, \omega).
\end{align}
On the contrary, the non-linear primal problem \eqref{eq:IID-primal-problem} proposed in \cite[Sec. 2.1]{BPTV-p-Wass} reduces to 
\begin{align} \label{eq:primal-factorized}
    \text{minimize }\tr_{\hohc} \lesq{\Pi \ler{\sum_{k=1}^K C_k}} \text{ subject to } \Pi \in \cC(\rho, \omega),
\end{align}
as noted in \cite{BPTV-p-Wass} for the special case $c(x_1, \dots, x_K, y_1, \dots, y_K)=\sum_{k=1}^p\abs{x_k-y_k}^p.$ Note that if the transport cost factorizes in the sense of \eqref{eq:tr-cost-factorizes}, then the a priori non-linear loss function of the primal problem \eqref{eq:IID-primal-problem} becomes linear, as clearly shown by \eqref{eq:primal-factorized}.
\par
The classical dual problem for the optimal transportation problem \eqref{eq:cl-ot-prob} on the complete and separable metric space $\cX$ is to
\begin{align} \label{eq:cl-dual-prob}
    \text{maximize } \int_{\cX} \psi(y) \dd \nu(y) +\int_{\cX} \varphi(x) \dd \mu(x)
\end{align}
subject to the constraint
\begin{align} \label{eq:cl-dual-const}
    \psi(y)+\varphi(x) \leq c(x,y)
\end{align}
for all $x,y \in \cX,$
and the classical Kantorovich duality asserts that
\begin{align} \label{eq:class-Kant-dual}
    \sup \lers{ \int_{\cX} \psi(y) \dd \nu(y) +\int_{\cX} \varphi(x) \dd \mu(x) \, \middle| \, \psi(y)+\varphi(x) \leq c(x,y)}= \nonumber \\
    = \min \lers{ \iint_{\cX \times \cX} c(x,y) \dd \pi(x,y) \, \middle| \, \pi \in \cC(\mu, \nu)},
\end{align}
see, e.g., Theorem 1.3. in \cite{Villani1}.
In view of \eqref{eq:cl-dual-prob} and \eqref{eq:cl-dual-const}, a natural generalization of the classical dual problem to our quantum setting is the following.

\begin{problem} \label{prob:linear-dual}
Let the initial and final states $\rho, \omega \in \sh$ and the cost operator $C_c^{(\cA)}$ defined by \eqref{eq:C-c-A-def} and acting on $\ler{\hohc}^{\otimes K}$ be given. The optimization task is to
\begin{align} \label{eq:linear-k-partite-dual-problem}
    \text{maximize } \sum_{k=1}^K \ler{\tr_{\cH}\lesq{\omega Y_k} + \tr_{\cH}\lesq{\rho X_k}}
\end{align}
subject to the constraint
\begin{align} \label{eq:linear-k-partite-dual-constraint}
    \sum_{k=1}^K I_{\hohc}^{\otimes (k-1)} \otimes \ler{Y_k \otimes I_{\cH^*}+I_{\cH} \otimes X_k^T } \otimes I_{\hohc}^{\otimes (K-k)} \leq C_c^{\cA}.
\end{align}    
\end{problem}

It turns out that the above proposed Problem \ref{prob:linear-dual} is indeed the strong Kantorovich dual of Problem \ref{prob:linear-primal}. The precise statement is formalized in the following theorem, which is the main result of this section.

\begin{theorem} \label{thm:linear-duality}
Let $\cA=\lers{A_1,\dots,A_K}$ be a finite collection of observables on a separable Hilbert space $\cH,$ let the cost operator $C_c^{(\cA)}$ be defined as in \eqref{eq:C-c-A-def}, and let $\rho, \omega \in \sh.$ Then
    \begin{align} \label{eq:duality-nr-1}
        \sup \lers{\sum_{k=1}^K \ler{\tr_{\cH}\lesq{\omega Y_k} + \tr_{\cH}\lesq{\rho X_k}} \, \middle| \, \sum_{k=1}^K I_{\hohc}^{\otimes (k-1)} \otimes \ler{Y_k \otimes I_{\cH^*}+I_{\cH} \otimes X_k^T } \otimes I_{\hohc}^{\otimes (K-k)} \leq C_c^{\cA}} = \nonumber \\
        =\min \lers{\tr_{\ler{\hohc}^{\otimes K}}\lesq{\Gamma C_c^{(\cA)}} \, \middle| \, \Gamma \in \cS\ler{\ler{\hohc}^{\otimes K}} , \, \ler{\Gamma}_{2k-1}=\omega, \, \ler{\Gamma}_{2k}=\rho^T \text{ for all } k\in \lers{1, \dots, K}},
    \end{align}
    where the variables $X_1, Y_1, \dots, X_K, Y_K$ to be optimized are self-adjoint and bounded operators on $\cH,$ and the marginals $\ler{\Gamma}_{2k-1}$ and $\ler{\Gamma}_{2k}$ are the ones defined in \eqref{eq:Gamma-2k-1-def} and \eqref{eq:Gamma-2k-def}.
\end{theorem}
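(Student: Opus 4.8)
The plan is to prove the two inequalities between the dual supremum and the primal minimum in \eqref{eq:duality-nr-1} separately: the easy direction (weak duality) by a direct trace computation, and the hard direction by recasting the problem as a saddle point and applying a minimax theorem. For weak duality, fix a primal-feasible $\Gamma$ and a dual-feasible tuple $\ler{X_1,Y_1,\dots,X_K,Y_K}$. Multiplying the operator inequality \eqref{eq:linear-k-partite-dual-constraint} by $\Gamma\geq 0$ and taking the trace gives
\[
\tr_{\ler{\hohc}^{\otimes K}}\lesq{\Gamma C_c^{(\cA)}}\geq \sum_{k=1}^K \tr\lesq{\Gamma\ler{I_{\hohc}^{\otimes(k-1)}\otimes\ler{Y_k\otimes I_{\cH^*}+I_{\cH}\otimes X_k^T}\otimes I_{\hohc}^{\otimes(K-k)}}}.
\]
Evaluating each summand with the adjoint property of the partial trace and the marginal constraints $\ler{\Gamma}_{2k-1}=\omega$, $\ler{\Gamma}_{2k}=\rho^T$ turns the right-hand side into $\sum_{k=1}^K\ler{\tr_{\cH}\lesq{\omega Y_k}+\tr_{\cH^*}\lesq{\rho^T X_k^T}}=\sum_{k=1}^K\ler{\tr_{\cH}\lesq{\omega Y_k}+\tr_{\cH}\lesq{\rho X_k}}$, the last step using $\tr\lesq{\rho^T X_k^T}=\tr\lesq{\rho X_k}$. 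Taking the supremum over dual-feasible tuples and the infimum over primal-feasible $\Gamma$ yields $\sup(\mathrm{dual})\leq\inf(\mathrm{primal})$. Moreover the primal infimum is attained: the feasible set is convex and, because its fixed single-site marginals $\omega,\rho^T$ are trace-class and hence tight, weak-$*$ compact by the quantum tightness argument of \cite{DPT-AHP}; while $\Gamma\mapsto\tr\lesq{\Gamma C_c^{(\cA)}}=\sup\lers{\tr\lesq{\Gamma K}\,\middle|\,0\leq K\leq C_c^{(\cA)},\ K\ \text{finite rank}}$ is a supremum of weak-$*$ continuous functionals, hence weak-$*$ lower semicontinuous. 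This justifies writing $\min$ in \eqref{eq:duality-nr-1}.

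For the reverse inequality I would introduce the Lagrangian
\[
L\ler{\Gamma,\lers{X_k,Y_k}}=\tr\lesq{\Gamma C_c^{(\cA)}}-\sum_{k=1}^K\ler{\tr\lesq{Y_k\ler{\ler{\Gamma}_{2k-1}-\omega}}+\tr\lesq{X_k^T\ler{\ler{\Gamma}_{2k}-\rho^T}}},
\]
with $\Gamma$ ranging over all states and $X_k,Y_k$ over bounded self-adjoint operators. For fixed $\Gamma$, maximizing over the multipliers returns $\tr\lesq{\Gamma C_c^{(\cA)}}$ when the marginal constraints hold and $+\infty$ otherwise, so $\inf_{\Gamma}\sup_{\lers{X,Y}}L$ equals the primal minimum. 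Regrouping $L$ as $\tr\lesq{\Gamma\ler{C_c^{(\cA)}-M}}+\sum_{k=1}^K\ler{\tr_{\cH}\lesq{\omega Y_k}+\tr_{\cH}\lesq{\rho X_k}}$, where $M$ denotes the left-hand side of \eqref{eq:linear-k-partite-dual-constraint}, and minimizing over states gives $\inf_{\Gamma}L=\lambda_{\min}\ler{C_c^{(\cA)}-M}+\sum_{k=1}^K\ler{\tr_{\cH}\lesq{\omega Y_k}+\tr_{\cH}\lesq{\rho X_k}}$. Replacing $Y_1$ by $Y_1+tI_{\cH}$ shifts $M$ by $tI$, hence $\lambda_{\min}$ by $-t$, and the objective by $+t$ since $\tr\omega=1$; this gauge freedom lets one always reduce to the case $C_c^{(\cA)}-M\geq 0$ with $\lambda_{\min}=0$ at no change of value, showing that $\sup_{\lers{X,Y}}\inf_{\Gamma}L$ equals the constrained dual supremum in \eqref{eq:duality-nr-1}. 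Thus the theorem is equivalent to the minimax identity $\inf_{\Gamma}\sup_{\lers{X,Y}}L=\sup_{\lers{X,Y}}\inf_{\Gamma}L$.

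To exchange the infimum and the supremum I would invoke Sion's minimax theorem, which requires a convex domain that is compact on the side where the infimum is taken together with suitable semicontinuity. When $\cH$ is finite-dimensional and the cost is bounded, $\cS\ler{\ler{\hohc}^{\otimes K}}$ is compact, every functional in sight is continuous, and $L$ is affine in each group of variables; Sion's theorem then yields $\inf_{\Gamma}\sup_{\lers{X,Y}}L=\sup_{\lers{X,Y}}\inf_{\Gamma}L$ at once. The genuine difficulty, and the step I expect to require the most care, is that in general $C_c^{(\cA)}$ is unbounded and $\cH$ is infinite-dimensional, so the full state space fails to be weak-$*$ compact and the penalty functionals cease to be weak-$*$ continuous. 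I would remove both obstructions by a twofold approximation: a monotone finite-rank truncation $C_c^{(\cA)}=\sup_n K_n$ of the positive cost, controlled by monotone convergence of the traces, and a finite-rank exhaustion $P_m\nearrow I_{\cH}$ of $\cH$, along which the feasible couplings remain in a fixed weak-$*$ compact set by tightness of the marginals $\omega,\rho^T$. Passing to these limits in the finite-dimensional identity, using the weak-$*$ lower semicontinuity from the first paragraph to bound the primal side from below and weak-$*$ compactness to extract convergent optimizers on the dual side, then delivers the full duality \eqref{eq:duality-nr-1}.
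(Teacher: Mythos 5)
Your weak-duality computation and the attainment of the primal minimum (weak-$*$ compactness of the feasible set plus lower semicontinuity of $\Gamma \mapsto \tr\lesq{\Gamma C_c^{(\cA)}}$) are correct, and your Lagrangian reformulation --- including the gauge argument shifting $Y_1$ by $tI_{\cH}$ to identify $\sup\inf L$ with the constrained dual supremum --- is sound in finite dimensions with bounded cost. This is a genuinely different route from the paper, which instead applies the Fenchel--Rockafellar theorem directly on the Banach space $\cB\ler{\ler{\hohc}^{\otimes K}}^{sa}$, taking $\Theta$ to be the indicator of $\lers{U \geq -C_c^{\cA}}$ and $\Xi$ the linear functional defined on the subspace of separable potentials, with the continuity hypothesis verified at $U_0 = I$; the positive functional produced on the dual side of Fenchel--Rockafellar is then identified with a trace-class $\Gamma$ having the prescribed marginals, following \cite{CagliotiGolsePaul-towardsqot}. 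The paper's route works in infinite dimensions in one shot and yields attainment of the primal minimum as a byproduct, with no approximation needed.

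The gap in your proposal is precisely the limiting argument you defer to the end. Two specific problems. First, you propose to use ``weak-$*$ compactness to extract convergent optimizers on the dual side'': there is no such compactness. The dual feasible set is unbounded (it is invariant under $\ler{X_k,Y_k}\mapsto\ler{X_k+tI,Y_k-tI}$ and admits feasible points of arbitrarily large norm), and the dual supremum is in general not attained --- which is why the theorem asserts only a $\sup$ there. Any passage to the limit must therefore work with near-optimizers and would still require a uniform norm bound on them, which you have not established. Second, the exhaustion $P_m \nearrow I_{\cH}$ replaces $\rho,\omega$ by their renormalized compressions, so both the primal and dual values become functions of perturbed marginals; with an unbounded cost operator neither value is obviously continuous (or appropriately semicontinuous) in the marginals, and the inequality you need between the truncated and untruncated problems is exactly the content that must be proved. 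As written, the proposal establishes the theorem only for finite-dimensional $\cH$ and bounded cost; to repair it you would either have to carry out the double approximation with quantitative control of both defects, or switch to an argument, such as the paper's Fenchel--Rockafellar one, that operates directly in the infinite-dimensional setting.
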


\begin{proof}
Let us define the functional $\Theta: \cB\ler{\ler{\hohc}^{\otimes K}}^{sa} \rightarrow (-\infty,+\infty]$ by
\begin{align} \label{eq:IID-Theta-def}
    \Theta(U):=\begin{cases}
        0 & \text{if } U \geq -C_c^{\cA} \\
        + \infty & \text{else.}
    \end{cases}
\end{align}
Here, the inequality $U \geq -C_c^{\cA}$ is to be understood in the L\"owner sense, that is, $\bra{x}U\ket{x} \geq -\bra{x}C_c^{\cA}\ket{x}$ for all $x \in \mathrm{dom}\ler{C_c^{\cA}}.$ The constraint $U \geq -C_c^{\cA}$ defines a convex domain in $\cB\ler{\ler{\hohc}^{\otimes K}}^{sa},$ and hence the functional $\Theta$ defined by \eqref{eq:IID-Theta-def} is convex. Furthermore, we define the functional $\Xi: \cB\ler{\ler{\hohc}^{\otimes K}}^{sa} \rightarrow (-\infty,+\infty]$ by
\begin{align} \label{eq:IID-Xi-def}
    \Xi(U):=\begin{cases}
        \sum_{k=1}^K \ler{\tr_{\cH}\lesq{\omega Y_k} + \tr_{\cH}\lesq{\rho X_k}} & \text{if } U= \sum_{k=1}^K I_{\hohc}^{\otimes (k-1)} \otimes \ler{Y_k \otimes I_{\cH^*}+I_{\cH} \otimes X_k^T } \otimes I_{\hohc}^{\otimes (K-k)} \\
        + \infty & \text{else,}
    \end{cases}
\end{align}
where $X_1,Y_1, \dots,X_K, Y_K \in \cB(\cH)^{sa}.$ It is important to note that the domain of $\Xi,$ that is, the region where it takes finite values, is convex. Indeed, it is a direct sum of linear subspaces of $\cB\ler{\ler{\hohc}^{\otimes K}}^{sa},$ namely, 
\begin{align} \label{eq:IID-Xi-domain-char}
\mathrm{domain}(\Xi)=\bigoplus_{k=1}^K\ler{
I_{\hohc}^{\otimes (k-1)} \otimes \cB(\cH)^{sa}\otimes I_{\cH}^T\otimes I_{\hohc}^{\otimes (K-k)} \oplus
I_{\hohc}^{\otimes (k-1)} \otimes I_{\cH} \otimes \cB\ler{\cH^*}^{sa}\otimes I_{\hohc}^{\otimes (K-k)}}.
\end{align}
 Recall that the Legendre-Fenchel transform $\Omega^*$ of a convex function $\Omega$ defined on the real normed vector space $\cB\ler{\ler{\hohc}^{\otimes K}}^{sa}$ equipped with the operator norm topology is defined by
\begin{align} \label{eq:IID-LF-transform-def}
\Omega^*\ler{\widetilde{\Gamma}}:=\sup \lers{\widetilde{\Gamma}(U)-\Omega(U) \, \middle| \, U \in \cB\ler{\ler{\hohc}^{\otimes K}}^{sa}}
\end{align}
for all $\widetilde{\Gamma} \in \ler{\cB\ler{\ler{\hohc}^{\otimes K}}^{sa}}^*.$ The famous Fenchel-Rockafellar duality theorem \cite[Thm. 1.12]{Brezis-FA-book} asserts that 
\begin{align} \label{eq:IID-LF-dualiy}
    \inf\lers{\Theta(U)+\Xi(U) \, \middle| \, U \in \cB\ler{\ler{\hohc}^{\otimes K}}^{sa}}
    =\max \lers{-\Theta^*\ler{-\widetilde{\Gamma}}-\Xi^*\ler{\widetilde{\Gamma}} \, \middle| \, \widetilde{\Gamma} \in \ler{\cB\ler{\ler{\hohc}^{\otimes K}}^{sa}}^*}
\end{align}
whenever there exists a $U_0 \in \cB\ler{\ler{\hohc}^{\otimes K}}^{sa}$ such that both $\Theta(U_0)$ and $\Xi(U_0)$ are finite, and $\Theta$ is continuous at $U_0.$ Clearly, $U_0:=I_{\ler{\hohc}^{\otimes K}}$ does the job. 
Indeed, by \eqref{eq:IID-Theta-def}, we have $\Theta\ler{I_{\ler{\hohc}^{\otimes K}}}=0,$ and by \eqref{eq:IID-Xi-def} we get $\Xi\ler{I_{\ler{\hohc}^{\otimes K}}}=1.$ Moreover, $I_{\ler{\hohc}^{\otimes K}}$ lies in the interior of the cone of positive semidefinite operators in the operator norm topology on $\cB\ler{\ler{\hohc}^{\otimes K}}^{sa}.$ Recall that the classical cost function $c$ is nonnegative, and hence the induced cost operator $C_c^{\cA}$ defined in \eqref{eq:C-c-A-def} is positive semidefinite. Therefore, $U\geq -C_c^{\cA}$ holds for any positive semidefinite $U \in \cB(\hohc)^{sa},$ and hence there is an open neighborhood of $I_{\ler{\hohc}^{\otimes K}}$ where $\Theta$ vanishes. Consequently, $\Theta$ is continuous in $I_{\ler{\hohc}^{\otimes K}}.$
\par
By the definition of the Legendre-Fenchel transform \eqref{eq:IID-LF-transform-def} and straightforward steps, we can compute $\Theta^*$ as follows:
\begin{align} \label{eq:IID-Theta-star-computation}
\Theta^*\ler{-\widetilde{\Gamma}} = \sup \lers{-\widetilde{\Gamma}(U)- \Theta(U) \middle| \, U \in \cB\ler{\ler{\hohc}^{\otimes K}}^{sa}} = \sup_{U \, : \, U \geq -C_c^{\cA}} \lers{-\widetilde{\Gamma}(U)}=
\\ 
=-\inf_{U \, : \, U \geq -C_c^{\cA}} \lers{\widetilde{\Gamma}(U)}
=\begin{cases}
\widetilde{\Gamma}\ler{C_c^{\cA}}, & \text{if } \widetilde{\Gamma} \geq 0, \\
+\infty, & \text{else.}
\end{cases}
\end{align}
Indeed, if $\widetilde{\Gamma} \geq 0,$ that is, $\widetilde{\Gamma}(R)\geq 0$ for all positive semi-definite $R \in \cB\ler{\ler{\hohc}^{\otimes K}}^{sa},$ then
\begin{align}
    -\inf_{U \, : \, U \geq -C_c^{\cA}} \lers{\widetilde{\Gamma}(U)}
    =-\inf_{ S \geq 0, \, S \in \cB\ler{\ler{\hohc}^{\otimes K}}^{sa}} \lers{\widetilde{\Gamma}\ler{-C_c^{(\cA)}}+\widetilde{\Gamma}(S)}=-\widetilde{\Gamma}\ler{-C_c^{\cA}}.
\end{align}
On the other hand, if $\widetilde{\Gamma} \ngeq 0,$ that is, $\widetilde{\Gamma}(R)<0$ for some $R\geq 0,$ then $\widetilde{\Gamma}\ler{-C_c^{\cA}+tR}$ tends to $-\infty$ as $t$ tends to $+\infty,$ and hence $\inf_{U \, : \, U \geq -C_c^{\cA}}\lers{\widetilde{\Gamma}(U)}=-\infty.$
As for the convex conjugate of $\Xi,$ one gets
\begin{align}
\Xi^*\ler{\widetilde{\Gamma}} = \sup \lers{\widetilde{\Gamma}(U) - \Xi(U) \, \middle| \, U \in \cB\ler{\ler{\hohc}^{\otimes K}}^{sa}} = \nonumber \\
=\sup \lers{\widetilde{\Gamma}(U)-\ler{\sum_{k=1}^K \ler{\tr_{\cH}\lesq{\omega Y_k} + \tr_{\cH}\lesq{\rho X_k}}} \, \middle| \, U= \sum_{k=1}^K I_{\hohc}^{\otimes (k-1)} \otimes \ler{Y_k \otimes I_{\cH^*}+I_{\cH} \otimes X_k^T } \otimes I_{\hohc}^{\otimes (K-k)} } \nonumber
\end{align}
\begin{align}
=\sup_{X_1,Y_1,\dots,X_K,Y_k \in \cB(\cH)^{sa}}\lers{\widetilde{\Gamma}\ler{\sum_{k=1}^K I_{\hohc}^{\otimes (k-1)} \otimes \ler{Y_k \otimes I_{\cH^*}+I_{\cH} \otimes X_k^T } \otimes I_{\hohc}^{\otimes (K-k)}}-\ler{\sum_{k=1}^K  \ler{\tr_{\cH}\lesq{\omega Y_k} + \tr_{\cH}\lesq{\rho X_k}}}} \nonumber
\end{align}
\begin{align}  \label{eq:IID-Xi-star-computation}
=\begin{cases}
0, & \text{if }  \widetilde{\Gamma}\ler{\sum_{k=1}^K I_{\hohc}^{\otimes (k-1)} \otimes \ler{Y_k \otimes I_{\cH^*}+I_{\cH} \otimes X_k^T } \otimes I_{\hohc}^{\otimes (K-k)}}=\sum_{k=1}^K \ler{\tr_{\cH}\lesq{\omega Y_k} + \tr_{\cH}\lesq{\rho X_k}} \\
+\infty, & \text{else,}
\end{cases}    
\end{align}
where the condition in the first line of \eqref{eq:IID-Xi-star-computation} means that the equation holds for all $X_1, Y_1, \dots, X_k, Y_K \in \cB(\cH)^{sa}.$

On one hand, the left-hand side of \eqref{eq:IID-LF-dualiy} can be written as
\begin{align} \label{eq:IID-LHS-of-LF-duality}
\inf\lers{\Theta(U)+\Xi(U) \, \middle| \, U \in \cB\ler{\ler{\hohc}^{\otimes K}}^{sa}} \nonumber \\
=\inf\lers{\sum_{k=1}^K \ler{\tr_{\cH}\lesq{\omega Y_k} + \tr_{\cH}\lesq{\rho X_k}} \, \middle| \, \sum_{k=1}^K I_{\hohc}^{\otimes (k-1)} \otimes \ler{Y_k \otimes I_{\cH^*}+I_{\cH} \otimes X_k^T } \otimes I_{\hohc}^{\otimes (K-k)} \geq -C_c^{\cA} } \nonumber
\end{align}
\begin{align}
=\inf\lers{-\sum_{k=1}^K \ler{\tr_{\cH}\lesq{\omega(-Y_k)} + \tr_{\cH}\lesq{\rho( -X_k)}} \, \middle| \, \sum_{k=1}^K I_{\hohc}^{\otimes (k-1)} \otimes \ler{(-Y_k) \otimes I_{\cH^*}+I_{\cH} \otimes (-X_k)^T } \otimes I_{\hohc}^{\otimes (K-k)} \leq C_c^{\cA} } \nonumber
\end{align}
\begin{align}
=\inf\lers{-\sum_{k=1}^K \ler{\tr_{\cH}\lesq{\omega Y_k} + \tr_{\cH}\lesq{\rho X_k}} \, \middle| \, \sum_{k=1}^K I_{\hohc}^{\otimes (k-1)} \otimes \ler{Y_k \otimes I_{\cH^*}+I_{\cH} \otimes X_k^T } \otimes I_{\hohc}^{\otimes (K-k)} \leq C_c^{\cA} } \nonumber
\\
-\sup\lers{\sum_{k=1}^K \ler{\tr_{\cH}\lesq{\omega Y_k} + \tr_{\cH}\lesq{\rho X_k}} \, \middle| \, \sum_{k=1}^K I_{\hohc}^{\otimes (k-1)} \otimes \ler{Y_k \otimes I_{\cH^*}+I_{\cH} \otimes X_k^T } \otimes I_{\hohc}^{\otimes (K-k)} \leq C_c^{\cA} }.
\end{align}
On the other hand, by \eqref{eq:IID-Theta-star-computation} and \eqref{eq:IID-Xi-star-computation}, the right-hand side of \eqref{eq:IID-LF-dualiy} reads as
\begin{align} \label{eq:IID-RHS-of-LF-duality}
\max \lers{-\Theta^*\ler{-\widetilde{\Gamma}}-\Xi^*\ler{\widetilde{\Gamma}} \, \middle| \, \widetilde{\Gamma} \in \ler{\cB\ler{\ler{\hohc}^{\otimes K}}^{sa}}^*}
\end{align}
\begin{align} \label{eq:IID-RHS-of-LF-duality-2}
&=\max \lers{-\widetilde{\Gamma}(C_c^{\cA}) \, \middle| \, \widetilde{\Gamma} \geq 0,  \widetilde{\Gamma}\ler{\sum_{k=1}^K I_{\hohc}^{\otimes (k-1)} \otimes \ler{Y_k \otimes I_{\cH^*}+I_{\cH} \otimes X_k^T } \otimes I_{\hohc}^{\otimes (K-k)}}=\sum_{k=1}^K \ler{\tr_{\cH}\lesq{\omega Y_k} + \tr_{\cH}\lesq{\rho X_k}}} \nonumber
\\ 
&=-\min \lers{\widetilde{\Gamma}(C_c^{\cA}) \, \middle| \, \widetilde{\Gamma} \geq 0,  \widetilde{\Gamma}\ler{\sum_{k=1}^K I_{\hohc}^{\otimes (k-1)} \otimes \ler{Y_k \otimes I_{\cH^*}+I_{\cH} \otimes X_k^T } \otimes I_{\hohc}^{\otimes (K-k)}}=\sum_{k=1}^K \ler{\tr_{\cH}\lesq{\omega Y_k} + \tr_{\cH}\lesq{\rho X_k}}}.
\end{align}
Consequently, 
\begin{align} \label{eq:IID-KD-final}
\sup\lers{\sum_{k=1}^K \ler{\tr_{\cH}\lesq{\omega Y_k} + \tr_{\cH}\lesq{\rho X_k}} \, \middle| \, \sum_{k=1}^K I_{\hohc}^{\otimes (k-1)} \otimes \ler{Y_k \otimes I_{\cH^*}+I_{\cH} \otimes X_k^T } \otimes I_{\hohc}^{\otimes (K-k)} \leq C_c^{\cA} }
= \nonumber \\
=\min \lers{\widetilde{\Gamma}(C_c^{\cA}) \, \middle| \, \widetilde{\Gamma} \geq 0,  \widetilde{\Gamma}\ler{\sum_{k=1}^K I_{\hohc}^{\otimes (k-1)} \otimes \ler{Y_k \otimes I_{\cH^*}+I_{\cH} \otimes X_k^T } \otimes I_{\hohc}^{\otimes (K-k)}}=\sum_{k=1}^K \ler{\tr_{\cH}\lesq{\omega Y_k} + \tr_{\cH}\lesq{\rho X_k}}}.
\end{align}
It can be shown very similarly to the proof of \cite[Lemma 3.3]{CagliotiGolsePaul-towardsqot} that for any functional $\widetilde{\Gamma}$ satisfying the conditions described on the right-hand side of \eqref{eq:IID-KD-final} there exists a positive trace-class operator $\Gamma \in \cT_1\ler{\ler{\hohc}^{\otimes K}}$ such that $\widetilde{\Gamma}(U)=\tr_{\ler{\hohc}^{\otimes K}}\lesq{\Gamma U}$ for all $U \in \cB\ler{\ler{\hohc}^{\otimes K}}^{sa}.$ The requirement that 

\begin{align} \label{eq:partial-trace-requirement}
    \tr_{\ler{\hohc}^{\otimes K}}\lesq{\Gamma \ler{\sum_{k=1}^K I_{\hohc}^{\otimes (k-1)} \otimes \ler{Y_k \otimes I_{\cH^*}+I_{\cH} \otimes X_k^T } \otimes I_{\hohc}^{\otimes (K-k)}}}=\sum_{k=1}^K \ler{\tr_{\cH}\lesq{\omega Y_k} + \tr_{\cH}\lesq{\rho X_k}}
\end{align}
holds for all $X_1, \dots, X_K, Y_1, \dots, Y_K \in \cB(\cH)^{sa}$ is clearly equivalent to the condition
\begin{align}
    \ler{\Gamma}_{2k-1}=\omega, \, \ler{\Gamma}_{2k}=\rho^T \text{ for all } k\in \lers{1, \dots, K}, 
\end{align}
and hence \eqref{eq:IID-KD-final} can be written as

\begin{align} \label{eq:IID-KD-final-final}
\sup\lers{\sum_{k=1}^K \ler{\tr_{\cH}\lesq{\omega Y_k} + \tr_{\cH}\lesq{\rho X_k}} \, \middle| \, \sum_{k=1}^K I_{\hohc}^{\otimes (k-1)} \otimes \ler{Y_k \otimes I_{\cH^*}+I_{\cH} \otimes X_k^T } \otimes I_{\hohc}^{\otimes (K-k)} \leq C_c^{\cA} }
= \nonumber \\
=\min \lers{\tr_{\ler{\hohc}^{\otimes K}}\lesq{\Gamma C_c^{(\cA)}} \, \middle| \, \Gamma \geq 0,   \,
\ler{\Gamma}_{2k-1}=\omega, \, \ler{\Gamma}_{2k}=\rho^T \text{ for all } k\in \lers{1, \dots, K} 
},
\end{align}
as desired. 
\end{proof}

We noted before that in the case of factorizing transport cost (see eq. \eqref{eq:tr-cost-factorizes}), the primal task \eqref{eq:IID-primal-problem} reduces to the linear problem
\begin{align} \label{eq:primal-factorized-2}
    \text{minimize } \Pi \mapsto \tr_{\hohc} \lesq{\Pi \ler{\sum_{k=1}^K \iint_{\R \times \R} f_k\ler{x_k,y_k} \dd  E_k(y_k) \otimes \dd E_k^T (x_k)}} \text{ over } \cC\ler{\rho,\omega}.
\end{align}
Let us consider the special case $K=1$ in Theorem \ref{thm:linear-duality}, and let us replace the cost operator $C_c^{(\cA)}$ there by an arbitrary self-adjoint operator $\tilde{C}$ acting on $\hohc.$
Observe that the concrete form of the cost operator $C_c^{(\cA)}$ does not play any role in the proof of Theorem \ref{thm:linear-duality}, the cost operator can be replaced by any self-adjoint operator $\tilde{C}$. Consequently, the proof of Theorem \ref{thm:linear-duality} shows that one gets strong Kantorovich duality also for the primal problem \eqref{eq:primal-factorized-2} where $\tilde{C}=\sum_{k=1}^K \iint_{\R \times \R} f_k\ler{x_k,y_k} \dd  E_k(y_k) \otimes \dd E_k^T (x_k)$, which we formalize in the following corollary.

\begin{corollary} \label{cor:factorized-duality}
Assume that the transport cost factorizes in the sense of \eqref{eq:tr-cost-factorizes}. In this case, the primal problem \eqref{eq:IID-primal-problem} admits a strong Kantorovich dual problem, which is to maximize
$\tr_{\cH}\lesq{\omega Y}+\tr_{\cH}\lesq{\rho X}$ under the constraint $Y \otimes I_{\cH}^T +I_{\cH} \otimes X^T \leq C_{fac}:=\sum_{k=1}^K \iint_{\R \times \R} f_k\ler{x_k,y_k} \dd  E_k(y_k) \otimes \dd E_k^T (x_k).$ 
That is, 
\begin{align}
    \sup \lers{ \tr_{\cH}\lesq{\omega Y}+\tr_{\cH}\lesq{\rho X} \, \middle| \, Y \otimes I_{\cH}^T +I_{\cH} \otimes X^T \leq C_{fac}}
    =\min \lers{\tr_{\hohc} \lesq{\Pi C_{fac}} \, \middle| \, \Pi \in \cC\ler{\rho, \omega}},
\end{align}
where the variables $X$ and $Y$ to be optimized are self-adjoint and bounded operators on $\cH.$
\end{corollary}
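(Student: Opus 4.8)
The plan is to obtain the corollary as the $K=1$ instance of Theorem \ref{thm:linear-duality}, with the cost operator $C_c^{(\cA)}$ replaced by $C_{fac}$. First I would record the reduction of the primal side already implicit in the discussion preceding the statement: when the cost factorizes as in \eqref{eq:tr-cost-factorizes} the cost operator splits as $C_c^{(\cA)}=\sum_{k=1}^K I_{\hohc}^{\otimes(k-1)}\otimes C_k\otimes I_{\hohc}^{\otimes(K-k)}$, and substituting the product coupling $\Pi^{\otimes K}$ into the loss of \eqref{eq:IID-primal-problem} gives $\tr_{(\hohc)^{\otimes K}}\lesq{\Pi^{\otimes K}C_c^{(\cA)}}=\sum_{k=1}^K\tr_{\hohc}\lesq{\Pi C_k}=\tr_{\hohc}\lesq{\Pi C_{fac}}$. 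Hence the a priori non-linear problem \eqref{eq:IID-primal-problem} is in fact linear in $\Pi$ and coincides with \eqref{eq:primal-factorized-2}, whose value is $\min\lers{\tr_{\hohc}\lesq{\Pi C_{fac}}\,\middle|\,\Pi\in\cC(\rho,\omega)}$.

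The second step is to invoke Theorem \ref{thm:linear-duality} with $K=1$ and cost operator $C_{fac}$. For a single tensor factor the marginal constraints $(\Gamma)_1=\omega$ and $(\Gamma)_2=\rho^T$ read $\tr_{\cH^*}\lesq{\Gamma}=\omega$ and $\tr_{\cH}\lesq{\Gamma}=\rho^T$, that is, $\Gamma\in\cC(\rho,\omega)$, while the dual constraint collapses to $Y\otimes I_{\cH}^T+I_{\cH}\otimes X^T\leq C_{fac}$; so the $K=1$ case of \eqref{eq:duality-nr-1} is literally the asserted equality. The point I would stress is that the spectral form \eqref{eq:C-c-A-def} of $C_c^{(\cA)}$ is never used in the proof of Theorem \ref{thm:linear-duality}: the definitions of $\Theta$ and $\Xi$, the computations of the conjugates $\Theta^*$ and $\Xi^*$, and the representation of the optimal dual functional by a positive trace-class operator $\Gamma$ (following \cite[Lemma 3.3]{CagliotiGolsePaul-towardsqot}) all go through verbatim for an arbitrary self-adjoint cost operator, so the whole argument transfers to $C_{fac}$.

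I expect the only step needing care to be the continuity hypothesis of the Fenchel--Rockafellar theorem \cite[Thm. 1.12]{Brezis-FA-book}, verified in the original proof at $U_0=I$ using the positivity of $C_c^{(\cA)}$. If each $f_k$ is nonnegative then $C_{fac}\geq 0$ and the same choice $U_0=I_{\hohc}$ works. More generally, since $C_{fac}$ need not be positive semidefinite, I would take $U_0=\lambda I_{\hohc}$ with $\lambda$ large enough that $\lambda I_{\hohc}+C_{fac}\geq\varepsilon I_{\hohc}$ for some $\varepsilon>0$, which is possible whenever $C_{fac}$ is bounded below (in particular when the observables $A_1,\dots,A_K$ are bounded). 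Writing $U_0=Y\otimes I_{\cH}^T+I_{\cH}\otimes X^T$ with $X=Y=\tfrac{\lambda}{2}I_{\cH}$ shows $U_0\in\mathrm{domain}(\Xi)$ and $\Xi(U_0)=\tfrac{\lambda}{2}\ler{\tr_{\cH}\lesq{\omega}+\tr_{\cH}\lesq{\rho}}=\lambda<\infty$, while $\Theta(U_0)=0$ and $\Theta$ vanishes on an operator-norm neighborhood of $U_0$, hence is continuous there. With this $U_0$ the Fenchel--Rockafellar hypotheses hold and the remainder of the proof of Theorem \ref{thm:linear-duality} applies unchanged, yielding the stated duality.
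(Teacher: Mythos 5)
Your argument is correct and takes essentially the same route as the paper: the corollary is obtained by running the proof of Theorem \ref{thm:linear-duality} with $K=1$ and the cost operator replaced by $C_{fac}$, observing that the concrete spectral form \eqref{eq:C-c-A-def} is never used in that proof. Your extra precaution about the Fenchel--Rockafellar continuity hypothesis is not actually needed, since the standing assumption $c\geq 0$ together with the separated-variable structure forces $\sum_{k}\inf f_k\geq 0$ and hence $C_{fac}\geq 0$, so the paper's choice $U_0=I_{\hohc}$ already works; your $U_0=\lambda I_{\hohc}$ variant is a harmless generalization.
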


Interestingly, optimal quantum Kantorovich potentials might not exist, even in low dimensional cases. In other words, it can happen that there is no maximizer for the dual problem. Indeed, let us consider the duality described in Theorem \ref{thm:linear-duality} in the simple case when $\cH=\C^2, \, K=1, \, \cA=\lers{\sigma_1},$ where $\sigma_1$ is one of the Pauli matrices defined in \eqref{eq:Pauli-def}, $c(t,t')=(t-t')^2, \, \rho \in \cS(\C^2)$ is an arbitrary full-rank density matrix, and $\omega=\frac{1}{2}(I+\sigma_x)=\lesq{\ba{cc} 1 & 0 \\ 0 & 0 \ea}.$ The state $\omega$ is pure, and hence the only and consequently optimal coupling of $\rho$ and $\omega$ is $\Pi:=\omega \otimes \rho^T.$ The transport cost operator is 
\begin{align}
C_{(t-t')^2}^{(\{\sigma_1\})}=\ler{\sigma_1 \otimes I^T -I \otimes \sigma_1^T}^2 
=\lesq{ \ba{cccc} 2 & 0 & 0 & -2 \\ 0 & 2 & -2 & 0 \\ 0 & -2 & 2 & 0 \\ -2 & 0 & 0 & 2\ea}. \nonumber
\end{align}
Assume indirectly that the dual problem admits a maximizer, that is, there exist $X=\lesq{\ba{cc} x_{11} & x_{12} \\ \overline{x_{12}} & x_{22} \ea} \in \cB(\C^2)^{sa}$ and $Y=\lesq{\ba{cc} y_{11} & y_{12} \\ \overline{y_{12}} & y_{22} \ea} \in \cB(\C^2)^{sa}$ such that
\begin{align}
    \tr_{\C^2}[\omega Y] +\tr_{\C^2}[\rho X]=\tr_{\C^2 \otimes \ler{\C^2}^*}\lesq{\Pi C_{(t-t')^2}^{(\{\sigma_1\})}},
    \nonumber
\end{align}
and $Y \otimes I^T +I \otimes X^T \leq C_{(t-t')^2}^{(\{\sigma_1\})}.$ Equivalently, for the operator $M:=C_{(t-t')^2}^{(\{\sigma_1\})}-\ler{Y \otimes I^T +I \otimes X^T}$ we have $\tr_{\C^2 \otimes \ler{\C^2}^*}\lesq{\Pi M}=0$ and $M \geq 0.$ Let us write $\Pi$ and $M$ in their block-matrix forms, that is, 
\begin{align}
\Pi=\omega \otimes \rho^T = \lesq{ \ba{c|c} \rho^T & 0 \\ \hline 0 & 0  \ea}
\text{ and }
M=\lesq{ \ba{c|c} M_{11} & M_{12} \\ \hline M_{12}^* & M_{22}  \ea}.
    \nonumber
\end{align}
It is immediately seen by these forms that $\tr_{\C^2 \otimes \ler{\C^2}^*}\lesq{\Pi M}=0$ is equivalent to $\tr_{\ler{\C^2}^*}\lesq{\rho^T M_{11}}=0,$ which implies that $M_{11}=0$ because both $\rho^T$ and $M_{11}$ are positive semi-definite, and $\rho^T$ is full-rank. But then the condition $M \geq 0$ implies that $M_{12}=M_{12}^*=0.$ However, this cannot happen as 
\begin{align}
    M=\lesq{ \ba{cccc} 2 & 0 & 0 & -2 \\ 0 & 2 & -2 & 0 \\ 0 & -2 & 2 & 0 \\ -2 & 0 & 0 & 2\ea} 
    - \lesq{ \ba{cccc} y_{11} & 0 & y_{12} & 0 \\ 0 & y_{11} & 0 & y_{12} \\ \overline{y_{12}} & 0 & y_{22} & 0 \\ 0 & \overline{y_{12}} & 0 & y_{22}\ea}
    - \lesq{ \ba{cccc} x_{11} & x_{12} & 0 & 0 \\ \overline x_{12} & x_{22} & 0 & 0 \\ 0 & 0 & x_{11} & x_{12} \\ 0 & 0 & \overline{x_{12}} & x_{22}\ea},
    \nonumber
\end{align}
and hence $[M]_{1,4},$ the entry in the top-right corner of $M$ is $-2$ no matter what $X$ and $Y$ are. We got a contradiction, as desired. A similar reasoning shows that there is no maximizer either if we modify the above example by setting $\rho=\omega=\frac{1}{2}(I+\sigma_1).$

In \cite[Section 2.2]{BPTV-p-Wass} we considered also the following quantum mechanical optimal transport problem: let $\cH:=L^2(\R^K) \simeq L^2(\R)^{\otimes K},$ and let $c: \R^K \times \R^K \to [0,\infty)$ be a non-negative lower semi-continuous classical cost function. Let $E: \cB(\R)\to \cP(L^2(\R))$ be the spectral measure of the position operator $Q$ acting on $L^2(\R),$ that is, $E(S)=M_{\chi_S},$ where $\chi_S$ is the characteristic function of $S$ and $M_f$ is the multiplication by $f$ given by $(M_f \psi)(x)=f(x)\psi(x).$
\par
The cost operator $C_c \in \mathrm{Lin}\ler{L^2(\R^K) \otimes (L^2(\R^K))^*}$ corresponding to the classical cost $c$ is defined by Borel functional calculus the following way:
\begin{align} \label{eq:pos-cost-op-def}
    C_c=\iint_{\R^K \times \R^K} c(x_1, \dots, x_K, y_1, \dots, y_K) \dd E(y_1) \otimes \dots \otimes \dd E(y_K) \otimes \dd E(x_1)^T \otimes \dots \otimes \dd E(x_K)^T.
\end{align}
Note that $C_c$ is unbounded if $c$ is so. Let $\rho$ and $\omega$ be states on $L^2(\R^K).$ The optimization task is to
\begin{align} \label{eq:pos-primal-task}
\text{minimize } \tr_{L^2(\R^K) \otimes (L^2(\R^K))^*}\lesq{\Pi C_c}
\end{align}
under the constraints 
\begin{align} \label{eq:pos-primal-constraints}
\Pi \in \cS\ler{L^2(\R^K) \otimes (L^2(\R^K))^*}, \, \tr_{(L^2(\R^K))^*}[\Pi]=\omega, \, \tr_{L^2(\R^K)} [\Pi]=\rho^T.
\end{align}

Just like in the case of Corollary \ref{cor:factorized-duality}, the proof of Theorem \ref{thm:linear-duality} with $K=1$ and with the appropriate cost operator demonstrates that the primal quantum optimal transport problem described in \eqref{eq:pos-primal-task} and \eqref{eq:pos-primal-constraints} has a strong dual. We formalize the precise statement in the following corollary.

\begin{corollary} \label{cor:L2RK-duality}
    Let the cost operator $C_c \in \mathrm{Lin}\ler{L^2\ler{\R^K}\otimes \ler{L^2\ler{\R^K}}^*}$ be defined as in \eqref{eq:pos-cost-op-def}, and let $\rho, \omega \in \cS\ler{L^2\ler{\R^K}}.$ Then
    \begin{align} \label{eq:duality-nr-2}
        \sup \lers{ \tr_{L^2\ler{\R^K}}[\omega Y]+\tr_{L^2\ler{\R^K}}[\rho X] \,        
        \middle| \, X,Y \in \cB(\cH), \, Y \otimes I^T +I \otimes X^T \leq C_c}= \nonumber \\
        =\min \lers{\tr_{L^2\ler{\R^K}\otimes \ler{L^2\ler{\R^K}}^*}\lesq{\Gamma C_c} \, \middle| \, \Gamma \in \cS\ler{L^2\ler{\R^K}\otimes \ler{L^2\ler{\R^K}}^*}, \, \ler{\Gamma}_1=\omega, \, \ler{\Gamma}_2=\rho^T}.
    \end{align}
\end{corollary}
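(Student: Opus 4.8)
The plan is to rerun the proof of Theorem \ref{thm:linear-duality} in the special case $K = 1$, taking $\cH := L^2(\R^K)$, working on the single copy $\hohc = L^2(\R^K) \otimes (L^2(\R^K))^*$, and putting the position-based cost operator $C_c$ of \eqref{eq:pos-cost-op-def} in the role previously played by $C_c^{(\cA)}$. As already noted in connection with Corollary \ref{cor:factorized-duality}, the concrete form of the cost operator is immaterial in that argument: the only two features of $C_c^{(\cA)}$ actually used are that it is self-adjoint and that it is positive semidefinite. Both persist for $C_c$, which is self-adjoint by the Borel functional calculus \eqref{eq:pos-cost-op-def}, and which satisfies $C_c \geq 0$ (albeit possibly unbounded) because the classical cost $c$ is nonnegative and \eqref{eq:pos-cost-op-def} integrates $c$ against a projection-valued measure.

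First I would introduce, exactly as in \eqref{eq:IID-Theta-def} and \eqref{eq:IID-Xi-def} specialized to $K = 1$, the convex functionals $\Theta, \Xi \colon \cB(\hohc)^{sa} \to (-\infty, +\infty]$, where $\Theta(U) = 0$ precisely when $U \geq -C_c$ in the L\"owner sense on $\mathrm{dom}(C_c)$, and $\Xi(U) = \tr_{\cH}\lesq{\omega Y} + \tr_{\cH}\lesq{\rho X}$ on the affine slice $U = Y \otimes I_{\cH^*} + I_{\cH} \otimes X^T$ (with $X, Y \in \cB(\cH)^{sa}$), both being $+\infty$ elsewhere. The domain of $\Xi$ is the linear subspace $\cB(\cH)^{sa} \otimes I_{\cH}^T \oplus I_{\cH} \otimes \cB(\cH^*)^{sa}$ and is therefore convex. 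The qualification hypothesis of Fenchel--Rockafellar \cite[Thm. 1.12]{Brezis-FA-book} I would verify at $U_0 := I_{\hohc}$: since $C_c \geq 0$, every positive semidefinite $U$ satisfies $U \geq -C_c$, so $\Theta$ vanishes on an operator-norm neighborhood of $I_{\hohc}$ and is continuous there, while $\Theta(I_{\hohc}) = 0$ and $\Xi(I_{\hohc}) = 1$ (with $Y = X = \tfrac12 I_{\cH}$) are finite.

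Next I would compute the two Legendre--Fenchel conjugates just as in \eqref{eq:IID-Theta-star-computation} and \eqref{eq:IID-Xi-star-computation}. Positivity of $C_c$ yields $\Theta^*\ler{-\widetilde{\Gamma}} = \widetilde{\Gamma}(C_c)$ for $\widetilde{\Gamma} \geq 0$ and $+\infty$ otherwise, while $\Xi^*\ler{\widetilde{\Gamma}} = 0$ exactly when $\widetilde{\Gamma}(A \otimes I^T + I \otimes B^T) = \tr_{\cH}\lesq{\omega A} + \tr_{\cH}\lesq{\rho B}$ for all self-adjoint $A, B \in \cB(\cH)$, and $+\infty$ otherwise. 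Substituting these into \eqref{eq:IID-LF-dualiy} and rewriting the left-hand infimum as minus the dual supremum, exactly as in the passage culminating in \eqref{eq:IID-KD-final}, would deliver the claimed identity \eqref{eq:duality-nr-2}. I emphasize that the corollary is phrased at the level of functionals $\widetilde{\Gamma} \in \ler{\cB(\hohc)^{sa}}^*$, so, unlike \eqref{eq:IID-KD-final-final}, it requires no representation step, and the analogue of \cite[Lemma 3.3]{CagliotiGolsePaul-towardsqot} can be dispensed with here.

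The one point demanding genuine care --- and the main, if modest, obstacle --- is the possible unboundedness of $C_c$. This intervenes in two places: the dual constraint $Y \otimes I^T + I \otimes X^T \leq C_c$ must be interpreted in the L\"owner sense on $\mathrm{dom}(C_c)$, and the quantity $\widetilde{\Gamma}(C_c)$ in $\Theta^*$ must be given meaning for a positive functional $\widetilde{\Gamma}$ evaluated at the positive, possibly unbounded operator $C_c$, for instance as $\sup\lers{\widetilde{\Gamma}(B) \, \middle| \, 0 \leq B \leq C_c, \ B \in \cB(\hohc)^{sa}} \in [0, +\infty]$. This is precisely the convention already operative in the proof of Theorem \ref{thm:linear-duality}; with it, the identity $\inf_{U \geq -C_c} \widetilde{\Gamma}(U) = -\widetilde{\Gamma}(C_c)$ for $\widetilde{\Gamma} \geq 0$ persists, and the positivity of $C_c$ is exactly what guarantees both that this extended value is well defined and that the qualification condition survives in the unbounded regime.
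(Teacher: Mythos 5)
Your proposal is correct and follows essentially the same route as the paper, which likewise obtains this corollary by rerunning the Fenchel--Rockafellar argument of Theorem \ref{thm:linear-duality} with $K=1$ and $C_c$ in place of $C_c^{(\cA)}$, noting that only self-adjointness and positivity of the cost operator are used. Your two additional observations --- that the representation step via the analogue of \cite[Lemma 3.3]{CagliotiGolsePaul-towardsqot} is unnecessary because the corollary is stated at the level of functionals $\widetilde{\Gamma}$, and that the unboundedness of $C_c$ requires interpreting $\widetilde{\Gamma}(C_c)$ as a supremum over bounded minorants --- are both accurate and, if anything, more careful than the paper's own one-sentence justification.
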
 
The following statement demonstrates that the minimum of the primal problem \eqref{eq:IID-primal-problem} can indeed be larger than the minimum of its linear relaxation \eqref{eq:linear-primal-problem}.
\begin{proposition} \label{prop:strictly-smaller}
    There exists $C_{c}^{(\cA)}$ defined as in \eqref{eq:C-c-A-def} and states $\rho,\omega \in \sh$, such that the infimum of the primal problem defined in \eqref{eq:linear-primal-problem} is strictly smaller than the infimum of the primal problem defined in \eqref{eq:IID-primal-problem}.
\end{proposition}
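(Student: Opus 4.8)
The plan is to reduce to a cost that factorizes in the sense of \eqref{eq:tr-cost-factorizes}, where the two problems have already been shown to collapse to the explicit forms \eqref{eq:linear-factorized} and \eqref{eq:primal-factorized}. For such a cost the linearized problem \eqref{eq:linear-primal-problem} has value $\sum_{k=1}^K \min_{\Pi_k \in \cC(\rho,\omega)} \tr_{\hohc}\lesq{\Pi_k C_k}$, whereas the i.i.d.\ problem \eqref{eq:IID-primal-problem} has value $\min_{\Pi \in \cC(\rho,\omega)} \sum_{k=1}^K \tr_{\hohc}\lesq{\Pi C_k}$. The inequality $\leq$ between them is automatic, since a single minimizer is admissible for every block of the decoupled problem, so everything reduces to producing operators $C_1,\dots,C_K$ with \emph{no common minimizer}. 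The cleanest way to force this is to arrange $\sum_k C_k = I_{\hohc}$, which makes the i.i.d.\ value identically $1$ while keeping each individual minimum equal to $0$.

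Concretely, I would take $K=2$, $\cH=\C^2$, $\rho=\omega=\tfrac{1}{2} I \in \sh$, and $A_1=A_2=\mathrm{diag}(0,1)$, whose spectral measure consists of the two rank-one projections $P_0=\ket{0}\bra{0}$ and $P_1=\ket{1}\bra{1}$. Choosing the factorized cost with $f_1(x,y)=\ind{x\neq y}$ and $f_2(x,y)=\ind{x=y}$ yields, via \eqref{eq:C-c-A-def}, the operators $C_1=P_1\otimes P_0^T+P_0\otimes P_1^T$ and $C_2=P_0\otimes P_0^T+P_1\otimes P_1^T$ on $\hohc$. Since $f_1+f_2\equiv 1$, one has $C_1+C_2=\ler{\sum_y E(y)}\otimes\ler{\sum_x E^T(x)}=I_{\hohc}$, so for every coupling $\Pi\in\cC(\rho,\omega)$ the i.i.d.\ loss equals $\tr_{\hohc}\lesq{\Pi(C_1+C_2)}=\tr_{\hohc}\lesq{\Pi}=1$; hence the infimum of \eqref{eq:IID-primal-problem} is exactly $1$.

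For the linearized value I would exhibit two couplings that zero out the two costs separately. Both $C_1$ and $C_2$ are diagonal in the basis $\lers{\ket{y}\otimes e_x^*}_{x,y\in\{0,1\}}$ of $\hohc$, so $\tr_{\hohc}\lesq{\Pi C_k}$ depends only on the diagonal of $\Pi$; moreover $C_1$ is supported on the sector $y\neq x$ and $C_2$ on the sector $y=x$. The coupling $\Pi_{\id}=\Ket{\sqrt{\rho}}\Bra{\sqrt{\rho}}$ induced by the identity channel has its diagonal concentrated on $y=x$, so $\tr_{\hohc}\lesq{\Pi_{\id}C_1}=0$, while the coupling $\Pi_{\mathrm{flip}}=(\Phi\otimes\id)\ler{\Ket{\sqrt{\rho}}\Bra{\sqrt{\rho}}}$ induced by the bit-flip channel $\Phi(\cdot)=\sigma_x(\cdot)\sigma_x$ has its diagonal on $y\neq x$, so $\tr_{\hohc}\lesq{\Pi_{\mathrm{flip}}C_2}=0$. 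Both channels are unital and trace-preserving and $\rho=\omega=\tfrac{1}{2}I$, so $\Pi_{\id},\Pi_{\mathrm{flip}}\in\cC(\rho,\omega)$; taking $\Gamma=\Pi_{\id}\otimes\Pi_{\mathrm{flip}}$, whose one-body marginals $(\Gamma)_{2k-1}=\omega$ and $(\Gamma)_{2k}=\rho^T$ hold by construction, shows that the infimum of \eqref{eq:linear-primal-problem} is $0<1$.

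The step that needs care is the comparison of the quantum minima with this "classical" diagonal bookkeeping: I must check that $\tr_{\hohc}\lesq{\Pi C_k}$ genuinely ignores the off-diagonal part of $\Pi$, which is immediate from diagonality of $C_k$, and, more importantly, verify the marginal constraints $\tr_{\cH^*}\lesq{\Pi}=\omega$ and $\tr_{\cH}\lesq{\Pi}=\rho^T$ for the two explicit couplings. These are short computations using $\sqrt{\rho}=\tfrac{1}{\sqrt{2}}I$ together with the unitality of the two channels. I expect no further obstacle; the only modeling choice worth flagging is that the witnessing example is deliberately factorized, which is legitimate since the statement asks only for the existence of \emph{some} $C_c^{(\cA)}$ and states realizing the strict gap.
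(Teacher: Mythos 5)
Your argument is correct, but it follows a genuinely different route from the paper. The paper also reduces to a factorized cost, but it uses $K=3$ with $\cA=\lers{\sigma_1,\sigma_2,\sigma_3}$, the commuting states $\rho=\tfrac12(I+\tfrac12\sigma_z)$, $\omega=\tfrac12(I-\tfrac12\sigma_z)$, and the cost $c(x,y)=\sum_k\abs{x_k-y_k}^p$; the strict gap ($2^p$ versus $2^p-\tfrac{\sqrt3-1}{2}\,2^p$) is then extracted from the explicit closed-form Wasserstein distances proved later in Section 3 (Theorem~\ref{thm:D_symm_for_commuting_states}, Theorem~\ref{thm:D_z_for_xy_states}, Proposition~\ref{prop:D_z_for_z_states}), so the separation there is a by-product of genuinely quantum computations with a natural cost. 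Your construction is more elementary and entirely self-contained: by engineering $C_1+C_2=I_{\hohc}$ you pin the i.i.d.\ value at exactly $1$ for \emph{every} coupling, and the two explicit couplings $\Pi_{\id}$ and $\Pi_{\mathrm{flip}}$ (both legitimate elements of $\cC(\tfrac12 I,\tfrac12 I)$, and $\Gamma=\Pi_{\id}\otimes\Pi_{\mathrm{flip}}$ clearly satisfies the marginal constraints \eqref{eq:linear-primal-constraint}) drive the relaxed value to $0$, which is the largest possible gap. What the paper's example buys is that the separation occurs already for the distinguished transport costs it studies throughout; what yours buys is independence from all of Section 3 and a maximal, transparent gap. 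One small repair is needed: the paper requires the classical cost $c$ to be lower semicontinuous on $\R^K\times\R^K$, and $f_2(x,y)=\ind{x=y}$ is upper, not lower, semicontinuous, so $c=f_1+f_2$ as written violates the hypothesis of \eqref{eq:C-c-A-def}. Since only the values of $c$ on the spectrum $\lers{0,1}^2\times\lers{0,1}^2$ enter the spectral integral, replacing $f_2$ by the continuous function $(x,y)\mapsto\max\ler{0,1-\abs{x-y}}$ leaves $C_1$ and $C_2$ unchanged and restores the hypothesis; with that substitution the proof is complete.
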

\begin{proof}
    Let $\cH=\C^2,$ and with the notations introduced at the beginning of this section, let $K=3,$ and 
    $$c(x_1,x_2,x_3,y_1,y_2,y_3):=\abs{x_1-y_1}^p+\abs{x_2-y_2}^p+\abs{x_3-y_3}^p$$ for some parameter $p \geq 1.$ Let $\cA=\{\sigma_1, \sigma_2, \sigma_3\},$ where 
    \begin{align} \label{eq:Pauli-def}
        \sigma_1=\sigma_x=\lesq{\ba{cc} 0 & 1 \\ 1 & 0 \ea}, \, \quad
        \sigma_2=\sigma_y=\lesq{\ba{cc} 0 & -i \\ i & 0 \ea}, \, \quad
        \sigma_3=\sigma_z=\lesq{\ba{cc} 1 & 0 \\ 0 & -1 \ea}, 
    \end{align}
    that is, we set $\cA$ to be the collection of the Pauli matrices. Finally, let 
    \begin{align} \label{eq:rho-omega-concrete}
    \rho:=1/2(I+1/2\sigma_z) \text{ and } \omega:=1/2(I-1/2\sigma_z).    
    \end{align}
    The cost operator $C_c^{(\cA)}$ given by \eqref{eq:C-c-A-def} factorizes now the following way:
    \begin{align} \label{eq:C-c-A-factorized}
        C_{c}^{(\cA)}&=\sum_{k=1}^3 I_{\hohc}^{\otimes (k-1)} \otimes \abs{\sigma_k\otimes I^T-I\otimes \sigma_k^T}^p \otimes I_{\hohc}^{\otimes (3-k)}.
    \end{align}
    Thus, on one hand, as we noted in \eqref{eq:primal-factorized} and \eqref{eq:primal-factorized-2}, the task \eqref{eq:IID-primal-problem} takes the form
    \begin{equation}\begin{aligned}
        \text{minimize } \tr_{\hohc} \lesq{\Pi \ler{\sum_{k=1}^3 \abs{\sigma_k\otimes I^T-I\otimes \sigma_k^T}^p}}
    \end{aligned}\end{equation}
    where $\Pi$ runs over the set of all couplings of $\rho, \omega \in \sh.$ On the other hand, as we noted in \eqref{eq:linear-factorized}, the task \eqref{eq:linear-primal-problem} takes the form
    \begin{equation}\begin{aligned}
        \text{minimize } \sum_{k=1}^3\tr_{\hohc} \lesq{\Pi_k \abs{\sigma_k\otimes I^T-I\otimes \sigma_k^T}^p }
    \end{aligned}\end{equation}
    where all $\Pi_k$ run over the set of all couplings of $\rho, \omega \in \sh.$ Taking into account the concrete form of $\rho$ and $\omega$ (see \eqref{eq:rho-omega-concrete}), we conclude that the minimum of \eqref{eq:IID-primal-problem} takes the form
    \begin{align}
        \min \lers{\tr_{\hohc}\lesq{\Pi \ler{\sum_{k=1}^3 \abs{\sigma_k\otimes I^T-I\otimes \sigma_k^T}^p}} \, \middle| \, \Pi \in \cC\ler{1/2(I+1/2\sigma_z), 1/2(I-1/2\sigma_z)}} \nonumber \\
        =2^p\left(1+\frac{1}{2}-\sqrt{\left(1-\frac{1}{2}\right)\left(1-\frac{1}{2}\right)}\right)=2^p,    
    \end{align}
    where we made use of the fact that $\rho$ and $\omega$ given by \eqref{eq:rho-omega-concrete} commute, and used Theorem~\ref{thm:D_symm_for_commuting_states} from the subsequent section where we give an explicit closed form for the optimal transport cost between commuting states. On the other hand, using again the the concrete form of $\rho$ and $\omega$ we conclude that the minimum of \eqref{eq:linear-primal-problem} takes the following form
     \begin{align}\label{eq:genprimal_halfway}
       \min \lers{\sum_{k=1}^3\tr_{\hohc}\lesq{\Pi_k \abs{\sigma_k\otimes I^T-I\otimes \sigma_k^T}^p} \, \middle| \, \Pi_1,\Pi_2, \Pi_3 \in \cC\ler{1/2(I+1/2\sigma_z), 1/2(I-1/2\sigma_z)}} \nonumber \\
       =\sum_{k=1}^3
       \min\lers{ \tr_{\hohc} \lesq{\Pi_k \abs{\sigma_k\otimes I^T-I\otimes \sigma_k^T}^p} \middle| \, \Pi_k \in \cC\ler{1/2(I+1/2\sigma_z), 1/2(I-1/2\sigma_z)}}.
    \end{align}
    The first two terms of the sum on the right-hand side of \eqref{eq:genprimal_halfway} can be computed explicitly by Theorem~\ref{thm:D_z_for_xy_states} of the next Section (with appropriate changes of basis), while the third term is given by Proposition~\ref{prop:D_z_for_z_states} there. Accordingly,
    \begin{align} 
    \sum_{k=1}^3
       \min\lers{ \tr_{\hohc} \lesq{\Pi_k \abs{\sigma_k\otimes I^T-I\otimes \sigma_k^T}^p} \middle| \, \Pi_k \in \cC\ler{1/2(I+1/2\sigma_z), 1/2(I-1/2\sigma_z)}} \nonumber \\
       =2^p\left(1-\sqrt{1-\frac{1}{2^2}}\right)+2^{p-1}=2^p-\left(\frac{\sqrt{3}-1}{2}\right)2^p<2^p,
    \end{align}
    which completes the proof.
\end{proof}

\section{Examples of strong duality achieved and applications}

In this section, we will apply the Kantorovich duality results Theorem \ref{thm:linear-duality} and Corollary \ref{cor:factorized-duality} to prove the optimality of certain quantum couplings and operator Kantorovich potentials. We consider the case of quantum bits, that is, $\cH=\C^2,$ and the following transportation costs will be studied (with the notation introduced at the beginning of Section \ref{sec:Kantorovich-duality}):
\begin{enumerate}
    \item $K=3, \, \cA=\lers{\sigma_1, \sigma_2,\sigma_3},$ and $c(x,y)=\norm{x-y}_p^p,$ where $x,y \in \R^3,$ and $\norm{\cdot}_p$ is the $l_p$ norm there;
    \item $K=1, \, \cA=\lers{\sigma_3},$ and $c(x,y)=\abs{x-y}^p,$ where $x,y \in \R.$
\end{enumerate}

\subsection{Strong duality for commuting qubits and symmetric transport cost}
Let $K=3, \, \cA=\lers{\sigma_1, \sigma_2,\sigma_3},$ and $c(x,y)=\norm{x-y}_p^p$ for some parameter $p\geq 1.$ According to \eqref{eq:C-c-A-def}, the cost operator $C_c^{(\cA)}$ is the one given in \eqref{eq:C-c-A-factorized}, and the primal quantum optimal transport problem \eqref{eq:IID-primal-problem} reduces to
\begin{align}
    \text{minimize } \tr_{\hohc}\lesq{\Pi C_{\text{symm},p}} \text{ over all } \Pi \in \cC\ler{\rho, \omega}, 
\end{align}
where 
\begin{align}
    C_{\text{symm},p}=\sum_{k=1}^3 \abs{\sigma_k\otimes I^T-I\otimes \sigma_k^T}^p.
\end{align}
The cost operator $C_{\text{symm},p}$ can be computed explicitly:
    \begin{align}\label{eq:Csymm}
        C_{\text{symm},p}=2^{p+1}I\otimes I^T -2^p\Ket{I}\Bra{I}
        =\lesq{\ba{cccc}
        2^{p} & 0 & 0 & -2^{p}\\
        0 & 2^{p+1} & 0 & 0\\
        0 & 0 & 2^{p+1} & 0\\
        -2^{p} & 0 & 0 & 2^{p}\ea}.
    \end{align}
The matrix form of the symmetric cost operator $C_{\text{symm},p}$ is in fact basis-invariant, that is, 
\begin{align} \label{eq:C-symm-unitary-invariance}
    \ler{U \otimes \ler{U^*}^T} C_{\text{symm},p} \ler{U^* \otimes U^T}=C_{\text{symm},p}
\end{align}
for every unitary $U$ acting on $\C^2.$ Consequently, for commuting quantum bits one can assume without loss of generality that both qubits commute with $\sigma_z$.
\par
In the following Proposition \ref{prop:D_symm_for_commuting_states} and Theorem \ref{thm:D_symm_for_commuting_states} we determine the optimal couplings of commuting quantum bits with respect to the transportation cost described by $C_{\text{symm},p},$ and we give a simple closed form for the induced $p$-Wasserstein distance $D_{\text{symm},p}.$ We recall that according to the recipe given in \cite[Section 3]{BPTV-p-Wass}, the $p$-Wasserstein distance $D_{\text{symm},p}$ corresponding to the cost operator $C_{\text{symm},p}$ is defined by
\begin{align} \label{eq:D-symm-p-def}
    D_{\text{symm},p}=\ler{\min_{\Pi \in \cC(\rho, \omega)} \lers{\tr_{\hohc}\lesq{\Pi C_{\text{symm},p}}}}^{\frac{1}{p}}.
\end{align}

\begin{proposition}\label{prop:D_symm_for_commuting_states}
    Let
    \begin{equation}\label{eq:rho_alpha}\begin{aligned}
        \rho(\alpha):=\frac{1}{2}\left(I+\alpha\sigma_z\right)=\begin{pmatrix}\frac{1+\alpha}{2} & 0 \\ 0 & \frac{1 - \alpha}{2} \end{pmatrix},
    \end{aligned}\end{equation}
    for $\alpha \in [-1,1].$ Then the optimal coupling of $\rho(\alpha)$ and $\rho(\beta)$ is given by \eqref{eq:rho-alpha-beta-optimal-coupling}, and    
    \begin{equation}\label{eq:D_comm}\begin{aligned}
        D^p_{\text{symm},p}(\rho(\alpha),\rho(\beta))=2^p\left(1+\frac{1}{2}\abs{\alpha-\beta}-\sqrt{(1+\min(\alpha,\beta))(1-\max(\alpha,\beta))}\right).
    \end{aligned}\end{equation}
\end{proposition}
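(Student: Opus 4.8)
The plan is to turn the primal problem into an elementary finite-dimensional optimization. Since $\rho(\alpha)$ and $\rho(\beta)$ are already diagonal in the eigenbasis $\{e_0,e_1\}$ of $\sigma_z$, I would work in the induced ordered basis of $\hohc$ in which $C_{\text{symm},p}$ has the matrix form \eqref{eq:Csymm}, labelling its basis vectors $1,2,3,4$ so that the corner entries $(1,4),(4,1)$ carry the weight $-2^p$ and the entries $(2,2),(3,3)$ carry $2^{p+1}$. Writing a coupling $\Pi\in\cC\ler{\rho(\alpha),\rho(\beta)}$ as a $4\times 4$ positive semidefinite matrix, the marginal constraints translate into the four diagonal equations $\Pi_{11}+\Pi_{22}=\tfrac{1+\beta}{2}$, $\Pi_{33}+\Pi_{44}=\tfrac{1-\beta}{2}$, $\Pi_{11}+\Pi_{33}=\tfrac{1+\alpha}{2}$, $\Pi_{22}+\Pi_{44}=\tfrac{1-\alpha}{2}$, together with two vanishing conditions on off-diagonal sums. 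The key structural observation is that these off-diagonal conditions never involve the corner entry $\Pi_{14}$, so that entry is free apart from positivity.

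Next I would compute the loss. From \eqref{eq:Csymm} the functional collapses to
\[
\tr_{\hohc}\lesq{\Pi C_{\text{symm},p}}=2^p\ler{\Pi_{11}+\Pi_{44}}+2^{p+1}\ler{\Pi_{22}+\Pi_{33}}-2^{p+1}\Re\Pi_{14},
\]
so the cost depends on $\Pi$ only through the four diagonal entries and $\Re\Pi_{14}$. Solving the marginal equations leaves a single free diagonal parameter $t:=\Pi_{11}$, with $\Pi_{44}=t-\tfrac{\alpha+\beta}{2}$, $\Pi_{22}=\tfrac{1+\beta}{2}-t$, $\Pi_{33}=\tfrac{1+\alpha}{2}-t$; substituting turns the loss, up to the additive constant $2^{p-1}(\alpha+\beta)$, into $2^{p+1}\ler{1-t-\Re\Pi_{14}}$.

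The minimization then proceeds in two stages. For $t$ fixed the loss decreases in $\Re\Pi_{14}$, while positivity of the $\{1,4\}$ principal $2\times 2$ submatrix forces $\Re\Pi_{14}\le\sqrt{\Pi_{11}\Pi_{44}}$; this bound is attained by the sparse coupling whose only nonzero off-diagonal entry is the real corner $\Pi_{14}=\sqrt{\Pi_{11}\Pi_{44}}$, which is manifestly positive semidefinite and meets all marginal conditions. This reduces the task to maximizing $f(t)=t+\sqrt{t\ler{t-\tfrac{\alpha+\beta}{2}}}$ over the feasible interval cut out by $\Pi_{22},\Pi_{33},\Pi_{44}\ge 0$, namely $t\in\lesq{\max\ler{0,\tfrac{\alpha+\beta}{2}},\,\tfrac{1+\min(\alpha,\beta)}{2}}$. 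Since $f'(t)=1+\tfrac{\Pi_{11}+\Pi_{44}}{2\sqrt{\Pi_{11}\Pi_{44}}}>0$, the map $f$ is strictly increasing, so the optimum sits at the right endpoint $t^*=\tfrac{1+\min(\alpha,\beta)}{2}$. Reading off the entries at $t^*$ produces the optimal coupling asserted in the statement, with corner value $\Re\Pi_{14}=\tfrac12\sqrt{(1+\min(\alpha,\beta))(1-\max(\alpha,\beta))}$, and substituting into the loss and then into \eqref{eq:D-symm-p-def} yields exactly \eqref{eq:D_comm}.

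I expect the main obstacle to be the positivity bookkeeping: one must verify carefully that restricting to the sparse one-parameter family loses no generality (that among couplings with a prescribed diagonal the corner is the only free lever and its real part is capped by $\sqrt{\Pi_{11}\Pi_{44}}$), and that the feasible $t$-interval is always nonempty with $t^*$ feasible, uniformly in the sign of $\alpha+\beta$ and in the ordering of $\alpha,\beta$ — the latter being absorbed by the $\min/\max$ in the final formula. As an independent certificate of optimality, and to produce the optimal dual solution as well, I would invoke the strong duality of Corollary \ref{cor:factorized-duality}: exhibiting bounded self-adjoint potentials $X,Y$ with $Y\otimes I_{\cH}^T+I_{\cH}\otimes X^T\le C_{\text{symm},p}$ whose dual value $\tr_{\cH}\lesq{\rho(\beta)Y}+\tr_{\cH}\lesq{\rho(\alpha)X}$ matches the primal minimum pins the value from below and confirms that the exhibited coupling is optimal.
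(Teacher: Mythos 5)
Your proof is correct, and it takes a genuinely different route from the paper. The paper proceeds by guess-and-verify on both sides of the duality: it writes down the candidate coupling $\Pi(\alpha,\beta)$ of \eqref{eq:rho-alpha-beta-optimal-coupling}, checks positivity and the marginals to get the upper bound, and then exhibits explicit Kantorovich potentials $X_1,Y_1$ and $X_2,Y_2$ (for $\abs{\alpha},\abs{\beta}\neq 1$) satisfying $C_{\text{symm},p}-Y_i\otimes I^T-I\otimes X_i^T\geq 0$ whose dual value matches, with the pure-state case handled separately via uniqueness of the coupling. You instead solve the primal problem head-on: the marginal constraints leave one free diagonal parameter $t=\Pi_{11}$ and do not constrain the corner entry $\Pi_{14}$, the loss depends only on the diagonal and $\Re\Pi_{14}$, the principal-minor bound $\Re\Pi_{14}\leq\sqrt{\Pi_{11}\Pi_{44}}$ is saturated by a sparse feasible coupling, and the resulting one-variable function $t+\sqrt{t\ler{t-\tfrac{\alpha+\beta}{2}}}$ is increasing on the feasible interval $\lesq{\max\ler{0,\tfrac{\alpha+\beta}{2}},\tfrac{1+\min(\alpha,\beta)}{2}}$. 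I checked the bookkeeping you flagged as the main risk: the off-diagonal marginal conditions are $\Pi_{12}+\Pi_{34}=0$ and $\Pi_{13}+\Pi_{24}=0$, so they indeed never touch $\Pi_{14}$, and evaluating at the right endpoint $t^*$ reproduces \eqref{eq:rho-alpha-beta-optimal-coupling} and \eqref{eq:D_comm} exactly. What your approach buys is that the optimal coupling is \emph{derived} rather than guessed, and the degenerate pure-state cases ($\abs{\alpha}=1$ or $\abs{\beta}=1$) are absorbed uniformly since the feasible interval simply collapses, with no blow-up anywhere; what it does not deliver is the explicit optimal dual potentials, which the paper's argument produces as a by-product and which are part of what the section advertises ("optimal solutions of both the primal and the dual problem") --- you would still need to carry out the duality step you sketch at the end to obtain those.
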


\begin{proof}
    By using the symmetry mentioned above, one could also assume without losing generality that e.g. $\alpha\geq\beta$ and then arrive to \eqref{eq:D_comm} without the extrema. Instead, for completeness we will prove \eqref{eq:D_comm} directly. Let $z_-:=\min(\alpha,\beta)$ and $z_+:=\max(\alpha,\beta)$, then let
    \begin{align} \label{eq:rho-alpha-beta-optimal-coupling}
        \Pi(\alpha,\beta):=\frac{1}{2}\lesq{\ba{cccc}
        1+z_{-} & 0 & 0 & \sqrt{(1+z_-)(1-z_+)}\\
        0 & \max(\beta-\alpha,0) & 0 & 0\\
        0 & 0 & \max(\alpha-\beta,0) & 0\\
        \sqrt{(1+z_-)(1-z_+)} & 0 & 0 & 1-z_+\ea}.
    \end{align}
    The matrix $\Pi(\alpha,\beta)$  is clearly hermitian and is positive-semidefinite by Sylvester's criterion, since all principal minors of $\Pi(\alpha,\beta)$ are nonnegative. It is easy to check that $\tr_1 \lesq{\Pi(\alpha,\beta)}=\rho(\alpha)^T$ while $\tr_2 \lesq{ \Pi(\alpha,\beta)}=\rho(\beta)$ (and consequently, $\tr \lesq{\Pi(\alpha,\beta)}=1$), which demonstrate that $\Pi(\alpha,\beta)$ is a coupling of $\rho(\alpha)$ and $\rho(\beta).$
    It follows that
    \begin{align} \label{eq:D_upperbound}
         D^p_{\text{symm},p}(\rho(\alpha),\rho(\beta))&\leq \tr \lesq{C_{\text{symm},p}\Pi(\alpha,\beta)}
         =2^{p+1}-2^p\bra{\bra{I}}\Pi(\alpha,\beta)\ket{\ket{I}} \nonumber \\
         &=2^{p+1}-2^{p-1}\left((1+z_-)+(1-z_+)+2\sqrt{(1+z_-)(1-z_+)} \right)\\
         &=2^p\left(1+\frac{1}{2}\abs{\alpha-\beta}-\sqrt{(1+\min(\alpha,\beta))(1-\max(\alpha,\beta))}\right).
    \end{align}
    On the other hand, if $\abs{\alpha}\neq 1$ and $\abs{\beta}\neq 1$ consider
    \begin{equation}\begin{aligned}
        X_1=\lesq{\ba{cc} -2^p\sqrt{\frac{1-\beta}{1+\alpha}}-2^p & 0 \\ 0 & 0 \ea},\quad Y_1=\lesq{\ba{cc} 2^{p+1} & 0 \\ 0 & 2^p-2^p\sqrt{\frac{1+\alpha}{1-\beta}} \ea},
    \end{aligned}\end{equation}
    and
    \begin{equation}\begin{aligned}
        X_2=\lesq{\ba{cc} 2^{p+1} & 0 \\ 0 & 2^p-2^p\sqrt{\frac{1+\beta}{1-\alpha}} \ea},\quad Y_2=\lesq{\ba{cc} -2^p\sqrt{\frac{1-\alpha}{1+\beta}}-2^p & 0 \\ 0 & 0 \ea}.
    \end{aligned}\end{equation}
    Clearly, $X_1$, $X_2$, $Y_1$ and $Y_2$ are self-adjoint. It is also evident by Sylvester's criterion, that
    \begin{equation}\begin{aligned}
        C_{\text{symm},p}-Y_1\otimes I^T-I\otimes X_1^T=
        \lesq{\ba{cccc}2^p\sqrt{\frac{1-\beta}{1+\alpha}} & 0 & 0 & -2^p\\
        0 & 0 & 0 & 0\\
        0 & 0 & 2^{p+1}+2^p\sqrt{\frac{1-\beta}{1+\alpha}}+2^p\sqrt{\frac{1+\alpha}{1-\beta}} & 0\\
        -2^p & 0 & 0 & 2^p\sqrt{\frac{1+\alpha}{1-\beta}}\ea}\geq 0,
    \end{aligned}\end{equation}
    and
    \begin{equation}\begin{aligned}
        C_{\text{symm},p}-Y_2\otimes I^T-I\otimes X_2^T=
        \lesq{\ba{cccc}2^p\sqrt{\frac{1-\alpha}{1+\beta}} & 0 & 0 & -2^p\\
        0 & 2^{p+1}+2^p\sqrt{\frac{1-\alpha}{1+\beta}}+2^p\sqrt{\frac{1+\beta}{1-\alpha}} & 0 & 0\\
        0 & 0 & 0 & 0\\
        -2^p & 0 & 0 & 2^p\sqrt{\frac{1+\beta}{1-\alpha}}\ea}\geq 0.
    \end{aligned}\end{equation}
    Therefore,
    \begin{equation}\begin{aligned}\label{eq:D_lowerbound}
        &D^p_{\text{symm},p}(\rho(\alpha),\rho(\beta))\geq  \max\lers{\tr\lesq{ X_1\rho(\alpha)}+\tr \lesq{ Y_1\rho(\beta)},\tr \lesq{X_2\rho(\alpha)}+\tr \lesq{Y_2\rho(\beta)}}\\
        =&\max\lers{-2^{p}\sqrt{(1+\alpha)(1-\beta)}+2^p+2^{p-1}(\beta-\alpha),-2^{p}\sqrt{(1+\beta)(1-\alpha)}+2^p+2^{p-1}(\alpha-\beta)}\\
        =&2^p\left(1+\frac{1}{2}\abs{\alpha-\beta}-\sqrt{(1+\min(\alpha,\beta))(1-\max(\alpha,\beta))}\right).
    \end{aligned}\end{equation}
    For mixed states, combining \eqref{eq:D_upperbound} and \eqref{eq:D_lowerbound} completes the proof. If either state is pure then it is known that there is only one coupling, the tensor product, and therefore \eqref{eq:D_upperbound} is an equality rather than an upper bound. 
\end{proof}

One might observe the clear structure of the optimal coupling $\Pi(\alpha, \beta)$ in \eqref{eq:rho-alpha-beta-optimal-coupling}. If we consider the two-point space $S=\{s, s'\}$ with an arbitrary bona fide metric and associate the classical probability distribution $\mu(\alpha)=\frac{1+\alpha}{2} \delta_{q} + \frac{1-\alpha}{2} \delta_{q'}$ to $\rho(\alpha)$ (for all $\alpha \in [-1,1]$), and solve the classical optimal transport problem between the measures $\rho(\alpha)$ and $\rho(\beta),$ then we get that the optimal transport plan is $\pi_0=\frac{1+z_{-}}{2} \delta_{s,s}+\frac{\max(\alpha-\beta,0)}{2} \delta_{s,s'}+\frac{\max(\beta-\alpha,0)}{2} \delta_{s',s}+\frac{1-z_{+}}{2} \delta_{s',s'}.$ Copying this result in the diagonal of a $4\times 4$ matrix in an appropriate way and then choosing the modulus of the top-right entry as large a possible gives the optimal coupling $\Pi(\alpha, \beta).$
\par
As for the shape of the optimal Kantorovich potentials, we cannot provide such a clear picture, but one might note that all the potentials are diagonal matrices, and one of the diagonal elements of $X$ or $Y$ can be chosen to be $0$ by a shift invariance property. The remaining three elements are adjusted so that $C_{\text{symm},p}-Y\otimes I^T - I \otimes X^T$ remains positive but becomes orthogonal to $\Pi(\alpha, \beta).$

\begin{theorem}\label{thm:D_symm_for_commuting_states}
    Let $\rho$ denote now the standard Bloch parametrization of quantum bits, that is,
    \begin{equation}\label{eq:rho_r}\begin{aligned}
        \rho(\vec{r}):=\frac{1}{2}\left(I+\vec{r}\cdot \vec{\sigma}\right), \text{ where } \vec{\sigma}=\ler{\sigma_1,\sigma_2, \sigma_3},
    \end{aligned}\end{equation}
    and let us assume that $\vec{r}_1$ and $\vec{r}_2$ are scalar multiples of each other implying that $\rho(\vec{r}_1)$ and $\rho(\vec{r}_2)$ commute. Then
    \begin{equation}\label{eq:D_comm_gen}\begin{aligned}
        D^p_{\text{symm},p}(\rho(\vec{r}_1),\rho(\vec{r}_2))
        =2^p\left(1+\frac{1}{2}\abs{\vec{r}_1-\vec{r}_2}-\sqrt{\ler{1+\frac{\vec{r}_1 \cdot \vec{r}_2}{\max\{\abs{\vec{r_1}},\abs{\vec{r_2}}\}}}\ler{1-\max\{\abs{\vec{r_1}},\abs{\vec{r_2}}\}}}\right).
    \end{aligned}\end{equation}
\end{theorem}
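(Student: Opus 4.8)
The plan is to reduce the general commuting case to the diagonal (along the $z$-axis) case already settled in Proposition~\ref{prop:D_symm_for_commuting_states}, exploiting the basis-invariance \eqref{eq:C-symm-unitary-invariance} of the cost operator $C_{\text{symm},p}$, and then to rewrite the scalar formula \eqref{eq:D_comm} in terms of the Bloch vectors via a short algebraic identity. Since $\vec r_1$ and $\vec r_2$ are collinear, I would first write $\vec r_1 = \alpha \hat n$ and $\vec r_2 = \beta \hat n$ with $\hat n \in \R^3$ a unit vector and $\alpha,\beta \in [-1,1]$. By the standard two-to-one homomorphism $\sut \to SO(3)$, there is a unitary $U$ on $\C^2$ with $U \ler{\hat n \cdot \vec\sigma} U^* = \sigma_3$, whence $U \rho(\vec r_1) U^* = \tfrac12(I+\alpha\sigma_3) = \rho(\alpha)$ and $U \rho(\vec r_2) U^* = \rho(\beta)$ in the notation \eqref{eq:rho_alpha}.

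Second, I would show that this conjugation leaves $D_{\text{symm},p}$ unchanged. The map $\Pi \mapsto \ler{U \otimes (U^*)^T}\Pi\ler{U^* \otimes U^T}$ is a unitary conjugation on $\hohc$, since $U \otimes (U^*)^T$ is unitary with adjoint $U^* \otimes U^T$; in particular it carries states to states bijectively. A direct partial-trace computation shows it transforms the marginals of a coupling as $\omega \mapsto U\omega U^*$ and $\rho^T \mapsto (U^*)^T \rho^T U^T = (U\rho U^*)^T$, so it is a bijection $\cC(\rho,\omega) \to \cC(U\rho U^*, U\omega U^*)$. Moreover, by \eqref{eq:C-symm-unitary-invariance} the transport cost $\tr_{\hohc}\lesq{\Pi \, C_{\text{symm},p}}$ is invariant under it. Consequently $D^p_{\text{symm},p}(\rho(\vec r_1),\rho(\vec r_2)) = D^p_{\text{symm},p}(\rho(\alpha),\rho(\beta))$, and Proposition~\ref{prop:D_symm_for_commuting_states} evaluates the right-hand side in closed form.

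Finally, I would match the two formulas. Setting $M := \max\{\abs\alpha,\abs\beta\} = \max\{\abs{\vec r_1},\abs{\vec r_2}\}$, one has $\abs{\vec r_1 - \vec r_2} = \abs{\alpha - \beta}$ (as $\vec r_1 - \vec r_2 = (\alpha-\beta)\hat n$) and $\vec r_1 \cdot \vec r_2 = \alpha\beta$, so it suffices to establish the scalar identity $(1+\min(\alpha,\beta))(1-\max(\alpha,\beta)) = \ler{1 + \tfrac{\alpha\beta}{M}}(1-M)$. Expanding both sides reduces this to $\abs{\alpha-\beta} = (M^2 - \alpha\beta)/M$; assuming without loss of generality $M = \abs\alpha$, the right side equals $\mathrm{sign}(\alpha)(\alpha-\beta)$, and the inequality $\abs\alpha \ge \abs\beta$ forces $\alpha(\alpha-\beta)\ge 0$, which gives the claim. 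Substituting into \eqref{eq:D_comm} then yields \eqref{eq:D_comm_gen}.

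I expect the main obstacle to lie in the second step: one must track the dual-space and transpose bookkeeping carefully so that the unitary conjugating the coupling on $\hohc$ is precisely the one featuring in the invariance \eqref{eq:C-symm-unitary-invariance}, and verify that the induced map on couplings genuinely preserves both marginals and the cost. The degenerate case $\vec r_1 = \vec r_2 = 0$ (where $M=0$) must be excluded or handled separately by the trivial value $D_{\text{symm},p}=0$ for equal states.
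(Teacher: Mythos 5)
Your proposal is correct and follows exactly the route the paper takes, which simply declares the result ``immediate from the basis-independence of \eqref{eq:Csymm} and Proposition~\ref{prop:D_symm_for_commuting_states}''; you have merely written out the details (the $\sut \to SO(3)$ reduction, the marginal/cost bookkeeping under $\Pi \mapsto (U\otimes (U^*)^T)\Pi(U^*\otimes U^T)$, and the scalar identity matching \eqref{eq:D_comm} to \eqref{eq:D_comm_gen}) that the authors leave implicit. Your observation that the case $\vec r_1=\vec r_2=0$ makes the right-hand side of \eqref{eq:D_comm_gen} formally ill-defined ($0/0$) is a fair point not addressed in the paper, and handling it via $D_{\text{symm},p}=0$ for equal maximally mixed states is the right fix.
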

\begin{proof}
    Immediate from the basis-independence of \eqref{eq:Csymm} and Proposition~\ref{prop:D_symm_for_commuting_states}.
\end{proof}

It is an interesting phenomenon that, according to many of the approaches including the one we follow in the present work \cite{DPT-AHP,DPT-lecture-notes}, the quantum Wasserstein distance of states is not a bona fide metric, for example, states may have positive distance from themselves. As a response to this phenomenon, De Palma and Trevisan introduced quadratic quantum Wasserstein divergences \cite{DPT-lecture-notes}, which are appropriately modified versions of quadratic quantum Wasserstein distances, to eliminate self-distances.
Their definition of the quadratic quantum Wasserstein divergence $d_{\cA,2}$ corresponding to the collection $\cA=\lers{A_1, \dots, A_K}$ of observables is the following:
\begin{align} \label{eq:quadratic-divergence-def}
d_{\cA,2}\ler{\rho, \omega}:=\ler{D_{\cA,2}^2\ler{\rho, \omega}-\frac{1}{2}\ler{D_{\cA,2}^2\ler{\rho, \rho}+D_{\cA,2}^2\ler{\omega, \omega}}}^{\frac{1}{2}},
\end{align}
where 
\begin{align} \label{eq:quadratic-distance-def}
    D_{\cA,2}^2\ler{\rho, \omega}= \min\lers{\tr_{\hohc}\lesq{\Pi \ler{\sum_{k=1}^K \ler{A_k \otimes I^T + I \otimes A_k^T}^2}} \, \middle| \Pi \in \cC(\rho, \omega)}.
\end{align}
They conjectured that the divergences defined this way are genuine metrics on quantum state spaces \cite{DPT-lecture-notes}, and this conjecture has recently been justified under certain additional assumptions \cite{BPTV-metric-24}.

In the following corollary, we use Theorem \ref{thm:D_symm_for_commuting_states} to obtain a closed form for the quadratic divergence $d_{\text{symmm},2}=d_{\lers{\sigma_1,\sigma_2,\sigma_3},2}.$

\begin{corollary}\label{cor:d_for_commuting_states}
    Let $\rho$ denote the Bloch parametrization as in \eqref{eq:rho_r}, let the $2$-Wasserstein distance $D_{\text{symm},2}$ be given by \eqref{eq:D-symm-p-def}, and let the corresponding quadratic Wasserstein divergence $d_{\text{symm},2}$ be given by \eqref{eq:quadratic-divergence-def}.
    Assume that $\vec{r}_2$ is a scalar multiple of $\vec{r}_1$ and hence $\rho(\vec{r}_1)$ and $\rho(\vec{r}_2)$ commute. Then
    \begin{align}
        d^2_{\text{symm},2}(\rho(\vec{r}_1),\rho(\vec{r}_2))  \nonumber \\
        =2\left(\abs{\vec{r}_1-\vec{r}_2}+\sqrt{1-r_1^2}+\sqrt{1-r_2^2}-2\sqrt{\ler{1+\frac{\vec{r}_1 \cdot \vec{r}_2}{\max\{\abs{\vec{r_1}},\abs{\vec{r_2}}\}}}\ler{1-\max\{\abs{\vec{r_1}},\abs{\vec{r_2}}\}}}\right).
    \end{align}
\end{corollary}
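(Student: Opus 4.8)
The plan is to treat the corollary as a direct consequence of Theorem~\ref{thm:D_symm_for_commuting_states}, substituted into the divergence formula \eqref{eq:quadratic-divergence-def}. Writing $r_i := \abs{\vec{r}_i}$, the definition \eqref{eq:quadratic-divergence-def} specialized to the symmetric quadratic cost (i.e.\ with $D_{\text{symm},2}$ in place of the generic $D_{\cA,2}$) gives
\begin{align*}
d^2_{\text{symm},2}(\rho(\vec{r}_1),\rho(\vec{r}_2)) = D^2_{\text{symm},2}(\rho(\vec{r}_1),\rho(\vec{r}_2)) - \frac{1}{2}\ler{D^2_{\text{symm},2}(\rho(\vec{r}_1),\rho(\vec{r}_1)) + D^2_{\text{symm},2}(\rho(\vec{r}_2),\rho(\vec{r}_2))},
\end{align*}
so it suffices to evaluate one cross term and two self-terms, each of which is an instance of the closed form \eqref{eq:D_comm_gen} with $p=2$.

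First I would read off the cross term from \eqref{eq:D_comm_gen}: since $\vec{r}_1$ and $\vec{r}_2$ are assumed to be scalar multiples of each other, Theorem~\ref{thm:D_symm_for_commuting_states} applies verbatim and yields $D^2_{\text{symm},2}(\rho(\vec{r}_1),\rho(\vec{r}_2)) = 4\bigl(1 + \tfrac{1}{2}\abs{\vec{r}_1-\vec{r}_2} - \sqrt{(1 + \vec{r}_1\cdot\vec{r}_2/\max\{r_1,r_2\})(1-\max\{r_1,r_2\})}\bigr)$. Next I would compute the self-terms by applying the same theorem with the two Bloch vectors equal, which is legitimate because any vector is trivially a scalar multiple of itself. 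Setting $\vec{r}_2=\vec{r}_1$ the formula degenerates: $\abs{\vec{r}_1-\vec{r}_1}=0$, $\max\{r_1,r_1\}=r_1$, and $\vec{r}_1\cdot\vec{r}_1/r_1 = r_1$, so the radicand collapses to $(1+r_1)(1-r_1)=1-r_1^2$ and
\begin{align*}
D^2_{\text{symm},2}(\rho(\vec{r}_1),\rho(\vec{r}_1)) = 4\ler{1 - \sqrt{1-r_1^2}},
\end{align*}
with the analogous identity for $\vec{r}_2$.

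Finally I would substitute the three values into the displayed divergence formula and simplify. The additive constants cancel, since $4 - \tfrac{1}{2}(4+4)=0$; the two self-term radicals are subtracted with a minus-of-a-minus and therefore re-enter with a $+$ sign, each contributing $2\sqrt{1-r_i^2}$; and the cross term contributes $2\abs{\vec{r}_1-\vec{r}_2} - 4\sqrt{(1 + \vec{r}_1\cdot\vec{r}_2/\max\{r_1,r_2\})(1-\max\{r_1,r_2\})}$. Factoring out $2$ then reproduces the claimed expression exactly. I do not expect any genuine analytic obstacle: the entire nontrivial content already resides in Theorem~\ref{thm:D_symm_for_commuting_states}, and the remainder is pure bookkeeping. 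The only points deserving an explicit word are that the diagonal pairings satisfy the commutativity hypothesis of that theorem (immediate) and that the general formula \eqref{eq:D_comm_gen} degenerates correctly on the diagonal (the one-line computation above). The single place where care is needed is the sign bookkeeping in $D^2_{\text{symm},2} - \tfrac{1}{2}(D^2_{\text{symm},2}(\rho,\rho)+D^2_{\text{symm},2}(\omega,\omega))$, so I would double-check the cancellation of the constants and verify that the self-distance radicals enter with the opposite sign to the cross-distance radical.
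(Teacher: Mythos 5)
Your proposal is correct and follows essentially the same route as the paper: substitute the closed form of Theorem~\ref{thm:D_symm_for_commuting_states} into the definition \eqref{eq:quadratic-divergence-def} and carry out the cancellation of constants exactly as you describe. The only (immaterial) difference is that you obtain the self-distances $D^2_{\text{symm},2}(\rho(\vec{r}_i),\rho(\vec{r}_i))=4\ler{1-\sqrt{1-r_i^2}}$ by specializing \eqref{eq:D_comm_gen} to the diagonal, whereas the paper evaluates them directly via the canonical purification coupling as $\Bra{\sqrt{\rho(\vec{r}_i)}}C_{\text{symm},2}\Ket{\sqrt{\rho(\vec{r}_i)}}=2^3-2^2\lesq{\Tr\sqrt{\rho(\vec{r}_i)}}^2$; both yield the same value and the remaining bookkeeping coincides.
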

\begin{proof}
Direct computation shows that
    \begin{equation}\begin{aligned}
         \empty &d^2_{\text{symm},2}(\rho(\vec{r}_1),\rho(\vec{r}_2))\\
         =&D^2_{\text{symm},2}(\rho(\vec{r}_1),\rho(\vec{r}_2))-\frac{1}{2}\Tr C_{\text{symm},2}\left(\ket{\ket{\sqrt{\rho(\vec{r}_1)}}}\bra{\bra{\sqrt{\rho(\vec{r}_1)}}}+\ket{\ket{\sqrt{\rho(\vec{r}_2)}}}\bra{\bra{\sqrt{\rho(\vec{r}_2)}}}\right)\\
        =&D^2_{\text{symm},2}(\rho(\vec{r}_1),\rho(\vec{r}_2))-\frac{1}{2} \left(\bra{\bra{\sqrt{\rho(\vec{r}_1)}}}C_{\text{symm},2}\ket{\ket{\sqrt{\rho(\vec{r}_1)}}}+\bra{\bra{\sqrt{\rho(\vec{r}_2)}}}C_{\text{symm},2}\ket{\ket{\sqrt{\rho(\vec{r}_2)}}}\right)\\
        =&D^2_{\text{symm},2}(\rho(\vec{r}_1),\rho(\vec{r}_2))-\frac{1}{2} \left(2^3-2^2\abs{\left<\left<I\middle\|\sqrt{\rho(\vec{r}_1)}\right>\right>}+2^3-2^2\abs{\left<\left<I\middle\|\sqrt{\rho(\vec{r}_2)}\right>\right>}\right)\\
        =&D^2_{\text{symm},2}(\rho(\vec{r}_1),\rho(\vec{r}_2))-2^3+2 \left(\left[\Tr\sqrt{\rho(\vec{r}_1)}\right]^2+\left[\Tr\sqrt{\rho(\vec{r}_2)}\right]^2\right)\\
        =&D^2_{\text{symm},2}(\rho(\vec{r}_1),\rho(\vec{r}_2))-2^3+2 \left(\left[\sqrt{\frac{1+r_1}{2}}+\sqrt{\frac{1-r_1}{2}}\right]^2+\left[\sqrt{\frac{1+r_2}{2}}+\sqrt{\frac{1-r_2}{2}}\right]^2\right)\\
        =&D^2_{\text{symm},2}(\rho(\vec{r}_1),\rho(\vec{r}_2))-2^2+2\sqrt{1-r_1^2}+2\sqrt{1-r_2^2}\\
        =&2^2\left(1+\frac{1}{2}\abs{\vec{r}_1-\vec{r}_2}-\sqrt{(1+\frac{\vec{r}_1\vec{r}_2}{\max(r_1,r_2)})(1-\max(r_1,r_2))}\right)-2^2+2\sqrt{1-r_1^2}+2\sqrt{1-r_2^2}\\
        =&2\left(\abs{\vec{r}_1-\vec{r}_2}+\sqrt{1-r_1^2}+\sqrt{1-r_2^2}-2\sqrt{(1+\frac{\vec{r}_1\vec{r}_2}{\max(r_1,r_2)})(1-\max(r_1,r_2))}\right),
    \end{aligned}\end{equation}
    where we used Theorem \ref{thm:D_symm_for_commuting_states} in the penultimate equality, and we also used that the distance of a state from itself is given by $D_{\text{symm},2}^2(\rho,\rho)=\tr\lesq{C_{\text{symm},2} \ket{\ket{\sqrt{\rho}}}\bra{\bra{\sqrt{\rho}}}},$ see \cite[Corollary 1]{DPT-AHP}.
\end{proof}

Using the above obtained closed formula for the quadratic Wasserstein divergence $d_{\text{symm},2},$ we prove in the next proposition that even the squared quantity $d_{\text{symm},2}^2$ satisfies the triangle inequality if all three qubits involved commute with each other.

\begin{proposition}
    For commuting qubits $\rho,\sigma,\omega \in \cS\ler{\C^2}$ the triangle inequality 
    \begin{equation}\begin{aligned}
        d^2_{\text{symm},2}(\rho,\sigma)+d^2_{\text{symm},2}(\sigma,\omega) \geq d^2_{\text{symm},2}(\rho,\omega)
    \end{aligned}\end{equation}
    holds.
\end{proposition}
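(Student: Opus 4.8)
The plan is to collapse the three–state inequality to a one–parameter computation on the $\sigma_z$–axis, and then to display $d^2_{\text{symm},2}$ restricted to commuting states as a sum of two functions, each of which obeys the triangle inequality.

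First I would use that $\rho,\sigma,\omega$ pairwise commute. For qubits $[\rho(\vec r_1),\rho(\vec r_2)]=0$ is equivalent to $\vec r_1\times\vec r_2=0$, so pairwise commutation forces the three Bloch vectors to be collinear, i.e.\ to lie on one line through the origin. By the basis invariance of $C_{\text{symm},2}$ recorded in \eqref{eq:C-symm-unitary-invariance} I may rotate this line onto the $\sigma_z$–axis and write $\rho=\rho(a)$, $\sigma=\rho(b)$, $\omega=\rho(c)$ with $a,b,c\in[-1,1]$ in the notation of \eqref{eq:rho_alpha}. Combining Proposition~\ref{prop:D_symm_for_commuting_states} (which gives $D^2_{\text{symm},2}(\rho(\alpha),\rho(\beta))$, and for $\beta=\alpha$ the self–distance $D^2_{\text{symm},2}(\rho(\alpha),\rho(\alpha))=4(1-\sqrt{1-\alpha^2})$) with the divergence definition \eqref{eq:quadratic-divergence-def}, I obtain
\[
f(\alpha,\beta):=d^2_{\text{symm},2}(\rho(\alpha),\rho(\beta))=2\abs{\alpha-\beta}+m(\alpha,\beta),
\]
where
\[
m(\alpha,\beta):=2\sqrt{1-\alpha^2}+2\sqrt{1-\beta^2}-4\sqrt{(1+\min(\alpha,\beta))(1-\max(\alpha,\beta))};
\]
this agrees with Corollary~\ref{cor:d_for_commuting_states} once one checks the elementary identity $\bigl(1+\tfrac{ab}{\max(\abs a,\abs b)}\bigr)\bigl(1-\max(\abs a,\abs b)\bigr)=(1+\min(a,b))(1-\max(a,b))$ valid for collinear Bloch vectors.

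The statement then reduces to the triangle inequality for $f$ on $[-1,1]$. Since a sum of two triangle–inequality functions again satisfies the triangle inequality, and $(\alpha,\beta)\mapsto 2\abs{\alpha-\beta}$ trivially does, it suffices to prove $m(a,b)+m(b,c)\ge m(a,c)$. For this I introduce the monotone functions $P(\alpha)=\sqrt{1+\alpha}$ (increasing) and $Q(\alpha)=\sqrt{1-\alpha}$ (decreasing), so that $\sqrt{1-\alpha^2}=P_\alpha Q_\alpha$ and, since $P(\min)=\min(P_\alpha,P_\beta)$ and $Q(\max)=\min(Q_\alpha,Q_\beta)$, one has for $\alpha\le\beta$ the clean form $m(\alpha,\beta)=2P_\alpha Q_\alpha+2P_\beta Q_\beta-4P_\alpha Q_\beta$. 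Using the symmetry of $m$ I assume $a\le c$ and split on the location of the middle argument $b$. If $a\le b\le c$, direct cancellation gives $m(a,b)+m(b,c)-m(a,c)=4(P_b-P_a)(Q_b-Q_c)\ge0$, both factors nonnegative by monotonicity. If $b\le a\le c$, the same bookkeeping yields $4P_b(Q_b-Q_a)+4Q_c(P_a-P_b)\ge0$, and if $a\le c\le b$ it yields $4Q_b(P_b-P_c)+4P_a(Q_c-Q_b)\ge0$; in each ordering every summand is a product of two nonnegative quantities. This establishes the triangle inequality for $m$, hence for $f=2\abs{\cdot}+m$, which is exactly the claim.

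The only genuine work is the final step: spotting the decomposition $f=2\abs{\alpha-\beta}+m$ and checking that, once $m$ is rewritten through $P$ and $Q$, the difference $m(a,b)+m(b,c)-m(a,c)$ factors into manifestly nonnegative pieces. I expect the two ``extreme'' orderings to be the main obstacle, because there the difference is not a single product (as in the middle case) but a sum of two products, so the terms must be grouped in just the right way; the monotonicity of $P$ and $Q$ is precisely what makes every grouping nonnegative.
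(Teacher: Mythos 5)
Your proof is correct and follows essentially the same route as the paper's: reduce to states on the $\sigma_z$-axis, peel off the $2\abs{\alpha-\beta}$ part via the classical triangle inequality, and verify the remaining square-root part by a case analysis on the ordering of the three parameters. The only (cosmetic) differences are that you justify the collinearity reduction explicitly and that your $P,Q$ substitution lets the middle-ordering case factor directly as $4(P_b-P_a)(Q_b-Q_c)\ge 0$, where the paper instead verifies the same inequality by squaring.
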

\begin{proof}
    For commuting $\rho,\sigma,\omega$ it can be assumed that there are real numbers $-1\leq \alpha,\beta,\gamma\leq 1$, for which $\rho=\rho(\alpha),\sigma=\rho(\beta),\omega=\rho(\gamma)$ as in \eqref{eq:rho_alpha}. Thus by Corollary~\ref{cor:d_for_commuting_states} we have that
    \begin{equation}\begin{aligned}\label{eq:triang_ineq}
        &\frac{1}{2}\left(d^2_{\text{symm},2}(\rho,\sigma)+d^2_{\text{symm},2}(\sigma,\omega)-d^2_{\text{symm},2}(\rho,\omega)\right)\\=&\abs{\alpha-\beta}+\sqrt{1-\alpha^2}+\sqrt{1-\beta^2}-2\sqrt{(1+\min(\alpha,\beta))(1-\max(\alpha,\beta))}\\+&\abs{\beta-\gamma}+\sqrt{1-\beta^2}+\sqrt{1-\gamma^2}-2\sqrt{(1+\min(\beta,\gamma))(1-\max(\beta,\gamma))}\\-&\abs{\alpha-\gamma}-\sqrt{1-\alpha^2}-\sqrt{1-\gamma^2}+2\sqrt{(1+\min(\alpha,\gamma))(1-\max(\alpha,\gamma))}\\=&\abs{\alpha-\beta}+\abs{\beta-\gamma}-\abs{\alpha-\gamma}\\
        +&2\sqrt{1-\beta^2}-2\sqrt{(1+\min(\alpha,\beta))(1-\max(\alpha,\beta))}\\
        -&2\sqrt{(1+\min(\beta,\gamma))(1-\max(\beta,\gamma))}+2\sqrt{(1+\min(\alpha,\gamma))(1-\max(\alpha,\gamma))}\\
        \geq&2\sqrt{1-\beta^2}-2\sqrt{(1+\min(\alpha,\beta))(1-\max(\alpha,\beta))}\\
        -&2\sqrt{(1+\min(\beta,\gamma))(1-\max(\beta,\gamma))}+2\sqrt{(1+\min(\alpha,\gamma))(1-\max(\alpha,\gamma))}\\
    \end{aligned}\end{equation}
    where we used the triangle inequality for $d(a,b):=\abs{a-b}$.

    If $\alpha\leq\beta\leq\gamma$, then the last line of \eqref{eq:triang_ineq} takes the following form:
    \begin{equation}\begin{aligned}\label{eq:order_abc}
        2\sqrt{(1+\beta)(1-\beta)}-2\sqrt{(1+\alpha)(1-\beta)}-2\sqrt{(1+\beta)(1-\gamma)}+2\sqrt{(1+\alpha)(1-\gamma)}.
    \end{aligned}\end{equation}
    If $\alpha\leq\beta\leq\gamma$, then
    \begin{equation}\begin{aligned}
        (\gamma-\beta)(\beta-\alpha)=\gamma\beta-\beta^2-\alpha\gamma+\alpha\beta&\geq 0\quad \Leftrightarrow\\
        -\beta^2-\alpha\gamma&\geq -\gamma\beta-\alpha\beta\quad \Leftrightarrow\\
        \left(\sqrt{(1+\beta)(1-\beta)}+\sqrt{(1+\alpha)(1-\gamma)}\right)^2&\geq \left(\sqrt{(1+\alpha)(1-\beta)}+\sqrt{(1+\beta)(1-\gamma)}\right)^2\quad \Leftrightarrow\\
        \sqrt{(1+\beta)(1-\beta)}+\sqrt{(1+\alpha)(1-\gamma)}&\geq \sqrt{(1+\alpha)(1-\beta)}+\sqrt{(1+\beta)(1-\gamma)},
    \end{aligned}\end{equation}
    from which it follows that if $\alpha\leq\beta\leq\gamma$ \eqref{eq:order_abc} is nonnegative and then so is the last line of \eqref{eq:triang_ineq}.
     If $\beta\leq\alpha\leq\gamma$, then half of the last line of \eqref{eq:triang_ineq} takes the following form:
    \begin{equation}\begin{aligned}
        &\sqrt{(1+\beta)(1-\beta)}-\sqrt{(1+\beta)(1-\alpha)}-\sqrt{(1+\beta)(1-\gamma)}+\sqrt{(1+\alpha)(1-\gamma)}\\
        =&\sqrt{(1+\beta)}\left(\sqrt{1-\beta}-\sqrt{(1-\alpha)}\right)+\sqrt{1-\gamma}\left(\sqrt{(1+\alpha)}-\sqrt{(1+\beta)}\right),
    \end{aligned}\end{equation}
    which is then nonnegative by assumption.
    The other four cases of the order of $\alpha,\beta,\gamma$ can be transformed into either one of the above two with the use of variable changes $\alpha':=(1-\alpha)$, $\beta':=(1-\beta)$, $\gamma':=(1-\gamma)$ and using the fact that \eqref{eq:triang_ineq} is symmetric in $\alpha$ and $\gamma$. Thus the last line of \eqref{eq:triang_ineq} is nonnegative which completes the proof.
\end{proof}
\subsection{Strong duality for special cases of qubits and single observable cost}
In this subsection we consider the case when a single observable generates the transport cost. On quantum bits, we may assume (up to an affine rescaling of the observable and a conjugation by a unitary) that this observable is $\sigma_3=\sigma_z.$ So, we concern the setting described at the beginning of Section \ref{sec:Kantorovich-duality} and take $K=1, \, \cA=\{\sigma_z\},$ and $c(x,y)=\abs{x-y}^p.$ This choice gives rise to the cost operator 
    \begin{align}\label{eq:Cz}
        C_{z,p}:=
        C_c^{(\cA)}=
        \abs{\sigma_z\otimes I^T-I\otimes \sigma_z^T}^p=2^{p-1}\left(I\otimes I^T -\sigma_z\otimes\sigma_z^T\right)
        =\lesq{\ba{cccc} 0 & 0 & 0 & 0\\
        0 & 2^{p} & 0 & 0\\
        0 & 0 & 2^{p} & 0\\
        0 & 0 & 0 & 0\ea}
    \end{align}
where $p\geq 1.$ The transport cost $C_{z,p}$ is invariant under unitary conjugations of the form
\begin{equation}\label{eq:unit-rot-1}\begin{aligned}
    X \mapsto \ler{I\otimes \ler{\exp\ler{i\frac{\varphi}{2}\sigma_z}}^T} X \ler{I\otimes \ler{\exp\ler{-i\frac{\varphi}{2}\sigma_z}}^T}       
\end{aligned}\end{equation}
and
\begin{equation}\label{eq:unit-rot-2}\begin{aligned}
     X \mapsto \ler{\exp\ler{i\frac{\varphi}{2}\sigma_z}\otimes I^T } X \ler{\exp\ler{-i\frac{\varphi}{2}\sigma_z}\otimes I^T }   
\end{aligned}\end{equation}
however,
\begin{align}
    \Pi \in \cC(\rho,\omega)\Longleftrightarrow
    \ler{\exp\ler{i\frac{\varphi}{2}\sigma_z}\otimes I^T } \Pi \ler{\exp\ler{-i\frac{\varphi}{2}\sigma_z}\otimes I^T } \in \cC\ler{\rho, \exp\ler{i\frac{\varphi}{2}\sigma_z} \omega \exp\ler{-i\frac{\varphi}{2}\sigma_z}} \nonumber \\
    \Longleftrightarrow 
    \ler{I\otimes \ler{\exp\ler{i\frac{\varphi}{2}\sigma_z}}^T} X \ler{I\otimes \ler{\exp\ler{-i\frac{\varphi}{2}\sigma_z}}^T}
    \in\cC\ler{\exp\ler{i\frac{\varphi}{2}\sigma_z} \rho \exp\ler{-i\frac{\varphi}{2}\sigma_z},\omega}.       
\end{align}

This shows that in general whenever evaluating the $p$-Wasserstein distance \begin{align} \label{eq:D-z-p-def}
    D_{z,p}(\rho, \omega):=\ler{\min\lers{\tr_{\hohc}\lesq{\Pi C_{z,p}} \, \middle| \, \Pi \in \cC(\rho, \omega)}}^{\frac{1}{p}}
\end{align} between two qubits, one can rotate them such that neither qubit has a $\sigma_y$ coordinate anymore and compute $D_{z,p}$ then.
\par
In the following Proposition \ref{prop:D_z_for_xy_states} and Theorem \ref{thm:D_z_for_xy_states} we give a simple closed formula for the $p$-Wasserstein distance $D_{z,p}$ in the case when both qubits are orthogonal to $\sigma_z$ in the Hilbert-Schmidt sense. We obtain the formula for $D_{z,p}$ by determining the optimal transport plans and Kantorovich potentials, and we use the Kantorovich duality obtained in Section \ref{sec:Kantorovich-duality} to prove the optimality of these couplings and potentials.

\begin{proposition}\label{prop:D_z_for_xy_states}
    Let $\rho$ denote now the following reduced Bloch parametrization:
    \begin{equation}\label{eq:rho_alpha_x}\begin{aligned}
        \rho(\alpha):=\frac{1}{2}\left(I+\alpha\sigma_x\right)=\frac{1}{2}\lesq{\ba{cc}1 & \alpha \\ \alpha & 1 \ea}.
    \end{aligned}\end{equation}
    Then
    \begin{equation}\label{eq:D_z_xy}\begin{aligned}
        D^p_{z,p}(\rho(\alpha),\rho(\beta))=2^{p-1}\left(1-\sqrt{1-\max\left(\alpha^2,\beta^2\right)}\right).
    \end{aligned}\end{equation}
\end{proposition}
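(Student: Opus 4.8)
The plan is to pin down the value by sandwiching the primal minimum between an explicit coupling (upper bound) and an explicit pair of Kantorovich potentials (lower bound), both evaluating to $2^{p-1}\ler{1-\sqrt{1-\max(\alpha^2,\beta^2)}}$, and then to invoke the strong duality of Corollary~\ref{cor:factorized-duality} with $K=1$ and cost $C_{z,p}$ to conclude that these bounds coincide with the minimum. Before constructing anything I would normalize: conjugating both states by $\sigma_z$ flips the signs of $\alpha$ and $\beta$ while leaving $C_{z,p}$ invariant, and interchanging the two marginals leaves the (symmetric) right-hand side unchanged, so I may assume $\alpha\geq\beta\geq 0$. The crucial simplification is that in the ordered basis $\lers{\ket{00},\ket{01},\ket{10},\ket{11}}$ the cost operator \eqref{eq:Cz} is $\mathrm{diag}(0,2^p,2^p,0)$, so for any coupling the primal objective is simply $\tr_{\hohc}\lesq{\Pi C_{z,p}}=2^p(\Pi_{22}+\Pi_{33})$: only the two ``middle'' populations are charged.

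For the upper bound I would look for an optimal $\Pi$ of rank at most two, supported on the two ``classical--quantum'' product vectors $w_1=\ket{0}\otimes(x\ket{0}+\delta\ket{1})$ and $w_2=\ket{1}\otimes(\delta\ket{0}+x\ket{1})$ with parameters $x,\delta>0$ to be fixed, writing $\Pi=\sum_{i,j}M_{ij}\,w_iw_j^{*}$ for a Hermitian $2\times 2$ matrix $M$. Since $w_1$ and $w_2$ are orthogonal, $\Pi\geq 0$ is equivalent to $M\geq 0$. Imposing the marginal conditions $\tr_1\lesq{\Pi}=\rho(\alpha)^T$ and $\tr_2\lesq{\Pi}=\rho(\beta)$ forces $M=\tfrac{1}{4x\delta}\lesq{\ba{cc}\alpha&\beta\\\beta&\alpha\ea}$ and fixes the ratio $x/\delta$ in terms of $\alpha$; positivity of $\Pi$ then reduces, by Sylvester's criterion applied to $M$, exactly to $\alpha\geq\beta$, which holds by the normalization. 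A one-line computation of the charged populations gives $2^p(\Pi_{22}+\Pi_{33})=2^{p-1}\ler{1-\sqrt{1-\alpha^2}}$, the claimed value.

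For the matching lower bound I would exhibit dual-feasible potentials, using that any feasible point of the dual in Corollary~\ref{cor:factorized-duality} lower bounds the primal minimum. The key observation is that the optimal $Y$ can be taken to vanish: setting $Y=0$ and $X=\lesq{\ba{cc}-\eta&\xi\\\xi&-\eta\ea}$, the operator $C_{z,p}-I\otimes X^{T}$ decouples into the two $2\times 2$ blocks $\lesq{\ba{cc}\eta&-\xi\\-\xi&2^p+\eta\ea}$ and $\lesq{\ba{cc}2^p+\eta&-\xi\\-\xi&\eta\ea}$, so Sylvester's criterion collapses the operator inequality $Y\otimes I^T+I\otimes X^T\leq C_{z,p}$ to the single scalar constraint $\eta(\eta+2^p)\geq\xi^2$. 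The dual objective reduces to $\tr_{\cH}\lesq{\rho(\alpha)X}=-\eta+\alpha\xi$, and choosing $\eta,\xi$ so that the constraint is saturated and this quantity is maximized again yields $2^{p-1}\ler{1-\sqrt{1-\alpha^2}}$. Since the two bounds agree, strong duality forces the minimum to equal their common value.

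The hard part is not the verification --- once the right objects are written down, everything is Sylvester's criterion plus a short trace computation --- but guessing the correct ansatz. What I would have to discover is that the optimal coupling lives on that particular two-dimensional subspace and that the optimal potential $Y$ is trivial, which is exactly what collapses the $4\times 4$ semidefinite conditions to a single scalar inequality on each side. I also expect the genuinely structural point to be that only the larger of $\alpha^2,\beta^2$ enters: the smaller parameter is transported ``for free,'' which is why the construction is asymmetric in $\alpha,\beta$ and forces the normalization $\alpha\geq\beta$ (the mirror construction, with $X$ and $Y$ interchanged, covers $\beta\geq\alpha$). Finally I would dispatch the degenerate cases separately: when $\abs{\alpha}=1$ the state $\rho(\alpha)$ is pure and its only coupling is the product $\rho(\beta)\otimes\rho(\alpha)^T$, so the upper bound is automatically tight, while the case $\alpha=\beta=0$ (where the parametrizations of $M$ and of $X$ degenerate) follows by continuity or by the direct observation that the cost is zero.
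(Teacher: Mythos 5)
Your proposal is correct and follows essentially the same route as the paper: your rank-two coupling $WMW^{*}$ supported on the two product vectors is exactly the paper's $\Pi_+(\alpha,\beta)$ written in factored form, and your dual pair $(X,Y)=(-\eta I+\xi\sigma_x,\,0)$ with the constraint $\eta(\eta+2^p)=\xi^2$ saturated is precisely the paper's potential $X_\pm$, so the two explicit bounds meet and only weak duality is needed to conclude. One cosmetic slip: conjugating \emph{both} states by $\sigma_z$ together with swapping the marginals only lets you normalize to $\alpha\geq\abs{\beta}$, not to $\alpha\geq\beta\geq 0$ (try $(\alpha,\beta)=(1/2,-1/3)$), but $\alpha\geq\abs{\beta}$ is all that your positivity condition $M\geq 0$ actually requires, and independent sign flips are in any case available from the one-sided rotation invariances \eqref{eq:unit-rot-1}--\eqref{eq:unit-rot-2}.
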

\begin{proof}
    By using the symmetry mentioned above, one can assume without losing generality that $\abs{\alpha}\geq\abs{\beta}$. However, for completeness we will prove \eqref{eq:D_z_xy} directly for $\abs{\alpha}<\abs{\beta}$ as well. Suppose now that $\abs{\alpha}\geq\abs{\beta}$ and $\abs{\alpha}>0$ and consider
    \begin{equation}\begin{aligned}
        \Pi_+(\alpha,\beta):=\frac{1}{4}\lesq{\ba{cccc} 1+\sqrt{1-\alpha^2} & \alpha & \beta & \frac{\left(1+\sqrt{1-\alpha^2}\right)\beta}{\alpha}\\
        \alpha & 1-\sqrt{1-\alpha^2} & \frac{\left(1-\sqrt{1-\alpha^2}\right)\beta}{\alpha} & \beta\\
        \beta & \frac{\left(1-\sqrt{1-\alpha^2}\right)\beta}{\alpha} & 1-\sqrt{1-\alpha^2} & \alpha\\
        \frac{\left(1+\sqrt{1-\alpha^2}\right)\beta}{\alpha} & \beta & \alpha & 1+\sqrt{1-\alpha^2}\ea}.
    \end{aligned}\end{equation}
    $\Pi_+(\alpha,\beta)$  is clearly hermitian and is positive-semidefinite by Sylvester's criterion. To see that all principal minors are nonnegative note that the first two columns, the last two columns, the first two rows and the last two rows are all proportional pairs, with rate $\frac{\alpha}{1+\sqrt{1-\alpha^2}}=\frac{1-\sqrt{1-\alpha^2}}{\alpha}$. It follows that the determinant and all minors of size 3 are 0-valued. All the elements in the diagonal are clearly nonnegative. Two of the principal minors of size 2 are 0-valued from the linear dependence. The nontrivial principal minors of size 2 are given by rows and columns $\{(1,3),(1,4),(2,3),(2,4)\}$. Nonnegativity for principal minors $\{(1,3),(2,4)\}$ yields the same condition
    \begin{equation}\begin{aligned}
        \left(1+\sqrt{1-\alpha^2}\right)\left(1-\sqrt{1-\alpha^2}\right)=\alpha^2\geq\beta^2,    
    \end{aligned}\end{equation}
    which is fulfilled by assumption. Nonnegativity for principal minors $\{(1,4),(2,3)\}$ yields
    \begin{equation}\begin{aligned}
        \left(1+\sqrt{1-\alpha^2}\right)^2\geq\frac{\left(1+\sqrt{1-\alpha^2}\right)^2\beta^2}{\alpha^2}\Leftrightarrow1\geq\frac{\beta^2}{\alpha^2}\Leftrightarrow\left(1-\sqrt{1-\alpha^2}\right)^2\geq\frac{\left(1-\sqrt{1-\alpha^2}\right)^2\beta^2}{\alpha^2},    
    \end{aligned}\end{equation}
    which is then again fulfilled by assumption. Easy computations show that $\Tr_1 \lesq{\Pi_+(\alpha,\beta)}=\rho(\alpha)^T$, while $\Tr_2 \lesq{\Pi_+(\alpha,\beta)}=\rho(\beta),$ which means that $\Pi_+(\alpha,\beta)$ is a coupling of $\rho(\alpha)$ and $\rho(\beta)$.
    It follows that whenever $\abs{\alpha}\geq\abs{\beta}$ and $\abs{\alpha}>0$,
    \begin{equation}\begin{aligned} \label{eq:D_z_upperbound+}
         D^p_{z,p}(\rho(\alpha),\rho(\beta))&\leq \Tr C_{z,p}\Pi_+(\alpha,\beta)=2^{p-1}\left(1-\sqrt{1-\alpha^2}\right).
    \end{aligned}\end{equation}
    If $\abs{\alpha}<\abs{\beta}$, then let us define
    \begin{equation}\begin{aligned}
        \Pi_-(\alpha,\beta):=(\Pi_+(\beta,\alpha))^{ST}=\frac{1}{4}\lesq{\ba{cccc}1+\sqrt{1-\beta^2} & \alpha & \beta & \frac{\left(1+\sqrt{1-\beta^2}\right)\alpha}{\beta}\\
        \alpha & 1-\sqrt{1-\beta^2} & \frac{\left(1-\sqrt{1-\beta^2}\right)\alpha}{\beta} & \beta\\
        \beta & \frac{\left(1-\sqrt{1-\beta^2}\right)\alpha}{\beta} & 1-\sqrt{1-\beta^2} & \alpha\\
        \frac{\left(1+\sqrt{1-\beta^2}\right)\alpha}{\beta} & \beta & \alpha & 1+\sqrt{1-\beta^2}\ea},
    \end{aligned}\end{equation}
    where  $(\cdot)^{ST}$ denotes the swap transposition on $\cT_1\ler{\hohc},$ which is the linear extension of the map $A \otimes B^T \mapsto B \otimes A^T.$
    The state $\Pi_-(\alpha,\beta)$ is a coupling of $\rho(\beta)$ and $\rho(\alpha)$. It follows that whenever $\abs{\alpha}<\abs{\beta}$,
    \begin{equation}\begin{aligned} \label{eq:D_z_upperbound-}
         D^p_{z,p}(\rho(\alpha),\rho(\beta))&\leq \Tr C_{z,p}\Pi_-(\alpha,\beta)=2^{p-1}\left(1-\sqrt{1-\beta^2}\right).
    \end{aligned}\end{equation}
    If $\alpha=\beta=0$, then
    \begin{equation}\begin{aligned}
        \Pi_0:=\lesq{\ba{cccc}\frac{1}{2} & 0 & 0 & 0\\
        0 & 0 & 0 & 0\\
        0 & 0 & 0 & 0\\
        0 & 0 & 0 & \frac{1}{2}\ea}
    \end{aligned}\end{equation}
    can be directly seen to be an optimal coupling yielding
    \begin{equation}\begin{aligned} \label{eq:D_z_upperbound0}
         D^p_{z,p}(I/2,I/2)=\Tr C_{z,p}\Pi_0=0.
    \end{aligned}\end{equation}
    \eqref{eq:D_z_upperbound-}, \eqref{eq:D_z_upperbound+}, \eqref{eq:D_z_upperbound0} together yield
    \begin{equation}\label{eq:D_z_upperbound}\begin{aligned}
        D^p_{z,p}(\rho(\alpha),\rho(\beta))\leq2^{p-1}\left(1-\sqrt{1-\max\left(\alpha^2,\beta^2\right)}\right),
    \end{aligned}\end{equation}
    without further assumptions other than $\rho(\alpha),\rho(\beta)$ having the form of \eqref{eq:rho_alpha_x}.
    Now let $M=\max(\abs{\alpha},\abs{\beta})$, suppose that $M<1$ and consider
    \begin{equation}\begin{aligned}
        X_{\pm}=2^{p-1}\lesq{\ba{cc} 1-\frac{1}{\sqrt{1-M^2}} & \pm\sqrt{\frac{M^2}{1-M^2}} \\ \pm\sqrt{\frac{M^2}{1-M^2}} & 1-\frac{1}{\sqrt{1-M^2}} \ea},\quad Y=\lesq{\ba{cc} 0 & 0 \\ 0 & 0 \ea}.
    \end{aligned}\end{equation}
    Clearly, $X_\pm$ and $Y$ are self-adjoint. It is also evident by Sylvester's criterion, that
    \begin{equation}\begin{aligned}
        C_{z,p}-Y\otimes I^T-I\otimes X_\pm^T=
        2^{p-1}\lesq{\ba{cccc} \frac{1}{\sqrt{1-M^2}}-1& \mp\sqrt{\frac{M^2}{1-M^2}} & 0 & 0\\
        \mp\sqrt{\frac{M^2}{1-M^2}} & \frac{1}{\sqrt{1-M^2}}+1 & 0 & 0\\
        0 & 0 & \frac{1}{\sqrt{1-M^2}}+1 & \mp\sqrt{\frac{M^2}{1-M^2}}\\
        0 & 0 & \mp\sqrt{\frac{M^2}{1-M^2}} & \frac{1}{\sqrt{1-M^2}}-1\ea}\geq 0,
    \end{aligned}\end{equation}
    as well as
    \begin{equation}\begin{aligned}
        C_{z,p}-X_\pm\otimes I^T-I\otimes Y^T=
        2^{p-1}\lesq{\ba{cccc} \frac{1}{\sqrt{1-M^2}}-1& 0 & \mp\sqrt{\frac{M^2}{1-M^2}} & 0\\
        0 & \frac{1}{\sqrt{1-M^2}}+1 & 0 & \mp\sqrt{\frac{M^2}{1-M^2}}\\
        \mp\sqrt{\frac{M^2}{1-M^2}} & 0 & \frac{1}{\sqrt{1-M^2}}+1 & 0\\
        0 & \mp\sqrt{\frac{M^2}{1-M^2}} & 0 & \frac{1}{\sqrt{1-M^2}}-1\ea}\geq 0.
    \end{aligned}\end{equation}
    Thus
    \begin{equation}\begin{aligned}\label{eq:D_z_lowerbound}
        &D^p_{z,p}(\rho(\alpha),\rho(\beta))\geq  \max\left(\Tr X_+\rho(\alpha),\Tr X_-\rho(\alpha),\Tr X_+\rho(\beta),\Tr X_-\rho(\beta)\right)\\
        =&2^{p-1}\left(1-\frac{1}{\sqrt{1-M^2}}\right)+2^{p-1}M\sqrt{\frac{M^2}{1-M^2}}\\
        =&2^{p-1}\left(1-\frac{1-M^2}{\sqrt{1-M^2}}\right)=2^{p-1}\left(1-\sqrt{1-M^2}\right)=2^{p-1}\left(1-\sqrt{1-\max\left(\alpha^2,\beta^2\right)}\right).
    \end{aligned}\end{equation}
    For mixed states, combining \eqref{eq:D_z_upperbound} and \eqref{eq:D_z_lowerbound} completes the proof. If either state is pure then it is known that there is only one coupling and therefore \eqref{eq:D_z_upperbound} is an equality rather than an upper bound.
\end{proof}

\begin{theorem}\label{thm:D_z_for_xy_states}
    Let $\rho$ denote the standard Bloch parametrization, that is,
    \begin{equation}\label{eq:Bloch-para}\begin{aligned}
        \rho(\vec{r}):=\frac{1}{2}\left(I+\vec{r}\cdot \vec{\sigma}\right),
    \end{aligned}\end{equation}
    and let us assume that both $\vec{r}_1$ and $\vec{r}_2$ are orthogonal to $(0,0,1).$ Then
    \begin{equation}\label{eq:D_xy_gen}\begin{aligned}
        D^p_{z,p}(\rho(\vec{r}_1),\rho(\vec{r}_2))=2^{p-1}\left(1-\sqrt{1-\max\left(r_1^2,r_2^2\right)}\right).
    \end{aligned}\end{equation}
    In particular, if $r_1\geq r_2$, then
    \begin{equation}\begin{aligned}
        D^p_{z,p}(\rho(\vec{r}_1),\rho(\vec{r}_2))=D^p_{z,p}(\rho(\vec{r}_2),\rho(\vec{r}_1))=D^p_{z,p}(\rho(\vec{r}_1),\rho(\vec{r}_1)).
    \end{aligned}\end{equation}
\end{theorem}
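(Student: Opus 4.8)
The plan is to mimic the reduction used for Theorem~\ref{thm:D_symm_for_commuting_states}, exploiting the rotational invariance of $C_{z,p}$ to deduce the general formula \eqref{eq:D_xy_gen} from the special case already settled in Proposition~\ref{prop:D_z_for_xy_states}. The hypothesis that $\vec{r}_1$ and $\vec{r}_2$ are orthogonal to $(0,0,1)$ means precisely that both Bloch vectors lie in the $xy$-plane, so I would first write $\vec{r}_i = r_i\ler{\cos\theta_i,\sin\theta_i,0}$ with $r_i=\abs{\vec{r}_i}\in[0,1]$, reducing the data to the magnitudes $r_1,r_2$ and two planar angles.

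The heart of the argument is the independent-rotation step. I would invoke the two invariance relations \eqref{eq:unit-rot-1} and \eqref{eq:unit-rot-2} together with the coupling equivalence displayed right after them. The crucial observation is that the conjugation in \eqref{eq:unit-rot-2} acts only on the first tensor leg, hence rotates the marginal playing the role of $\omega$ about the $z$-axis while leaving the other marginal untouched, whereas \eqref{eq:unit-rot-1} acts only on the second leg and rotates the marginal playing the role of $\rho$. Because these maps act on different factors of $\hohc$ they commute and may be applied with independent angles, and each of them preserves $C_{z,p}$ and therefore the value of $\tr_{\hohc}\lesq{\Pi C_{z,p}}$. Consequently their composite is a cost-preserving bijection from $\cC(\rho(\vec{r}_1),\rho(\vec{r}_2))$ onto $\cC\ler{\tfrac{1}{2}(I+r_1\sigma_x),\,\tfrac{1}{2}(I+r_2\sigma_x)}$, obtained by rotating the $i$-th Bloch vector by $-\theta_i$ so as to align it with the positive $x$-axis.

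Since the defining minimum of $D_{z,p}$ is unchanged under this bijection, I would conclude $D^p_{z,p}(\rho(\vec{r}_1),\rho(\vec{r}_2))=D^p_{z,p}\ler{\tfrac{1}{2}(I+r_1\sigma_x),\tfrac{1}{2}(I+r_2\sigma_x)}$, and then apply Proposition~\ref{prop:D_z_for_xy_states} with $\alpha=r_1$ and $\beta=r_2$ (both nonnegative), which gives \eqref{eq:D_xy_gen} at once because $\max(r_1^2,r_2^2)$ is exactly the $\max(\alpha^2,\beta^2)$ there. The final ``in particular'' claim would then be read straight off \eqref{eq:D_xy_gen}: the right-hand side depends on the pair only through $\max(r_1^2,r_2^2)$, so if $r_1\geq r_2$ then each of $\max(r_1^2,r_2^2)$, $\max(r_2^2,r_1^2)$ and $\max(r_1^2,r_1^2)$ equals $r_1^2$, forcing the three distances to coincide and exhibiting the positive self-distance $D^p_{z,p}(\rho(\vec{r}_1),\rho(\vec{r}_1))=2^{p-1}\ler{1-\sqrt{1-r_1^2}}$.

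I expect the only genuinely delicate point to be the bookkeeping in the reduction step, namely checking that the transpose on the second leg turns a $z$-rotation of $\rho$ into the correct conjugation (the sign of $\varphi$ being immaterial, since the target state $\tfrac12(I+r_i\sigma_x)$ only needs to be reached) and that a single $z$-rotation per state suffices to remove the angle $\theta_i$. Everything else is immediate from the invariance recorded before Proposition~\ref{prop:D_z_for_xy_states}, so, just as for Theorem~\ref{thm:D_symm_for_commuting_states}, the proof is a symmetry reduction rather than a new computation.
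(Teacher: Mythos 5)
Your proposal is correct and follows exactly the paper's route: the paper's proof is precisely the observation that the independent $z$-axis rotations \eqref{eq:unit-rot-1} and \eqref{eq:unit-rot-2} preserve $C_{z,p}$ and map couplings to couplings, reducing both states to the $\sigma_x$-form handled by Proposition~\ref{prop:D_z_for_xy_states}. The extra bookkeeping you supply (independent angles on the two tensor legs, the irrelevance of the sign of $\varphi$) is a faithful expansion of the paper's one-line argument, not a deviation from it.
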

\begin{proof}
    This follows immediately from the invariance of \eqref{eq:Cz} under unitary conjugations implementing rotations around the $\sigma_z$ axis  (see \eqref{eq:unit-rot-1} and \eqref{eq:unit-rot-2}), and Proposition~\ref{prop:D_z_for_xy_states}.
\end{proof}
The above obtained formula for the $p$-Wasserstein distance $D_{z,p}$ gives rise to an explicit closed form for the corresponding quadratic Wasserstein divergence $d_{z,2}$ --- this is the content of the next corollary.
\begin{corollary}\label{cor:d_z_for_xy_states}
    Let $\rho$ denote the Bloch parametrization as is \eqref{eq:Bloch-para}, and let the us consider the cost operator be $C_{z,2}$ given in \eqref{eq:Cz}. Assume that both $\vec{r}_1$ and $\vec{r}_2$ are orthogonal to $(0,0,1)$. Then we have 
    \begin{equation}\begin{aligned}
        d^2_{z,2}(\rho(\vec{r}_1),\rho(\vec{r}_2))=\sqrt{1-\min(r_1,r_2)^2}-\sqrt{1-\max(r_1,r_2)^2}.
    \end{aligned}\end{equation}
\end{corollary}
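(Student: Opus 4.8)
The plan is to compute $d_{z,2}^2$ by unfolding the template \eqref{eq:quadratic-divergence-def} (with $D_{z,2}$ in place of $D_{\cA,2}$) and feeding in the closed form for the squared $p=2$ distance already established in Theorem~\ref{thm:D_z_for_xy_states}. Explicitly, I would start from
\[
d_{z,2}^2(\rho(\vec{r}_1),\rho(\vec{r}_2)) = D_{z,2}^2(\rho(\vec{r}_1),\rho(\vec{r}_2)) - \tfrac{1}{2}\ler{D_{z,2}^2(\rho(\vec{r}_1),\rho(\vec{r}_1)) + D_{z,2}^2(\rho(\vec{r}_2),\rho(\vec{r}_2))},
\]
so that the whole corollary reduces to evaluating three instances of the quantity \eqref{eq:D_xy_gen}.

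The first step is to read off the mixed term directly from Theorem~\ref{thm:D_z_for_xy_states}: since both $\vec{r}_1,\vec{r}_2$ are orthogonal to $(0,0,1)$ by hypothesis, $D_{z,2}^2(\rho(\vec{r}_1),\rho(\vec{r}_2)) = 2\ler{1-\sqrt{1-\max(r_1^2,r_2^2)}}$. The two self-distances then arise as the special case $\vec{r}_1=\vec{r}_2$ of the same theorem, giving $D_{z,2}^2(\rho(\vec{r}_i),\rho(\vec{r}_i)) = 2\ler{1-\sqrt{1-r_i^2}}$ for $i=1,2$; alternatively, one may obtain these self-costs by evaluating $\Bra{\sqrt{\rho(\vec{r}_i)}} C_{z,2}\Ket{\sqrt{\rho(\vec{r}_i)}}$ on the canonical purification, which yields the same value and confirms that the purification is an optimal self-coupling.

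Substituting these three expressions into the displayed identity and cancelling the constant terms leaves
\[
d_{z,2}^2(\rho(\vec{r}_1),\rho(\vec{r}_2)) = \sqrt{1-r_1^2}+\sqrt{1-r_2^2}-2\sqrt{1-\max(r_1^2,r_2^2)}.
\]
The final step is pure bookkeeping: since $r_1,r_2\geq 0$ we have $\max(r_1^2,r_2^2)=\max(r_1,r_2)^2$, and exactly one of the two terms $\sqrt{1-r_i^2}$ equals $\sqrt{1-\max(r_1,r_2)^2}$ while the other equals $\sqrt{1-\min(r_1,r_2)^2}$. Pairing the larger-radius term against one copy of $2\sqrt{1-\max(r_1,r_2)^2}$ collapses the expression to $\sqrt{1-\min(r_1,r_2)^2}-\sqrt{1-\max(r_1,r_2)^2}$, as claimed.

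There is no genuine obstacle here, since Theorem~\ref{thm:D_z_for_xy_states} does all the heavy lifting; the only points requiring care are the min/max bookkeeping in the last step and the (routine) observation that the self-distances are a legitimate specialization of the theorem — the coupling problem for identical states is still covered, as $\rho(\vec{r})$ and $\rho(\vec{r})$ trivially commute and remain orthogonal to $(0,0,1)$.
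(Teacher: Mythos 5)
Your proof is correct and takes essentially the same route as the paper: both unfold the definition of $d_{z,2}^2$ and substitute the three values supplied by Theorem~\ref{thm:D_z_for_xy_states}, including the self-distances as the $\vec{r}_1=\vec{r}_2$ specialization. The paper merely shortcuts the final algebra by first replacing the mixed term $D_{z,2}^2(\rho(\vec{r}_1),\rho(\vec{r}_2))$ with the self-distance of the larger-radius state, so the difference is purely cosmetic.
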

\begin{proof}
    Immediate from Theorem~\ref{thm:D_z_for_xy_states}, as
    \begin{equation}\begin{aligned}
         d^2_{z,2}(\rho(\vec{r}_1),\rho(\vec{r}_2))&=D^2_{z,2}(\rho(\vec{r}_1),\rho(\vec{r}_2))-\frac{1}{2}\left(D^2_{z,2}(\rho(\vec{r}_1),\rho(\vec{r}_1))+D^2_{z,2}(\rho(\vec{r}_2),\rho(\vec{r}_2))\right)\\
         &=D^2_{z,2}(\rho(\vec{r}_+),\rho(\vec{r}_+))-\frac{1}{2}\left(D^2_{z,2}(\rho(\vec{r}_1),\rho(\vec{r}_1))+D^2_{z,2}(\rho(\vec{r}_2),\rho(\vec{r}_2))\right)\\
         &=\frac{1}{2}\left(D^2_{z,2}(\rho(\vec{r}_+),\rho(\vec{r}_+))-D^2_{z,2}(\rho(\vec{r}_-),\rho(\vec{r}_-))\right),\\
    \end{aligned}\end{equation}
    where $(\vec{r}_+,\vec{r}_-)$ is a permutation of $(\vec{r}_1,\vec{r}_2)$, so that $\abs{\vec{r}_+}\geq\abs{\vec{r}_-}$.
\end{proof}
\begin{remark}
    The quantity $\sqrt{1-r^2}$ in Theorem~\ref{thm:D_z_for_xy_states} and Corollary~\ref{cor:d_z_for_xy_states} is the length of the tangent that can be drawn from the perimeter of the circle given by the intersection of the Bloch ball and the $xy$ plane to the centered circle of radius $r$ on which the qubit $\rho(\vec{r})$ lies.
\end{remark}

A consequence of the closed formula for single observable cost and qubits perpendicular to the observable is that we can prove the quadratic triangle inequality in this case as follows.

\begin{proposition}
    Let $\rho,\sigma,\omega \in \cS(\C^2)$ be quantum bits such that all of them are orthogonal to $\sigma_z$ in the Hilbert-Schmidt sense. Then even the square of the quadratic Wasserstein divergence $d_{z,2}$ satisfies the triangle inequality, that is,
    \begin{equation}\begin{aligned}
        d^2_{z,2}(\rho,\sigma)+d^2_{z,2}(\sigma,\omega)\geq d^2_{z,2}(\rho,\omega).
    \end{aligned}\end{equation}
\end{proposition}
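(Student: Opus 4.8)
The plan is to reduce the inequality to the ordinary triangle inequality for the absolute value on $\R$ by exploiting the explicit formula obtained in Corollary~\ref{cor:d_z_for_xy_states}. Since $\rho, \sigma, \omega$ are all orthogonal to $\sigma_z$ in the Hilbert-Schmidt sense, their Bloch vectors lie in the $xy$-plane, so Corollary~\ref{cor:d_z_for_xy_states} applies verbatim to each of the three pairs. Writing $a, b, c \in [0,1]$ for the magnitudes of the Bloch vectors of $\rho, \sigma, \omega$ respectively, the corollary gives
\[
d^2_{z,2}(\rho,\sigma) = \sqrt{1-\min(a,b)^2} - \sqrt{1-\max(a,b)^2},
\]
and analogous expressions for the other two pairs.

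The key observation I would make is that the function $g(t) := \sqrt{1-t^2}$ is decreasing on $[0,1]$, whence $\sqrt{1-\min(a,b)^2} = \max\{g(a), g(b)\}$ and $\sqrt{1-\max(a,b)^2} = \min\{g(a), g(b)\}$. Their difference is therefore $\abs{g(a)-g(b)}$, so the formula collapses to
\[
d^2_{z,2}(\rho,\sigma) = \abs{g(a) - g(b)},
\]
and likewise $d^2_{z,2}(\sigma,\omega) = \abs{g(b) - g(c)}$ and $d^2_{z,2}(\rho,\omega) = \abs{g(a)-g(c)}$.

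With these identities in hand, the desired inequality becomes $\abs{g(a)-g(b)} + \abs{g(b)-g(c)} \geq \abs{g(a)-g(c)}$, which is exactly the triangle inequality for the Euclidean metric on $\R$ applied to the three real numbers $g(a), g(b), g(c)$. Put differently, the restriction of $d^2_{z,2}$ to qubits orthogonal to $\sigma_z$ is the pullback of the standard metric on $\R$ under the map $\vec{r} \mapsto g(\abs{\vec{r}})$, so the triangle inequality holds automatically.

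Because every step is a direct consequence of the closed form, I do not anticipate a genuine obstacle. The only point requiring care is to confirm that the hypothesis that all three states are orthogonal to $\sigma_z$ is precisely what licenses the simultaneous use of Corollary~\ref{cor:d_z_for_xy_states} for all three pairs, and that the resulting formula depends only on the Bloch-vector magnitudes, so that the relative orientations of the three vectors within the $xy$-plane play no role whatsoever.
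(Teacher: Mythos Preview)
Your argument is correct. Both you and the paper start from Corollary~\ref{cor:d_z_for_xy_states}, but the paper proceeds by a case analysis on the ordering of $r_\rho,r_\sigma,r_\omega$: it writes out the sum $d^2_{z,2}(\rho,\sigma)+d^2_{z,2}(\sigma,\omega)-d^2_{z,2}(\rho,\omega)$ with the explicit minima and maxima, then checks three orderings directly (the remaining three being covered by the $r_\rho\leftrightarrow r_\omega$ symmetry). Your route is more conceptual: by noting that $g(t)=\sqrt{1-t^2}$ is monotone decreasing on $[0,1]$ you rewrite $\sqrt{1-\min(a,b)^2}-\sqrt{1-\max(a,b)^2}=\abs{g(a)-g(b)}$, so that $d^2_{z,2}$ is the pullback of the standard metric on $\R$ under $\vec{r}\mapsto g(\abs{\vec{r}})$, and the triangle inequality is immediate. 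This avoids the case split entirely and makes transparent why the inequality is in fact saturated whenever $r_\sigma$ lies between $r_\rho$ and $r_\omega$; the paper's computation confirms this saturation in that case but does not isolate the underlying reason.
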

\begin{proof}
Let $\vec{r}_\rho$, $\vec{r}_\sigma$ and $\vec{r}_\omega$ be the Bloch vectors of $\rho,\sigma$ and $\omega$. By Corollary~\ref{cor:d_z_for_xy_states},
    \begin{equation}\label{eq:triang_ineq_xz}\begin{aligned}
        &d^2_{z,2}(\rho,\sigma)+d^2_{z,2}(\sigma,\omega)-d^2_{z,2}(\rho,\omega)=\sqrt{1-\min(r_\rho,r_\sigma)^2}-\sqrt{1-\max(r_\rho,r_\sigma)^2}\\
        +&\sqrt{1-\min(\vphantom{r_\rho}r_\sigma,r_\omega)^2}-\sqrt{1-\max(\vphantom{r_\rho}r_\sigma,r_\omega)^2}-\sqrt{1-\min(r_\rho,r_\omega)^2}+\sqrt{1-\max(r_\rho,r_\omega)^2}.
    \end{aligned}\end{equation}
If $r_\rho\leq r_\sigma\leq r_\omega$, then \eqref{eq:triang_ineq_xz} takes the following form:
\begin{equation}\begin{aligned}
        &d^2_{z,2}(\rho,\sigma)+d^2_{z,2}(\sigma,\omega)-d^2_{z,2}(\rho,\omega)=\sqrt{1-r_\rho^2}-\sqrt{1-r_\sigma^2\vphantom{r_\rho}}\\
        +&\sqrt{1-r_\sigma^2\vphantom{r_\rho}}-\sqrt{1-r_\omega^2\vphantom{r_\rho}}-\sqrt{1-r_\rho^2}+\sqrt{1-r_\omega^2\vphantom{r_\rho}}=0.
    \end{aligned}\end{equation}
If $r_\rho\leq r_\omega\leq r_\sigma$, then \eqref{eq:triang_ineq_xz} takes the following form:
\begin{equation}\begin{aligned}
        &d^2_{z,2}(\rho,\sigma)+d^2_{z,2}(\sigma,\omega)-d^2_{z,2}(\rho,\omega)=\sqrt{1-r_\rho^2}-\sqrt{1-r_\sigma^2\vphantom{r_\rho}}\\
        +&\sqrt{1-r_\omega^2\vphantom{r_\rho}}-\sqrt{1-r_\sigma^2\vphantom{r_\rho}}-\sqrt{1-r_\rho^2}+\sqrt{1-r_\omega^2\vphantom{r_\rho}}=2\left(\sqrt{1-r_\omega^2\vphantom{r_\rho}}-\sqrt{1-r_\sigma^2\vphantom{r_\rho}}\right),
    \end{aligned}\end{equation}
which is nonnegative by assumption.
If $r_\sigma\leq r_\rho\leq r_\omega$, then \eqref{eq:triang_ineq_xz} takes the following form:
\begin{equation}\begin{aligned}
        &d^2_{z,2}(\rho,\sigma)+d^2_{z,2}(\sigma,\omega)-d^2_{z,2}(\rho,\omega)=\sqrt{1-r_\sigma^2\vphantom{r_\rho}}-\sqrt{1-r_\rho^2}\\
        +&\sqrt{1-r_\sigma^2\vphantom{r_\rho}}-\sqrt{1-r_\omega^2\vphantom{r_\rho}}-\sqrt{1-r_\rho^2}+\sqrt{1-r_\omega^2\vphantom{r_\rho}}=2\left(\sqrt{1-r_\sigma^2\vphantom{r_\rho}}-\sqrt{1-r_\rho^2}\right),
    \end{aligned}\end{equation}
which is nonnegative by assumption. The other three cases of the order of $r_\rho, r_\sigma, r_\omega$ can be transformed into either one of the above using the fact that \eqref{eq:triang_ineq_xz} is symmetric in $r_\rho$ and $r_\omega$. Thus \eqref{eq:triang_ineq_xz} is nonnegative which completes the proof.
\end{proof}

We conclude this section by a simple computation which is an ingredient of the proof of Proposition \ref{prop:strictly-smaller}, and also a sanity check showing that one gets back the classical optimal transportation problem when both states involved commute with the observable generating the transport cost.

\begin{proposition}\label{prop:D_z_for_z_states}
    Let
    \begin{equation}\label{eq:rho_alpha_z}\begin{aligned}
        \rho(\alpha):=\frac{1}{2}\left(I+\alpha\sigma_z\right)
        =\frac{1}{2}\lesq{\ba{cc} 1+\alpha & 0 \\ 0 & 1-\alpha \ea},
    \end{aligned}\end{equation}
    then
    \begin{equation}\label{eq:D_z_comm}\begin{aligned}
        D^p_{z,p}(\rho(\alpha),\rho(\beta))=2^{p-1}\abs{\alpha-\beta}.
    \end{aligned}\end{equation}
\end{proposition}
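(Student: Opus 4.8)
The plan is to exploit that, in this commuting case, both the cost operator $C_{z,p}$ of \eqref{eq:Cz} and the prescribed marginals $\rho(\alpha)$ and $\rho(\beta)$ are diagonal in the eigenbasis of $\sigma_z$, so the quantum transport problem collapses onto the diagonal and degenerates into a genuine \emph{classical} two-point optimal transport problem. Concretely, writing the diagonal entries of a coupling $\Pi \in \cC\ler{\rho(\alpha),\rho(\beta)}$ in the computational basis as $\pi_{00},\pi_{01},\pi_{10},\pi_{11}\geq 0$ (the nonnegativity coming from $\Pi\geq 0$), the key observation is that $C_{z,p}=\mathrm{diag}\ler{0,2^p,2^p,0}$ is diagonal, and hence $\tr_{\hohc}\lesq{\Pi C_{z,p}}=2^p\ler{\pi_{01}+\pi_{10}}$ depends on $\Pi$ only through its diagonal.

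For the lower bound I would use that the diagonal of a partial trace sees only the diagonal of the full operator: the constraints $\tr_{\cH^*}\lesq{\Pi}=\rho(\beta)$ and $\tr_{\cH}\lesq{\Pi}=\rho(\alpha)^T=\rho(\alpha)$ force
\[
\pi_{00}+\pi_{01}=\tfrac{1+\beta}{2},\quad \pi_{10}+\pi_{11}=\tfrac{1-\beta}{2},\quad \pi_{00}+\pi_{10}=\tfrac{1+\alpha}{2},\quad \pi_{01}+\pi_{11}=\tfrac{1-\alpha}{2}.
\]
Thus $(\pi_{ab})$ is precisely a classical coupling of the probability vectors $\ler{\tfrac{1+\alpha}{2},\tfrac{1-\alpha}{2}}$ and $\ler{\tfrac{1+\beta}{2},\tfrac{1-\beta}{2}}$ for the two-point cost $c(a,b)=2^p\,\ind{a\neq b}$, and minimizing the transported mass $\pi_{01}+\pi_{10}$ over all such couplings is the elementary classical problem whose value is $\tfrac{1}{2}\abs{\alpha-\beta}$. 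Consequently $\tr_{\hohc}\lesq{\Pi C_{z,p}}\geq 2^{p-1}\abs{\alpha-\beta}$ for every coupling $\Pi$.

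For the matching upper bound I would simply exhibit the diagonal coupling realizing the optimal classical plan: for $\alpha\geq\beta$ take $\Pi=\mathrm{diag}\ler{\tfrac{1+\beta}{2},0,\tfrac{\alpha-\beta}{2},\tfrac{1-\alpha}{2}}$, which is positive semidefinite and, being diagonal, has diagonal marginals that match $\rho(\beta)$ and $\rho(\alpha)^T$ exactly; it yields $\tr_{\hohc}\lesq{\Pi C_{z,p}}=2^{p-1}(\alpha-\beta)$, and the case $\alpha<\beta$ is symmetric, giving \eqref{eq:D_z_comm}. As an independent cross-check one can instead run the primal--dual route of the preceding proofs: Corollary~\ref{cor:factorized-duality} with $K=1$ certifies optimality via the Kantorovich potentials $X=\mathrm{diag}\ler{0,-2^p}$ and $Y=\mathrm{diag}\ler{0,2^p}$ (for $\alpha\geq\beta$), since $Y\otimes I^T+I\otimes X^T=\mathrm{diag}\ler{0,-2^p,2^p,0}\leq C_{z,p}$ while $\tr_{\cH}\lesq{\rho(\alpha)X}+\tr_{\cH}\lesq{\rho(\beta)Y}=2^{p-1}(\alpha-\beta)$.

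I expect no serious obstacle here: the statement is essentially a consistency check, and the only points needing (easy) justification are that the off-diagonal entries of $\Pi$ are irrelevant because $C_{z,p}$ is diagonal, and that the diagonal of $\Pi$ is automatically a bona fide classical coupling because tracing out a tensor factor commutes with reading off diagonals. Everything else reduces to the transparent two-point classical computation.
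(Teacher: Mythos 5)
Your proof is correct. The upper bound is the paper's own: your diagonal coupling $\mathrm{diag}\ler{\tfrac{1+\beta}{2},0,\tfrac{\alpha-\beta}{2},\tfrac{1-\alpha}{2}}$ (for $\alpha\geq\beta$) is exactly the $\Pi(\alpha,\beta)$ exhibited there. Your lower bound, however, takes a genuinely different route. The paper certifies optimality through the dual problem, guessing Kantorovich potentials $X=\mathrm{diag}(2^p,0)$, $Y=-X$ and verifying $C_{z,p}-Y\otimes I^T-I\otimes X^T\geq 0$; you instead observe that $C_{z,p}$ in \eqref{eq:Cz} is diagonal, so the loss $\tr_{\hohc}\lesq{\Pi C_{z,p}}=2^p\ler{\pi_{01}+\pi_{10}}$ sees only the diagonal of $\Pi$, and that the diagonal of any quantum coupling is automatically a classical coupling of $\ler{\tfrac{1+\alpha}{2},\tfrac{1-\alpha}{2}}$ and $\ler{\tfrac{1+\beta}{2},\tfrac{1-\beta}{2}}$, because partial traces restrict to diagonals entrywise and $\Pi\geq 0$ forces $\pi_{ab}\geq 0$. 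This collapses the problem to two-point classical transport with indicator cost, whose value $\tfrac{1}{2}\abs{\alpha-\beta}$ is the total variation distance. Your argument is more elementary and self-contained --- it needs neither the duality theorem nor a guessed dual certificate --- and it turns the paper's informal closing remark (that commuting states recover the classical transport problem) into the actual proof mechanism. What the paper's route buys instead is an explicit pair of optimal Kantorovich potentials, in keeping with the theme of the section; your cross-check potentials $X=\mathrm{diag}(0,-2^p)$, $Y=\mathrm{diag}(0,2^p)$ are equivalent to the paper's under the objective- and constraint-preserving shift $(X,Y)\mapsto(X-2^pI,\,Y+2^pI)$, so that part of your write-up coincides with the published proof.
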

\begin{proof}
    Consider
    \begin{equation}\begin{aligned}
        \Pi(\alpha,\beta):=\frac{1}{2}\lesq{\ba{cccc} 1+\min(\alpha,\beta) & 0 & 0 & 0\\
        0 & \max(\beta-\alpha,0) & 0 & 0\\
        0 & 0 & \max(\alpha-\beta,0) & 0\\
        0 & 0 & 0 & (1-\max(\alpha,\beta))\ea}.
    \end{aligned}\end{equation}
    $\Pi(\alpha,\beta)$ is clearly a coupling of $\rho(\alpha)$ and $\rho(\beta)$ and thus
    \begin{equation}\begin{aligned} \label{eq:D_z_upperboundz}
         D^p_{z,p}(\rho(\alpha),\rho(\beta))&\leq \tr \lesq{C_{z,p}\Pi(\alpha,\beta)}=2^{p-1}\abs{\alpha-\beta}.
    \end{aligned}\end{equation}
    Now consider
    \begin{equation}\begin{aligned}
        X=\lesq{\ba{cc}2^p & 0 \\ 0 & 0 \ea},\quad Y=-X=\lesq{\ba{cc} -2^{p} & 0 \\ 0 & 0 \ea}.
    \end{aligned}\end{equation}
    Clearly, $X$ and $Y$ are self-adjoint. It is also evident that
    \begin{equation}\begin{aligned}
        C_{z,p}-Y\otimes I^T-I\otimes X^T=
        \lesq{\ba{cccc} 0 & 0 & 0 & 0\\
        0 & 2^{p+1} & 0 & 0\\
        0 & 0 & 0 & 0\\
        0 & 0 & 0 & 0\ea} \geq 0,
    \end{aligned}\end{equation}
    and similarly
    \begin{equation}\begin{aligned}
        C_{z,p}-X\otimes I^T-I\otimes Y^T=
        \lesq{\ba{cccc} 0 & 0 & 0 & 0\\
        0 & 0 & 0 & 0\\
        0 & 0 & 2^{p+1} & 0\\
        0 & 0 & 0 & 0\ea}\geq 0.
    \end{aligned}\end{equation}
    It follows that
    \begin{equation}\begin{aligned} \label{eq:D_z_lowerboundz}
         D^p_{z,p}(\rho(\alpha),\rho(\beta))&\geq \max\left(\tr \lesq{ X\left(\rho(\alpha)-\rho(\beta)\right)},\Tr \lesq{X\left(\rho(\beta)-\rho(\alpha)\right)}\right)=2^{p-1}\abs{\alpha-\beta}.
    \end{aligned}\end{equation}
\end{proof}
The following is an immediate corollary.
\begin{corollary}\label{cor:d_z_for_z_states}
    Let
    \begin{equation}\begin{aligned}
        \rho(\alpha):=\frac{1}{2}\left(I+\alpha\sigma_z\right)=\frac{1}{2}\begin{pmatrix}1+\alpha & 0 \\ 0 & 1-\alpha \end{pmatrix},
    \end{aligned}\end{equation}
    then
    \begin{equation}\begin{aligned}
        d^2_{z,2}(\rho(\alpha),\rho(\beta))=D^2_{z,2}(\rho(\alpha),\rho(\beta))=2\abs{\alpha-\beta}.
    \end{aligned}\end{equation}

\end{corollary}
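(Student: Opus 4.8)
The plan is to read off both claimed identities directly from Proposition~\ref{prop:D_z_for_z_states}, since the states $\rho(\alpha)=\frac{1}{2}(I+\alpha\sigma_z)$ fall exactly under its hypothesis. The second identity, $D^p_{z,p}(\rho(\alpha),\rho(\beta))=2^{p-1}\abs{\alpha-\beta}$ for every $p\geq 1$, is literally the content of \eqref{eq:D_z_comm}, so nothing new is required there; it is restated in the corollary only for the reader's convenience.

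For the first identity I would unfold the definition \eqref{eq:quadratic-divergence-def} of the quadratic divergence in the case $\cA=\lers{\sigma_z}$:
\begin{align}
d^2_{z,2}(\rho(\alpha),\rho(\beta))
=D^2_{z,2}(\rho(\alpha),\rho(\beta))-\frac{1}{2}\ler{D^2_{z,2}(\rho(\alpha),\rho(\alpha))+D^2_{z,2}(\rho(\beta),\rho(\beta))}.
\end{align}
Specializing Proposition~\ref{prop:D_z_for_z_states} to $p=2$ gives $D^2_{z,2}(\rho(\alpha),\rho(\beta))=2\abs{\alpha-\beta}$, and in particular the two self-distances $D^2_{z,2}(\rho(\alpha),\rho(\alpha))$ and $D^2_{z,2}(\rho(\beta),\rho(\beta))$ both vanish, being proportional to $\abs{\alpha-\alpha}$ and $\abs{\beta-\beta}$ respectively. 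Substituting these three values into the display yields $d^2_{z,2}(\rho(\alpha),\rho(\beta))=2\abs{\alpha-\beta}$, which coincides with $D^2_{z,2}(\rho(\alpha),\rho(\beta))$, exactly as asserted.

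There is essentially no obstacle to overcome: the entire statement is a bookkeeping consequence of the closed form already established in the preceding proposition. The only conceptually meaningful point --- and the reason the divergence agrees with the distance on this particular family --- is that the self-distances are zero, reflecting that the optimal self-coupling of a state commuting with $\sigma_z$ (the diagonal coupling $\Pi(\alpha,\alpha)$ appearing in the proof of Proposition~\ref{prop:D_z_for_z_states}) carries no cost against $C_{z,p}$. This stands in sharp contrast with the generic behavior recalled in the surrounding text, where $D_{z,p}(\rho,\rho)>0$ can occur, and it is precisely this vanishing that the divergence construction of \eqref{eq:quadratic-divergence-def} is designed to detect.
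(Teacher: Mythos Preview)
Your proposal is correct and matches the paper's approach: the paper states the corollary as ``an immediate corollary'' of Proposition~\ref{prop:D_z_for_z_states} with no further argument, and you have simply written out what ``immediate'' means here --- reading off $D^p_{z,p}$ from \eqref{eq:D_z_comm} and observing that the self-distances vanish so that the divergence definition \eqref{eq:quadratic-divergence-def} collapses to $D^2_{z,2}$.
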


\paragraph*{{\bf Acknowledgment.}} We thank the anonymous referee for his/her valuable comments and insightful suggestions.


\begin{small}
\bibliographystyle{plainurl}  
\bibliography{references.bib}

@article{beatty-franca,
  author = {Emily Beatty and Daniel Stilck França},
  title = {Order $p$ quantum {W}asserstein distances from couplings},
  JOURNAL = {Ann. Henri Poincaré},
  year = {2025},
DOI = {https://doi.org/10.1007/s00023-025-01557-z}
}

@article {Brenier-polar-en,
    AUTHOR = {Brenier, Yann},
     TITLE = {Polar factorization and monotone rearrangement of
              vector-valued functions},
   JOURNAL = {Comm. Pure Appl. Math.},
  FJOURNAL = {Communications on Pure and Applied Mathematics},
    VOLUME = {44},
      YEAR = {1991},
    NUMBER = {4},
     PAGES = {375--417},
      ISSN = {0010-3640,1097-0312},
   MRCLASS = {46E40 (35Q99 46E99 49Q99)},
  MRNUMBER = {1100809},
MRREVIEWER = {Robert\ McOwen},
       DOI = {10.1002/cpa.3160440402},
       URL = {https://doi.org/10.1002/cpa.3160440402},
}

@article {Brenier-polar-fra,
    AUTHOR = {Brenier, Yann},
     TITLE = {D\'ecomposition polaire et r\'earrangement monotone des champs
              de vecteurs},
   JOURNAL = {C. R. Acad. Sci. Paris S\'er. I Math.},
  FJOURNAL = {Comptes Rendus des S\'eances de l'Acad\'emie des Sciences.
              S\'erie I. Math\'ematique},
    VOLUME = {305},
      YEAR = {1987},
    NUMBER = {19},
     PAGES = {805--808},
      ISSN = {0249-6291},
   MRCLASS = {58G30 (26A48 28D05 35J60 49A34 58A12)},
  MRNUMBER = {923203},
MRREVIEWER = {Robert\ McOwen},
}

@misc{BPTV-p-Wass,
      title={{W}asserstein distances and divergences of order $p$ by quantum channels}, 
      author={Gergely Bunth and József Pitrik and Tamás Titkos and Dániel Virosztek},
      year={2025},
      eprint={2501.08066},
      archivePrefix={arXiv},
      primaryClass={math-ph},
      }

@Inbook{Connes-Lott,
author="Connes, Alain
and Lott, John",

title="The Metric Aspect of Noncommutative Geometry",
bookTitle="New Symmetry Principles in Quantum Field Theory",
year="1992",
publisher="Springer US",
address="Boston, MA",
pages="53--93",
abstract="Most of the previous work on ``noncommutative geometry'' could more accurately be labeled as noncommutative differential topology, in that it deals with the homology of differential forms on noncommutative spaces (cyclic homology) and vector bundles on noncommutative spaces (K-theory) [Col]. However, the essence of geometry has to do with the metric properties of spaces.",
isbn="978-1-4615-3472-3",
doi="10.1007/978-1-4615-3472-3_3",
url="https://doi.org/10.1007/978-1-4615-3472-3_3"
}

@incollection {Golse-lecture-notes,
    AUTHOR = {Golse, Fran\c{c}ois},
     TITLE = {Quantum optimal transport: quantum couplings and many-body
              problems},
 BOOKTITLE = {Optimal transport on quantum structures},
    SERIES = {Bolyai Soc. Math. Stud.},
    VOLUME = {29},
     PAGES = {91--202},
 PUBLISHER = {Springer, Cham},
      YEAR = {[2024] \copyright 2024},
      ISBN = {978-3-031-50465-5; 978-3-031-50466-2},
   MRCLASS = {49Q22 (81Q20 81S30 81V70)},
  MRNUMBER = {4823564},
       DOI = {10.1007/978-3-031-50466-2\_3},
}

@book {Brezis-FA-book,
    AUTHOR = {Brezis, Haim},
     TITLE = {Functional analysis, {S}obolev spaces and partial differential
              equations},
    SERIES = {Universitext},
 PUBLISHER = {Springer, New York},
      YEAR = {2011},
     PAGES = {xiv+599},
      ISBN = {978-0-387-70913-0},
   MRCLASS = {35-01 (46-01 46E35 46N20 47F05)},
  MRNUMBER = {2759829},
MRREVIEWER = {Vicen\c tiu\ D.\ R\u adulescu},
}

@incollection {JKO-97a,
    AUTHOR = {Jordan, Richard and Kinderlehrer, David and Otto, Felix},
     TITLE = {The route to stability through {F}okker-{P}lanck dynamics},
 BOOKTITLE = {Differential equations and applications ({H}angzhou, 1996)},
     PAGES = {108--126},
 PUBLISHER = {Int. Press, Cambridge, MA},
      YEAR = {[1997]}
      }

@article {JKO-97b,
    AUTHOR = {Jordan, Richard and Kinderlehrer, David and Otto, Felix},
     TITLE = {Free energy and the {F}okker-{P}lanck equation},
   JOURNAL = {Phys. D},
    VOLUME = {107},
      YEAR = {1997},
    NUMBER = {2-4},
     PAGES = {265--271},
       DOI = {10.1016/S0167-2789(97)00093-6}
}

@article {JKO-98,
    AUTHOR = {Jordan, Richard and Kinderlehrer, David and Otto, Felix},
     TITLE = {The variational formulation of the {F}okker-{P}lanck equation},
   JOURNAL = {SIAM J. Math. Anal.},
  FJOURNAL = {SIAM Journal on Mathematical Analysis},
    VOLUME = {29},
      YEAR = {1998},
    NUMBER = {1},
     PAGES = {1--17},
      ISSN = {0036-1410},
   MRCLASS = {35Q99 (35A15 49J99 60J60 82C31)},
  MRNUMBER = {1617171},
MRREVIEWER = {Thierry Goudon},
       DOI = {10.1137/S0036141096303359},
       URL = {https://doi.org/10.1137/S0036141096303359}
}

@book {Ambrosiobook,
    AUTHOR = {Ambrosio, Luigi and Gigli, Nicola and Savar\'{e}, Giuseppe},
     TITLE = {Gradient flows in metric spaces and in the space of
              probability measures},
    SERIES = {Lectures in Mathematics ETH Z\"{u}rich},
   EDITION = {Second},
 PUBLISHER = {Birkh\"{a}user Verlag, Basel},
      YEAR = {2008},
     PAGES = {x+334},
      ISBN = {978-3-7643-8721-1},
   MRCLASS = {49-02 (28A33 35K55 35K90 49Q20 60B05)},
  MRNUMBER = {2401600},
MRREVIEWER = {Pietro Celada},
}

@article {BianeVoiculescu,
    AUTHOR = {Biane, Philippe and Voiculescu, Dan},
     TITLE = {A free probability analogue of the {W}asserstein metric on the trace-state space},
   JOURNAL = {Geom. Funct. Anal.},
  FJOURNAL = {Geometric and Functional Analysis},
    VOLUME = {11},
      YEAR = {2001},
    NUMBER = {6},
     PAGES = {1125--1138},
       DOI = {10.1007/s00039-001-8226-4}
}

@misc{BistronEcksteinZyczkowski,
  author = {Bistroń, Rafał and Eckstein, Michał and Życzkowski, Karol},
  title = {Monotonicity of the quantum 2-{W}asserstein distance},
journal = {J. Phys. A: Math. Theor. 56 095301},
 doi = {DOI 10.1088/1751-8121/acb9c8},
  year = {2023},
  
}

@article {bgl,
    AUTHOR = {Boissard, Emmanuel and Le Gouic, Thibaut and Loubes,
              Jean-Michel},
     TITLE = {Distribution's template estimate with {W}asserstein metrics},
   JOURNAL = {Bernoulli},
  FJOURNAL = {Bernoulli. Official Journal of the Bernoulli Society for
              Mathematical Statistics and Probability},
    VOLUME = {21},
      YEAR = {2015},
    NUMBER = {2},
     PAGES = {740--759},
      ISSN = {1350-7265,1573-9759},
   MRCLASS = {62G05 (60B10 62G20)},
  MRNUMBER = {3338645},
MRREVIEWER = {Santanu\ Chakraborty},
       DOI = {10.3150/13-BEJ585},
       URL = {https://doi.org/10.3150/13-BEJ585},
}

@article{BPTV-metric-24,
  title = {Metric property of quantum {W}asserstein divergences},
  author = {Bunth, Gergely and Pitrik, J\'ozsef and Titkos, Tam\'as and Virosztek, D\'aniel},
  journal = {Phys. Rev. A},
  volume = {110},
  issue = {2},
  pages = {022211},
  numpages = {14},
  year = {2024},
  month = {Aug},
  publisher = {American Physical Society},
  doi = {10.1103/PhysRevA.110.022211},
}

@article {Butkovsky,
    AUTHOR = {Butkovsky, Oleg},
     TITLE = {Subgeometric rates of convergence of {M}arkov processes in the
              {W}asserstein metric},
   JOURNAL = {Ann. Appl. Probab.},
  FJOURNAL = {The Annals of Applied Probability},
    VOLUME = {24},
      YEAR = {2014},
    NUMBER = {2},
     PAGES = {526--552},
       DOI = {10.1214/13-AAP922}
}

@article {CagliotiGolsePaul-towardsqot,
    AUTHOR = {Caglioti, Emanuele and Golse, Fran\c{c}ois and Paul, Thierry},
     TITLE = {Towards optimal transport for quantum densities},
   JOURNAL = {Ann. Sc. Norm. Super. Pisa Cl. Sci. (5)},
  FJOURNAL = {Annali della Scuola Normale Superiore di Pisa. Classe di
              Scienze. Serie V},
    VOLUME = {24},
      YEAR = {2023},
    NUMBER = {4},
     PAGES = {1981--2045},
      ISSN = {0391-173X,2036-2145},
   MRCLASS = {49Q22 (82C70)},
  MRNUMBER = {4702747},
       DOI = {10.2422/2036-2145.202106\_011},
}

@article {CarlenMaas-1,
    AUTHOR = {Carlen, Eric A. and Maas, Jan},
     TITLE = {Correction to: {N}on-commutative calculus, optimal transport
              and functional inequalities in dissipative quantum systems},
   JOURNAL = {J. Stat. Phys.},
  FJOURNAL = {Journal of Statistical Physics},
    VOLUME = {181},
      YEAR = {2020},
    NUMBER = {6},
     PAGES = {2432--2433},
       DOI = {10.1007/s10955-020-02671-4}
}

@article {CarlenMaas-2,
    AUTHOR = {Carlen, Eric A. and Maas, Jan},
     TITLE = {Non-commutative calculus, optimal transport and functional
              inequalities in dissipative quantum systems},
   JOURNAL = {J. Stat. Phys.},
  FJOURNAL = {Journal of Statistical Physics},
    VOLUME = {178},
      YEAR = {2020},
    NUMBER = {2},
     PAGES = {319--378},
       DOI = {10.1007/s10955-019-02434-w}
}

@article {CarlenMaas-3,
    AUTHOR = {Carlen, Eric A. and Maas, Jan},
     TITLE = {Gradient flow and entropy inequalities for quantum {M}arkov
              semigroups with detailed balance},
   JOURNAL = {J. Funct. Anal.},
  FJOURNAL = {Journal of Functional Analysis},
    VOLUME = {273},
      YEAR = {2017},
    NUMBER = {5},
     PAGES = {1810--1869},
       DOI = {10.1016/j.jfa.2017.05.003}
}

@article {CarlenMaas-4,
    AUTHOR = {Carlen, Eric A. and Maas, Jan},
     TITLE = {An analog of the 2-{W}asserstein metric in non-commutative
              probability under which the fermionic {F}okker-{P}lanck
              equation is gradient flow for the entropy},
   JOURNAL = {Comm. Math. Phys.},
  FJOURNAL = {Communications in Mathematical Physics},
    VOLUME = {331},
      YEAR = {2014},
    NUMBER = {3},
     PAGES = {887--926},
       DOI = {10.1007/s00220-014-2124-8}
}

@article {DattaRouze1,
    AUTHOR = {Datta, Nilanjana and Rouz\'{e}, Cambyse},
     TITLE = {Relating relative entropy, optimal transport and {F}isher
              information: a quantum {HWI} inequality},
   JOURNAL = {Ann. Henri Poincar\'{e}},
  FJOURNAL = {Annales Henri Poincar\'{e}. A Journal of Theoretical and
              Mathematical Physics},
    VOLUME = {21},
      YEAR = {2020},
    NUMBER = {7},
     PAGES = {2115--2150},
       DOI = {10.1007/s00023-020-00891-8}
}

@article {DPT-AHP,
    AUTHOR = {De Palma, Giacomo and Trevisan, Dario},
     TITLE = {Quantum optimal transport with quantum channels},
   JOURNAL = {Ann. Henri Poincar\'{e}},
  FJOURNAL = {Annales Henri Poincar\'{e}. A Journal of Theoretical and
              Mathematical Physics},
    VOLUME = {22},
      YEAR = {2021},
    NUMBER = {10},
     PAGES = {3199--3234},
       DOI = {10.1007/s00023-021-01042-3},
    }

@incollection{DPT-lecture-notes,
  author      = {De Palma, Giacomo and Trevisan, Dario},
  title       = {Quantum optimal transport: quantum channels and qubits},
  editor      = {Maas, Jan and Rademacher, Simone and Titkos, Tam\'as and Virosztek, D\'aniel},
  booktitle   = {Optimal Transport on Quantum Structures},
  publisher   = {Springer Nature},
  address     = {},
  year        = {2024},
  pages       = {},
  chapter     = {},
  eprint={2307.16268}
}

@article {DattaRouze2,
    AUTHOR = {Datta, Nilanjana and Rouz\'{e}, Cambyse},
     TITLE = {Concentration of quantum states from quantum functional and
              transportation cost inequalities},
   JOURNAL = {J. Math. Phys.},
  FJOURNAL = {Journal of Mathematical Physics},
    VOLUME = {60},
      YEAR = {2019},
    NUMBER = {1},
     PAGES = {012202, 22},
       DOI = {10.1063/1.5023210}
}

@article {Duvenhage1,
    AUTHOR = {Duvenhage, Rocco},
     TITLE = {Optimal quantum channels},
   JOURNAL = {Phys. Rev. A},
  FJOURNAL = {Physical Review A},
    VOLUME = {104},
      YEAR = {2021},
    NUMBER = {3},
     PAGES = {Paper No. 032604, 8},
       DOI = {10.1103/physreva.104.032604}
}

@article{Duvenhage-ext-quantum-det-bal,
author = {Duvenhage, Rocco and Skosana, Samuel and Snyman, Machiel},
title = {Extending quantum detailed balance through optimal transport},
journal = {Reviews in Mathematical Physics},
year = {2024},
doi = {10.1142/S0129055X24500405},
URL = {https://doi.org/10.1142/S0129055X24500405},
eprint = {2206.15287},
}

@misc{Duvenhage3,
 title = {{W}asserstein distance between noncommutative dynamical systems},
journal = {Journal of Mathematical Analysis and Applications},
volume = {527},
number = {1, Part 2},
pages = {127353},
year = {2023},
issn = {0022-247X},
doi = {https://doi.org/10.1016/j.jmaa.2023.127353},
url = {https://www.sciencedirect.com/science/article/pii/S0022247X23003566},
author = {Rocco Duvenhage},
keywords = {Optimal transport, Wasserstein distance, von Neumann algebras, States, Dynamical systems, Open systems},
abstract = {We introduce and study a class of quadratic Wasserstein distances on spaces consisting of generalized dynamical systems on a von Neumann algebra. We emphasize how symmetry of such a Wasserstein distance arises, but also study the asymmetric case. This setup is illustrated in the context of reduced dynamics, and a number of simple examples are also presented.}
}

@article {Duvenhage-quad-Wass-vNA,
    AUTHOR = {Duvenhage, Rocco},
     TITLE = {Quadratic {W}asserstein metrics for von {N}eumann algebras via
              transport plans},
   JOURNAL = {J. Operator Theory},
  FJOURNAL = {Journal of Operator Theory},
    VOLUME = {88},
      YEAR = {2022},
    NUMBER = {2},
     PAGES = {289--308},
      ISSN = {0379-4024,1841-7744},
   MRCLASS = {46L30 (46L10 46L55 49Q22)},
  MRNUMBER = {4534899},
MRREVIEWER = {Qihui\ Li},
eprint = {2012.03564},
}

@article {FigalliMaggi1,
    AUTHOR = {Figalli, A. and Maggi, F.},
     TITLE = {On the shape of liquid drops and crystals in the small mass
              regime},
   JOURNAL = {Arch. Ration. Mech. Anal.},
  FJOURNAL = {Archive for Rational Mechanics and Analysis},
    VOLUME = {201},
      YEAR = {2011},
    NUMBER = {1},
     PAGES = {143--207},
       DOI = {10.1007/s00205-010-0383-x}
}

@article {FigalliMaggi2,
    AUTHOR = {Figalli, A. and Maggi, F. and Pratelli, A.},
     TITLE = {A mass transportation approach to quantitative isoperimetric
              inequalities},
   JOURNAL = {Invent. Math.},
  FJOURNAL = {Inventiones Mathematicae},
    VOLUME = {182},
      YEAR = {2010},
    NUMBER = {1},
     PAGES = {167--211},
       DOI = {10.1007/s00222-010-0261-z}
}

@book {Figalli-book,
    AUTHOR = {Figalli, Alessio and Glaudo, Federico},
     TITLE = {An invitation to optimal transport, {W}asserstein distances,
              and gradient flows},
    SERIES = {EMS Textbooks in Mathematics},
 PUBLISHER = {EMS Press, Berlin},
      YEAR = {[2021]},
     PAGES = {vi+136},
      ISBN = {978-3-98547-010-5},
       DOI = {10.4171/ETB/22}
}

@article {Friedland-Eckstein-Cole-Zyczkowski-MK,
    AUTHOR = {Friedland, Shmuel and Eckstein, Micha\l  and Cole, Sam and
              \.{Z}yczkowski, Karol},
     TITLE = {Quantum {M}onge-{K}antorovich {P}roblem and {T}ransport
              {D}istance between {D}ensity {M}atrices},
   JOURNAL = {Phys. Rev. Lett.},
  FJOURNAL = {Physical Review Letters},
    VOLUME = {129},
      YEAR = {2022},
    NUMBER = {11},
     PAGES = {Paper No. 110402},
     DOI = {10.1103/physrevlett.129.110402}
     }

@article{Cole-Eckstein-Friedland-Zyczkowski-QOT,
    AUTHOR= {Cole, Sam and Eckstein, Micha\l and Friedland, Shmuel and \.{Z}yczkowski, Karol},
    TITLE={On {Q}uantum {O}ptimal {T}ransport},
    JOURNAL={Mathematical Physics, Analysis and Geometry},
    VOLUME={26},
    YEAR={2023},
    PAGES={Article No. 14},
    DOI={10.1007/s11040-023-09456-7}
    }

@article{Trevisan-review-QOT,
    AUTHOR={Trevisan, Dario},
    TITLE={Quantum optimal transport: an invitation},
    JOURNAL={Bollettino dell'Unione Matematica Italiana},
    VOLUME={18},
    YEAR={2025},
    PAGES={347--360},
    DOI={10.1007/s40574-024-00428-5}
}

@article{beatty2025wasserstein,
  title={Wasserstein Distances on Quantum Structures: an Overview},
  author={Beatty, Emily},
  journal={arXiv preprint arXiv:2506.09794},
  year={2025}
}

@article{toth-pitrik-2025,
  title={Several types of quantum {W}asserstein distance based on an optimization over separable states},
  author={T{\'o}th, G{\'e}za and Pitrik, J{\'o}zsef},
  journal={arXiv preprint arXiv:2506.14523},
  year={2025}
}

@article{kantorovich1942-transloc,
  title={{On the translocation of masses}},
  author={Kantorovich, Leonid V},
  journal={Doklady Akademii Nauk SSSR},
  volume={37},
  number={7-8},
  pages={227--229},
  year={1942},
  language={russian},
  note={{Also published in Comptes Rendus de l'Acad\'emie des Sciences de l'URSS}}
}

@article{kantorovich1948-monge,
  title={{On a problem of {M}onge (in Russian)}},
  author={Kantorovich, Leonid V},
  journal={Uspekhi Matematicheskikh Nauk},
  volume={3},
  number={2},
  pages={225--226},
  year={1948},
  language={russian},
  note={{English translation available in Journal of Mathematical Sciences (New York), 2004, 133(1), 15--16}}
}

@article{monge1781memoire,
  author={{M}onge, Gaspard},
  title={{M\'emoire sur la th\'eorie des d\'eblais et des remblais}},
  journal={{Histoire de l'Acad\'emie Royale des Sciences de Paris, avec les M\'emoires de Math\'ematique et de Physique}},
  year={1781},
  volume={1781},
  pages={666--705},
  publisher={{De l'Imprimerie Royale}},
  language={french},
  note={Published in 1784, covering the year 1781}
}

@article {GolsePaul-OTapproach,
    AUTHOR = {Golse, Fran\c{c}ois and Paul, Thierry},
     TITLE = {Quantitative observability for the {S}chr\"{o}dinger and
              {H}eisenberg equations: an optimal transport approach},
   JOURNAL = {Math. Models Methods Appl. Sci.},
  FJOURNAL = {Mathematical Models and Methods in Applied Sciences},
    VOLUME = {32},
      YEAR = {2022},
    NUMBER = {5},
     PAGES = {941--963},
       DOI = {10.1142/S021820252250021X}
}

@article {GolseTPaul-pseudometrics,
    AUTHOR = {Golse, Fran\c{c}ois and Paul, Thierry},
     TITLE = {Optimal transport pseudometrics for quantum and classical
              densities},
   JOURNAL = {J. Funct. Anal.},
  FJOURNAL = {Journal of Functional Analysis},
    VOLUME = {282},
      YEAR = {2022},
    NUMBER = {9},
     PAGES = {Paper No. 109417, 53},
       DOI = {10.1016/j.jfa.2022.109417}
}

@article {CagliotiGolsePaul,
    AUTHOR = {Caglioti, Emanuele and Golse, Fran\c{c}ois and Paul, Thierry},
     TITLE = {Quantum optimal transport is cheaper},
   JOURNAL = {J. Stat. Phys.},
  FJOURNAL = {Journal of Statistical Physics},
    VOLUME = {181},
      YEAR = {2020},
    NUMBER = {1},
     PAGES = {149--162},
       DOI = {10.1007/s10955-020-02571-7}
}

@article{PC,
url = {http://dx.doi.org/10.1561/2200000073},
year = {2019},
volume = {11},
journal = {Foundations and Trends® in Machine Learning},
title = {Computational Optimal Transport: With Applications to Data Science},
doi = {10.1561/2200000073},
issn = {1935-8237},
number = {5-6},
pages = {355-607},
author = {Gabriel Peyré and Marco Cuturi}
}

@article {MachineLearning3,
    AUTHOR = {Ramdas, Aaditya and Garc\'ia Trillos, Nicol\'as and Cuturi, Marco},
     TITLE = {On {W}asserstein two-sample testing and related families of nonparametric tests},
   JOURNAL = {Entropy},
  FJOURNAL = {Entropy. An International and Interdisciplinary Journal of
              Entropy and Information Studies},
    VOLUME = {19},
      YEAR = {2017},
    NUMBER = {2},
     PAGES = {Paper No. 47, 15},
      ISSN = {1099-4300},
   MRCLASS = {62G10},
  MRNUMBER = {3608466},
MRREVIEWER = {Marco\ Marozzi},
       DOI = {10.3390/e19020047},
       URL = {https://doi.org/10.3390/e19020047},
}

@article{m2,
author = {Srivastava, Sanvesh and Li, Cheng and Dunson, David B.},
title = {Scalable bayes via barycenter in wasserstein space},
year = {2018},
issue_date = {January 2018},
publisher = {JMLR.org},
volume = {19},
number = {1},
issn = {1532-4435},
journal = {J. Mach. Learn. Res.},
month = jan,
pages = {312–346},
numpages = {35},
keywords = {optimal transportation, linear programming, empirical measures, distributed Bayesian computations, big data, barycenter, {W}asserstein space, {W}asserstein distance}
}

@article {GolsePaul-Nbody,
    AUTHOR = {Golse, Fran\c{c}ois and Paul, Thierry},
     TITLE = {Mean-field and classical limit for the {$N$}-body quantum
              dynamics with {C}oulomb interaction},
   JOURNAL = {Comm. Pure Appl. Math.},
  FJOURNAL = {Communications on Pure and Applied Mathematics},
    VOLUME = {75},
      YEAR = {2022},
    NUMBER = {6},
     PAGES = {1332--1376},
       DOI = {10.1002/cpa.21986}
       }

@article {GolsePaul-meanfieldlimit,
    AUTHOR = {Golse, Fran\c{c}ois and Paul, Thierry},
     TITLE = {Empirical measures and quantum mechanics: applications to the
              mean-field limit},
   JOURNAL = {Comm. Math. Phys.},
  FJOURNAL = {Communications in Mathematical Physics},
    VOLUME = {369},
      YEAR = {2019},
    NUMBER = {3},
     PAGES = {1021--1053},

       DOI = {10.1007/s00220-019-03357-z}
}

@article {GolsePaul-wavepackets,
    AUTHOR = {Golse, Fran\c{c}ois and Paul, Thierry},
     TITLE = {Wave packets and the quadratic {M}onge-{K}antorovich distance
              in quantum mechanics},
   JOURNAL = {C. R. Math. Acad. Sci. Paris},
  FJOURNAL = {Comptes Rendus Math\'{e}matique. Acad\'{e}mie des Sciences. Paris},
    VOLUME = {356},
      YEAR = {2018},
    NUMBER = {2},
     PAGES = {177--197},
       DOI = {10.1016/j.crma.2017.12.007}
}

@article {GolsePaul-Schrodinger,
    AUTHOR = {Golse, Fran\c{c}ois and Paul, Thierry},
     TITLE = {The {S}chr\"{o}dinger equation in the mean-field and semiclassical
              regime},
   JOURNAL = {Arch. Ration. Mech. Anal.},
  FJOURNAL = {Archive for Rational Mechanics and Analysis},
    VOLUME = {223},
      YEAR = {2017},
    NUMBER = {1},
     PAGES = {57--94},
       DOI = {10.1007/s00205-016-1031-x}
}

@article {GolseMouhotPaul,
    AUTHOR = {Golse, Fran\c{c}ois and Mouhot, Cl\'{e}ment and Paul, Thierry},
     TITLE = {On the mean field and classical limits of quantum mechanics},
   JOURNAL = {Comm. Math. Phys.},
  FJOURNAL = {Communications in Mathematical Physics},
    VOLUME = {343},
      YEAR = {2016},
    NUMBER = {1},
     PAGES = {165--205},
       DOI = {10.1007/s00220-015-2485-7}
}

@article {Hairer2,
    AUTHOR = {Hairer, Martin and Mattingly, Jonathan C. and Scheutzow, Michael},
     TITLE = {Asymptotic coupling and a general form of {H}arris' theorem
              with applications to stochastic delay equations},
   JOURNAL = {Probab. Theory Related Fields},
  FJOURNAL = {Probability Theory and Related Fields},
    VOLUME = {149},
      YEAR = {2011},
    NUMBER = {1-2},
     PAGES = {223--259},
       DOI = {10.1007/s00440-009-0250-6}
}

@article {Hairer3,
    AUTHOR = {Hairer, Martin and Mattingly, Jonathan C.},
     TITLE = {Spectral gaps in {W}asserstein distances and the 2{D}
              stochastic {N}avier-{S}tokes equations},
   JOURNAL = {Ann. Probab.},
  FJOURNAL = {The Annals of Probability},
    VOLUME = {36},
      YEAR = {2008},
    NUMBER = {6},
     PAGES = {2050--2091},
      ISSN = {0091-1798,2168-894X},
   MRCLASS = {35Q35 (35R60 37L55 47D07 60H15 76D05 76M35)},
  MRNUMBER = {2478676},
MRREVIEWER = {Hakima\ Bessaih},
       DOI = {10.1214/08-AOP392},
       URL = {https://doi.org/10.1214/08-AOP392},
}

@book {Holevo,
    AUTHOR = {Holevo, Alexander S.},
     TITLE = {Quantum systems, channels, information},
    SERIES = {Texts and Monographs in Theoretical Physics},
   EDITION = {Second},
      NOTE = {A mathematical introduction},
 PUBLISHER = {De Gruyter, Berlin},
      YEAR = {[2019]},
     PAGES = {xv+350}
}

@article {imageprocessing1,
    AUTHOR = {Kolouri, Soheil and Park, Se Rim and Rohde, Gustavo K.},
     TITLE = {The radon cumulative distribution transform and its
              application to image classification},
   JOURNAL = {IEEE Trans. Image Process.},
  FJOURNAL = {IEEE Transactions on Image Processing},
    VOLUME = {25},
      YEAR = {2016},
    NUMBER = {2},
     PAGES = {920--934},
      ISSN = {1057-7149,1941-0042},
   MRCLASS = {94A08},
  MRNUMBER = {3455456},
       DOI = {10.1109/TIP.2015.2509419},
       URL = {https://doi.org/10.1109/TIP.2015.2509419},
}

@book {Villani1,
    AUTHOR = {Villani, C\'{e}dric},
     TITLE = {Topics in optimal transportation},
    SERIES = {Graduate Studies in Mathematics},
    VOLUME = {58},
 PUBLISHER = {American Mathematical Society, Providence, RI},
      YEAR = {2003},

       DOI = {10.1090/gsm/058}
}

@book {Villani2,
    AUTHOR = {Villani, C\'{e}dric},
     TITLE = {Optimal transport},
    SERIES = {Grundlehren der mathematischen Wissenschaften [Fundamental
              Principles of Mathematical Sciences]},
    VOLUME = {338},
      NOTE = {Old and new},
 PUBLISHER = {Springer-Verlag, Berlin},
      YEAR = {2009},
       DOI = {10.1007/978-3-540-71050-9}
}

@article {LottVillani,
    AUTHOR = {Lott, John and Villani, C\'{e}dric},
     TITLE = {Ricci curvature for metric-measure spaces via optimal
              transport},
   JOURNAL = {Ann. of Math. (2)},
  FJOURNAL = {Annals of Mathematics. Second Series},
    VOLUME = {169},
      YEAR = {2009},
    NUMBER = {3},
     PAGES = {903--991},
       DOI = {10.4007/annals.2009.169.903}
}

@book {OTQS-book,
     TITLE = {Optimal transport on quantum structures},
    EDITOR = {Maas, Jan and Rademacher, Simone and Titkos, Tam\'as and
              Virosztek, D\'aniel},
 PUBLISHER = {Springer, Cham},
      YEAR = {2024},
     PAGES = {ix+321},
      ISBN = {978-3-031-50465-5; 978-3-031-50466-2},
   MRCLASS = {49 (81-06 82-06)},
  MRNUMBER = {4823561},
       DOI = {10.1007/978-3-031-50466-2},
       URL = {https://doi.org/10.1007/978-3-031-50466-2},
}

@article {Sturm4,
    AUTHOR = {von Renesse, Max-K. and Sturm, Karl-Theodor},
     TITLE = {Transport inequalities, gradient estimates, entropy, and
              {R}icci curvature},
   JOURNAL = {Comm. Pure Appl. Math.},
  FJOURNAL = {Communications on Pure and Applied Mathematics},
    VOLUME = {58},
      YEAR = {2005},
    NUMBER = {7},
     PAGES = {923--940},
       DOI = {10.1002/cpa.20060}
}

@article {Sturm5,
    AUTHOR = {Sturm, Karl-Theodor},
     TITLE = {On the geometry of metric measure spaces. {II}},
   JOURNAL = {Acta Math.},
  FJOURNAL = {Acta Mathematica},
    VOLUME = {196},
      YEAR = {2006},
    NUMBER = {1},
     PAGES = {133--177},
       DOI = {10.1007/s11511-006-0003-7}
}

@article {Sturm6,
    AUTHOR = {Sturm, Karl-Theodor},
     TITLE = {On the geometry of metric measure spaces. {I}},
   JOURNAL = {Acta Math.},
  FJOURNAL = {Acta Mathematica},
    VOLUME = {196},
      YEAR = {2006},
    NUMBER = {1},
     PAGES = {65--131},
       DOI = {10.1007/s11511-006-0002-8}
}

@article{TothPitrik,
  doi = {10.22331/q-2023-10-16-1143},
  url = {https://doi.org/10.22331/q-2023-10-16-1143},
  title = {Quantum {W}asserstein distance based on an optimization over separable states},
  author = {T{\'{o}}th, G{\'{e}}za and Pitrik, J{\'{o}}zsef},
  journal = {{Quantum}},
  issn = {2521-327X},
  publisher = {{Verein zur F{\"{o}}rderung des Open Access Publizierens in den Quantenwissenschaften}},
  volume = {7},
  pages = {1143},
  month = oct,
  year = {2023}
}

@article {images1,
    AUTHOR = {Wang, Wei and Slep\v{c}ev, Dejan and Basu, Saurav and Ozolek, John
              A. and Rohde, Gustavo K.},
     TITLE = {A linear optimal transportation framework for quantifying and
              visualizing variations in sets of images},
   JOURNAL = {Int. J. Comput. Vis.},
  FJOURNAL = {International Journal of Computer Vision},
    VOLUME = {101},
      YEAR = {2013},
    NUMBER = {2},
     PAGES = {254--269},
      ISSN = {0920-5691},
   MRCLASS = {68T45 (68U10)},
  MRNUMBER = {3021062},
       DOI = {10.1007/s11263-012-0566-z},
       URL = {https://doi.org/10.1007/s11263-012-0566-z},
}

@article {ZyczkowskiSlomczynski1,
    AUTHOR = {\.{Z}yczkowski, Karol and S{\l}omczy\'{n}ski, Wojciech},
     TITLE = {The {M}onge metric on the sphere and geometry of quantum
              states},
   JOURNAL = {J. Phys. A},
  FJOURNAL = {Journal of Physics. A. Mathematical and General},
    VOLUME = {34},
      YEAR = {2001},
    NUMBER = {34},
     PAGES = {6689--6722},
       DOI = {10.1088/0305-4470/34/34/311}
}

@article {ZyczkowskiSlomczynski2,
    AUTHOR = {\.{Z}yczkowski, Karol and S{\l}omczy\'{n}ski, Wojeciech},
     TITLE = {The {M}onge distance between quantum states},
   JOURNAL = {J. Phys. A},
  FJOURNAL = {Journal of Physics. A. Mathematical and General},
    VOLUME = {31},
      YEAR = {1998},
    NUMBER = {45},
     PAGES = {9095--9104},
      ISSN = {0305-4470},
   MRCLASS = {81S99},
  MRNUMBER = {1654869},
MRREVIEWER = {Laurence Nedelec},
       DOI = {10.1088/0305-4470/31/45/009},
       URL = {https://doi.org/10.1088/0305-4470/31/45/009},
}

@article {Shlyakhtenko-free-Wass,
    AUTHOR = {Jekel, David and Li, Wuchen and Shlyakhtenko, Dimitri},
     TITLE = {Tracial smooth functions of non-commuting variables and the
              free {W}asserstein manifold},
   JOURNAL = {Dissertationes Math.},
  FJOURNAL = {Dissertationes Mathematicae},
    VOLUME = {580},
      YEAR = {2022},
     PAGES = {150},
      ISSN = {0012-3862,1730-6310},
   MRCLASS = {46L54 (35Q49 46L52 58D99 94A17)},
  MRNUMBER = {4451910},
MRREVIEWER = {Qihui\ Li},
       DOI = {10.4064/dm843-10-2021},
       URL = {https://doi.org/10.4064/dm843-10-2021},
}

@article {Shlyakhtenko-free-monotone-transport,
    AUTHOR = {Guionnet, A. and Shlyakhtenko, D.},
     TITLE = {Free monotone transport},
   JOURNAL = {Invent. Math.},
  FJOURNAL = {Inventiones Mathematicae},
    VOLUME = {197},
      YEAR = {2014},
    NUMBER = {3},
     PAGES = {613--661},
      ISSN = {0020-9910,1432-1297},
   MRCLASS = {46L54 (60B20)},
  MRNUMBER = {3251831},
MRREVIEWER = {Maria\ Grazia\ Viola},
       DOI = {10.1007/s00222-013-0493-9},
       URL = {https://doi.org/10.1007/s00222-013-0493-9},
}

@article {Shlyakhtenko-free-transport,
    AUTHOR = {Dabrowski, Yoann and Guionnet, Alice and Shlyakhtenko, Dima},
     TITLE = {Free transport for convex potentials},
   JOURNAL = {New Zealand J. Math.},
  FJOURNAL = {New Zealand Journal of Mathematics},
    VOLUME = {52},
      YEAR = {2021 [2021--2022]},
     PAGES = {259--359},
      ISSN = {1171-6096,1179-4984},
   MRCLASS = {46L54},
  MRNUMBER = {4378155},
MRREVIEWER = {Mingchu\ Gao},
       DOI = {10.53733/102},
       URL = {https://doi.org/10.53733/102},
}

@article {Wirth-dual,
    AUTHOR = {Wirth, Melchior},
     TITLE = {A dual formula for the noncommutative transport distance},
   JOURNAL = {J. Stat. Phys.},
  FJOURNAL = {Journal of Statistical Physics},
    VOLUME = {187},
      YEAR = {2022},
    NUMBER = {2},
     PAGES = {Paper No. 19, 18},
      ISSN = {0022-4715,1572-9613},
   MRCLASS = {46L57 (49Q22 81R15)},
  MRNUMBER = {4406600},
       DOI = {10.1007/s10955-022-02911-9},
       URL = {https://doi.org/10.1007/s10955-022-02911-9},
}

@article{dia1,
title = {The cumulative distribution transform and linear pattern classification},
journal = {Applied and Computational Harmonic Analysis},
volume = {45},
number = {3},
pages = {616-641},
year = {2018},
issn = {1063-5203},
doi = {https://doi.org/10.1016/j.acha.2017.02.002},
url = {https://www.sciencedirect.com/science/article/pii/S1063520317300076},
author = {Se Rim Park and Soheil Kolouri and Shinjini Kundu and Gustavo K. Rohde},
keywords = {Cumulative distribution transform, Signal classification},
abstract = {Discriminating data classes emanating from sensors is an important problem with many applications in science and technology. We describe a new transform for pattern representation that interprets patterns as probability density functions, and has special properties with regards to classification. The transform, which we denote as the Cumulative Distribution Transform (CDT), is invertible, with well defined forward and inverse operations. We show that it can be useful in ‘parsing out’ variations (confounds) that are ‘Lagrangian’ (displacement and intensity variations) by converting these to ‘Eulerian’ (intensity variations) in transform space. This conversion is the basis for our main result that describes when the CDT can allow for linear classification to be possible in transform space. We also describe several properties of the transform and show, with computational experiments that used both real and simulated data, that the CDT can help render a variety of real world problems simpler to solve.}
}

@article{bia1,
title = {A continuous linear optimal transport approach for pattern analysis in image datasets},
journal = {Pattern Recognition},
volume = {51},
pages = {453-462},
year = {2016},
issn = {0031-3203},
doi = {https://doi.org/10.1016/j.patcog.2015.09.019},
url = {https://www.sciencedirect.com/science/article/pii/S0031320315003507},
author = {Soheil Kolouri and Akif B. Tosun and John A. Ozolek and Gustavo K. Rohde},
keywords = {Optimal transport, Linear embedding, Generative image modeling, Pattern visualization},
abstract = {We present a new approach to facilitate the application of the optimal transport metric to pattern recognition on image databases. The method is based on a linearized version of the optimal transport metric, which provides a linear embedding for the images. Hence, it enables shape and appearance modeling using linear geometric analysis techniques in the embedded space. In contrast to previous work, we use {M}onge׳s formulation of the optimal transport problem, which allows for reasonably fast computation of the linearized optimal transport embedding for large images. We demonstrate the application of the method to recover and visualize meaningful variations in a supervised-learning setting on several image datasets, including chromatin distribution in the nuclei of cells, galaxy morphologies, facial expressions, and bird species identification. We show that the new approach allows for high-resolution construction of modes of variations and discrimination and can enhance classification accuracy in a variety of image discrimination problems.}
}

@article{bia2,
author = {Saurav Basu  and Soheil Kolouri  and Gustavo K. Rohde },
title = {Detecting and visualizing cell phenotype differences from microscopy images using transport-based morphometry},
journal = {Proceedings of the National Academy of Sciences},
volume = {111},
number = {9},
pages = {3448-3453},
year = {2014},
doi = {10.1073/pnas.1319779111},
URL = {https://www.pnas.org/doi/abs/10.1073/pnas.1319779111},
eprint = {https://www.pnas.org/doi/pdf/10.1073/pnas.1319779111},
abstract = {Much of what is currently known about how cells work has been derived through visual interpretation of microscopy images. Computational methods for image analysis have emerged as quantitative alternatives to visual interpretation. We describe an analysis pipeline for cell image databases that combines statistical pattern recognition with the mathematics of optimal mass transport. The approach is fully automated and does not require the use of ad hoc numerical features. It enables the identification of discriminant phenotypic variations, or biomarkers, between sets of cells (e.g., normal vs. diseased) while at the same time allowing for the visualization of meaningful differences. The approach can be used for fully automated high content screening with a variety of microscopic image modalities. Modern microscopic imaging devices are able to extract more information regarding the subcellular organization of different molecules and proteins than can be obtained by visual inspection. Predetermined numerical features (descriptors) often used to quantify cells extracted from these images have long been shown useful for discriminating cell populations (e.g., normal vs. diseased). Direct visual or biological interpretation of results obtained, however, is often not a trivial task. We describe an approach for detecting and visualizing phenotypic differences between classes of cells based on the theory of optimal mass transport. The method is completely automated, does not require the use of predefined numerical features, and at the same time allows for easily interpretable visualizations of the most significant differences. Using this method, we demonstrate that the distribution pattern of peripheral chromatin in the nuclei of cells extracted from liver and thyroid specimens is associated with malignancy. We also show the method can correctly recover biologically interpretable and statistically significant differences in translocation imaging assays in a completely automated fashion.}}

@BOOK{Galichon,
title = {Optimal Transport Methods in Economics},
author = {Galichon, Alfred},
year = {2016},
edition = {1},
publisher = {Princeton University Press},
abstract = {Optimal Transport Methods in Economics is the first textbook on the subject written especially for students and researchers in economics. Optimal transport theory is used widely to solve problems in mathematics and some areas of the sciences, but it can also be used to understand a range of problems in applied economics, such as the matching between job seekers and jobs, the determinants of real estate prices, and the formation of matrimonial unions. This is the first text to develop clear applications of optimal transport to economic modeling, statistics, and econometrics. It covers the basic results of the theory as well as their relations to linear programming, network flow problems, convex analysis, and computational geometry. Emphasizing computational methods, it also includes programming examples that provide details on implementation. Applications include discrete choice models, models of differential demand, and quantile-based statistical estimation methods, as well as asset pricing models. Authoritative and accessible, Optimal Transport Methods in Economics also features numerous exercises throughout that help you develop your mathematical agility, deepen your computational skills, and strengthen your economic intuition.},
keywords = {economics; transport; mathmatics; real; estate; job; employment; econometrics; geometry; computation; forcasting; programming; convex; analysis; discrete; choice; differential; demand; quantile; statistical; estimation; asset; pricing; algorithms},
url = {https://EconPapers.repec.org/RePEc:pup:pbooks:10870}
}

@article{SCHIEBINGER,
title = {Optimal-Transport Analysis of Single-Cell Gene Expression Identifies Developmental Trajectories in Reprogramming},
journal = {Cell},
volume = {176},
number = {4},
pages = {928-943.e22},
year = {2019},
issn = {0092-8674},
doi = {https://doi.org/10.1016/j.cell.2019.01.006},
url = {https://www.sciencedirect.com/science/article/pii/S009286741930039X},
author = {Geoffrey Schiebinger and Jian Shu and Marcin Tabaka and Brian Cleary and Vidya Subramanian and Aryeh Solomon and Joshua Gould and Siyan Liu and Stacie Lin and Peter Berube and Lia Lee and Jenny Chen and Justin Brumbaugh and Philippe Rigollet and Konrad Hochedlinger and Rudolf Jaenisch and Aviv Regev and Eric S. Lander},
keywords = {optimal-transport, reprogramming, scRNA-seq, trajectories, ancestors, descendants, development, regulation, paracrine interactions, iPSCs},
abstract = {Summary
Understanding the molecular programs that guide differentiation during development is a major challenge. Here, we introduce Waddington-OT, an approach for studying developmental time courses to infer ancestor-descendant fates and model the regulatory programs that underlie them. We apply the method to reconstruct the landscape of reprogramming from 315,000 single-cell RNA sequencing (scRNA-seq) profiles, collected at half-day intervals across 18 days. The results reveal a wider range of developmental programs than previously characterized. Cells gradually adopt either a terminal stromal state or a mesenchymal-to-epithelial transition state. The latter gives rise to populations related to pluripotent, extra-embryonic, and neural cells, with each harboring multiple finer subpopulations. The analysis predicts transcription factors and paracrine signals that affect fates and experiments validate that the TF Obox6 and the cytokine GDF9 enhance reprogramming efficiency. Our approach sheds light on the process and outcome of reprogramming and provides a framework applicable to diverse temporal processes in biology.}
}

@misc{MC1,
title={Learning Elastic Costs to Shape {M}onge Displacements}, 
author={Michal Klein and Aram-Alexandre Pooladian and Pierre Ablin and Eugène Ndiaye and Jonathan Niles-Weed and Marco Cuturi},
year={2024},
eprint={2306.11895},
archivePrefix={arXiv},
primaryClass={stat.ML},
url={https://arxiv.org/abs/2306.11895}, 
}

@article{SK1,
title = {{W}asserstein task embedding for measuring task similarities},
journal = {Neural Networks},
volume = {181},
pages = {106796},
year = {2025},
issn = {0893-6080},
doi = {https://doi.org/10.1016/j.neunet.2024.106796},
url = {https://www.sciencedirect.com/science/article/pii/S0893608024007202},
author = {Xinran Liu and Yikun Bai and Yuzhe Lu and Andrea Soltoggio and Soheil Kolouri},
keywords = {Task embedding, Dataset similarity, Optimal transport, Continual learning, Transfer learning},
abstract = {Measuring similarities between different tasks is critical in a broad spectrum of machine learning problems, including transfer, multi-task, continual, and meta-learning. Most current approaches to measuring task similarities are architecture-dependent: (1) relying on pre-trained models, or (2) training networks on tasks and using forward transfer as a proxy for task similarity. In this paper, we leverage the optimal transport theory and define a novel task embedding for supervised classification that is model-agnostic, training-free, and capable of handling (partially) disjoint label sets. In short, given a dataset with ground-truth labels, we perform a label embedding through multi-dimensional scaling and concatenate dataset samples with their corresponding label embeddings. Then, we define the distance between two datasets as the 2-Wasserstein distance between their updated samples. Lastly, we leverage the 2-Wasserstein embedding framework to embed tasks into a vector space in which the Euclidean distance between the embedded points approximates the proposed 2-Wasserstein distance between tasks. We show that the proposed embedding leads to a significantly faster comparison of tasks compared to related approaches like the Optimal Transport Dataset Distance (OTDD). Furthermore, we demonstrate the effectiveness of our embedding through various numerical experiments and show statistically significant correlations between our proposed distance and the forward and backward transfer among tasks on a wide variety of image recognition datasets.}
}
\end{small}

\end{document}